\newtheorem{theorem}{Theorem}
\newtheorem{lemma}{Lemma}
\newtheorem{axiom}{Axiom}[section]
\newtheorem*{axiom*}{Axiom}
\newtheorem{proposition}{Proposition}
\newtheorem{claim}{Claim}
\newtheorem*{theorem*}{Theorem}
\newtheorem*{definition*}{Definition}
\newcommand{\R}{\mathbb{R}}
\newcommand{\Q}{\mathbb{Q}}
\newcommand{\eps}{\varepsilon}
\newcommand{\Rp}{\mathbb{R}_{+}}
\DeclareDocumentCommand\Pr{ m g }{\ensuremath{
    {   \IfNoValueTF {#2}
      {\mathbb{P}\left[{#1}\right]}
      {\mathbb{P}\left[{#1}\middle\vert{#2}\right]}%
    }
}}
\DeclareDocumentCommand\E{ m g }{\ensuremath{
    {   \IfNoValueTF {#2}
      {\mathbb{E}\left[{#1}\right]}
      {\mathbb{E}\left[{#1}\middle\vert{#2}\right]}%
    }
}}
\def\dd{\mathrm{d}}
\def\ee{\mathrm{e}}
\begin{document}

\title{\bf \Large Monotone Additive Statistics\footnote{We would like to thank Kim Border, Sebastian Ebert, Giacomo Lanzani, George Mailath, Meg Meyer, Pietro Ortoleva, Doron Ravid and Weijie Zhong for helpful comments and suggestions.}}

\author{ \large Xiaosheng Mu\thanks{Princeton University. Email: xmu@princeton.edu.} \ \ \ Luciano Pomatto\thanks{Caltech. Email: 	luciano@caltech.edu.} \ \ \ %
Philipp Strack\thanks{Yale University. Email:  philipp.strack@yale.edu. Philipp Strack was supported by a Sloan fellowship.} \ \ \ %
Omer Tamuz\thanks{Caltech. Email: tamuz@caltech.edu. Omer Tamuz was supported by a grant from the Simons Foundation (\#419427), a Sloan fellowship, a BSF award (\#2018397) and a National Science Foundation CAREER award (DMS-1944153).}}

\date{\today}

\maketitle

\begin{abstract}
The expectation is an example of a descriptive statistic that is monotone with respect to stochastic dominance, and additive for sums of independent random variables. We provide a complete characterization of such statistics, and explore a number of applications to models of individual and group decision-making. These include a representation of stationary monotone time preferences, extending the work of \cite{fishburn1982time} to time lotteries. This extension offers a new perspective on  risk attitudes toward time, as well as on the aggregation of multiple discount factors. We also offer a novel class of nonexpected utility preferences over gambles which satisfy invariance to background risk as well as betweenness, but are versatile enough to capture mixed risk attitudes.


\end{abstract}


\section{Introduction}

How should a random quantity be summarized by a single number? In Bayesian statistics, point estimators capture an entire posterior distribution. In finance, risk measures quantify the risk of a financial position. And in economics, certainty equivalents characterize an agent's preference for uncertain outcomes.

We use the term {\em descriptive statistic}, or simply \textit{statistic}, to refer to a map that assigns a number to each bounded random variable.
We study statistics that are monotone with respect to first-order stochastic dominance, and additive for sums of independent random variables. An example of a monotone additive statistic is the expectation. The median is monotone but not additive, while the variance is additive, but not monotone.

Monotonicity is a well studied property of statistics \cite[see, e.g.,][]{bickel2012descriptiveI,bickel2012descriptiveII}, and holds, for example, for certainty equivalents of monotone preferences over lotteries. Additivity is a stronger assumption. We focus on this property because of its conceptual simplicity and because it serves as a baseline assumption in many settings. As we argue, additivity corresponds to stationarity in the context of preferences over time lotteries (see \S\ref{sec:time}). In the context of choices over monetary gambles it corresponds to invariance to background risk (\S\ref{sec:background-risk-invariance}), or to a form of separability across independent decision problems (\S\ref{sec:combined-choices}).

Beyond the expectation, an additional example of a monotone additive statistic is the map $K_a$ that, given $a \in \R$, assigns to each random variable $X$ the value 
\begin{equation}
  K_a(X) = \frac{1}{a}\log \E{\ee^{aX}}.
\end{equation}
In the fields of probability and statistics, this function is known as the \textit{(normalized) cumulant generating function} evaluated at $a$. In finance it is called the \textit{entropic risk measure}. In economics, it corresponds to the certainty equivalent of an expected utility maximizer who exhibits constant absolute risk aversion (CARA) over gambles. For bounded random variables, the essential minimum and maximum provide further examples of such statistics; as we explain later, they are the limits of $K_a$ as $a$ approaches $\pm \infty$. The expectation is equal to $K_0$, the limit of $K_a$ as $a$ approaches $0$.

Our main result  establishes that these examples, and their weighted averages, are the only monotone additive statistics. That is, we show that if a statistic $\Phi$ is monotone, additive and normalized so that it satisfies $\Phi(c)=c$ for every constant $c$, then it is of the form
\begin{align*}
    \Phi(X) = \int K_a(X)\,\dd\mu(a)
\end{align*}
for some probability measure $\mu$ over the extended real line. This result provides a simple representation of a natural family of statistics, which one may a priori have expected to be much richer. 

\medskip

Our first application is to time preferences. The starting point for our analysis is the work by \cite{fishburn1982time}, who study preferences over dated rewards\textemdash a monetary reward, together with the time at which it will be received. 
They show that exponential discounting of time arises from a set of axioms, of which the most substantial, \emph{stationarity}, postulates that preferences between two dated rewards are unaffected by the addition of a common delay. 

We extend the analysis of \cite{fishburn1982time} to {\em time lotteries}, which consist of a monetary reward $x$ and a random time $T$ at which it will be received. In this setting, we too introduce a stationarity axiom that requires preferences to be invariant with respect to random independent delays. As we argue in the main text, this stationarity axiom captures an assumption of dynamic consistency, together with the idea that preferences do not depend on calendar time.

We show that a monotone and stationary preference over time lotteries admits a representation of the form $$u(x) \ee^{-r \Phi(T)},$$ where $\Phi$ is a monotone additive statistic (Theorem~\ref{thm:time-lotteries}). Thus, $\Phi(T)$ is the certainty equivalent of the random time $T$, i.e.\ the deterministic time that is as desirable as $T$. Over deterministic dated rewards, the above representation coincides with standard  discounted utility. General time lotteries are reduced to deterministic ones through the certainty equivalent $\Phi$. By our main representation theorem, it takes the form  $\Phi(T) = \int K_a(T)\,\dd\mu(a)$. In this context, each $K_a(T)$ is the certainty equivalent of $T$ under an expected discounted preference with discount rate $-a$. The different certainty equivalents are then averaged according to the measure $\mu$.

 In this representation it is as if the decision maker had in mind not one but multiple discount factors. Thus, $\Phi$ can be interpreted as the certainty equivalent of a decision maker who is uncertain about the correct discount factor, or as the aggregated certainty equivalent of a group of agents with different discount factors.

 Our representation theorem for monotone and stationary time preferences has implications for understanding the relation between stationarity and risk attitudes toward time. How people choose among prospects that involve risk over time has been studied both theoretically and experimentally \citep{chesson2003commonalities,onay2007intertemporal,ebert2020decision, dejarnette2018time,ebert2018prudent}. A basic paradox these papers highlight is that many subjects display risk aversion over the time dimension, even though the standard theory of expected discounted utility predicts that people are risk-seeking with respect to time lotteries.
 
 This raises the question of whether there is a tractable and well-motivated class of preferences that allows risk-aversion, risk-seeking, or a combination of the two, as well as stationarity. As pointed out by \cite*{dejarnette2018time}, this is not entirely obvious. The preferences we study in this paper offer a solution to this problem, as they satisfy a strong notion of stationarity while allowing for a variety of risk attitudes.
 
 We further apply the characterization of monotone stationary preferences to the problem of aggregating heterogeneous time preferences. It is well known that when aggregating expected discounted utility preferences, a utilitarian approach that averages individual utility functions leads to present bias \cite[see][]{jackson2014present,jackson2015collective}. Based on this observation, the literature concludes that within expected discounted utility, it is impossible to aggregate individual preferences into a social preference unless the latter is dictatorial.

 We show that this difficulty is not due to stationarity, but rather to an insistence on the idea that the social preference should conform to expected discounted utility. When preferences are allowed to belong to the more general class of monotone stationary preferences, then a social preference obtained by averaging the certainty equivalents of the individuals satisfies Pareto efficiency and stationarity (Proposition~\ref{prop:pareto-possibility}). Moreover, under some assumptions, every Paretian and stationary social preference is obtained in this way (Proposition~\ref{prop:pareto-characterization}). 

\medskip

 Monotone additive statistics also find applications to models of choice over monetary gambles. It is well known that an expected utility agent whose preferences are invariant to independent background risk must have CARA preferences. This invariance property makes CARA utility functions useful modeling tools when the analyst does not observe the agents' wealth level or the additional risks they face \citep*[see, e.g.,][]{barseghyan2018estimating}. Beyond expected utility, monotone preferences that are invariant to background risk have certainty equivalents that are monotone additive statistics. Thus, by our main theorem, any such preference can be represented by a weighted average of CARA certainty equivalents, where the mixing measure $\mu$ is now over the coefficient of absolute risk aversion. In this representation, the decision maker entertains multiple utility functions, each defining a different certainty equivalent. Every lottery is evaluated by averaging over these certainty equivalents.
 
 An interesting feature of preferences represented by monotone additive statistics is that they can display behavior that is not uniformly risk-averse nor risk-seeking, such as that of an agent buying both lottery tickets and insurance \citep{friedman1948utility}, all while maintaining invariance to background risk. At the same time, a potential difficulty for this class of preferences is that their defining parameter, the measure $\mu$ over the coefficient of risk aversion, is infinite-dimensional. To narrow down the parameter space, we focus on those preferences that also satisfy \emph{betweenness}, a well-known weakening of the independence axiom that has been extensively studied in the literature \citep*[see][]{dekel1986betweenness, gul1991theory, cerreia2020explicit}. We show that a preference represented by a monotone additive statistic $\Phi$ satisfies betweenness if and only if it is of the form
 \begin{align*}
     \Phi(X) = \beta K_{-a\beta}(X) + (1-\beta)K_{a(1-\beta)}(X).
 \end{align*}
 The parameter $\beta \in [0,1]$ controls the the relative weights of the risk-averse and risk-seeking components, with increased $\beta$ making the decision maker more risk-averse. The parameter $a$ is a scale parameter. This is a simple two-parameter family, but it is rich enough to accommodate preferences that are neither risk-averse nor risk-seeking, while maintaining invariance to background risk.

\medskip

 Our final application concerns group decision-making under risk. We consider an organization that employs multiple agents, each of whom makes decisions following an individual preference relation, which can be seen as a decision rule prescribed by the firm. We show that in order for the agents' independent choices to not violate stochastic dominance when combined, it is sufficient and necessary that their preferences are represented by the same monotone additive statistic. Thus, these are the only preferences with the property that decentralized decisions cannot result in stochastically dominated outcomes for the organization.

\subsection{Related Literature}

 A large literature in statistics studies descriptive statistics of probability distributions. A representative example is the work of \cite{bickel2012descriptiveI,bickel2012descriptiveII}, who study location statistics using an axiomatic, non-parametric approach that is similar to ours. This literature has however focused on different properties, and, to the best of our knowledge, does not contain a similar characterization of additivity and monotonicity. The mathematics literature has studied additive statistics as homomorphisms from the convolution semigroup to the real numbers \citep[see][]{ruzsa1988algebraic, mattner1999cumulants, mattner2004cumulants}, but without imposing monotonicity.

 In finance and actuarial sciences, $-K_{a}(X)$ is also known as an {\em entropic risk measure}, and is used to assess the riskiness of a financial position $X$. It is a canonical example of a coherent risk measure \citep[see][]{follmer2002convex,follmer2011stochastic,follmer2011entropic}. In this literature, \cite*{goovaerts2004comonotonic} study additive statistics that are monotone with respect to all entropic risk measures, i.e.\ those with the property that $K_a(X) \geq K_a(Y)$ for all $a \in \R$ implies $\Phi(X) \geq \Phi(Y)$, and show that they must be weighted averages of entropic risk measures, as in our main representation theorem. In contrast, we do not assume this property of $\Phi$, but instead  show that it is implied by monotonicity and additivity of $\Phi$.
 
 In an earlier paper, \cite*{pomatto2018stochastic} show that on the domain of random variables that have all moments, the only monotone additive statistic is the expectation.\footnote{The same phenomenon is studied more in depth in \cite*{fritz2020there}. It is shown there that the expectation remains the unique monotone additive statistic on the domain $L^p$, for any $p \geq 1$, while there are no monotone additive statistics on $L^p$ with $p < 1$, or on the domain of all random variables.
} This is because the larger domain includes fat-tailed random variables, which rule out all other monotone additive statistics.  In contrast, in this paper we primarily study the domain of bounded random variables, which allows for much richer preferences with a variety of risk attitudes.

 Monotone additive statistics also relate to what we called {\em additive divergences} in \cite*{mu2021blackwell}. An additive divergence is a map defined over Blackwell experiments that satisfies monotonicity with respect to the Blackwell order and additivity for product experiments. While some of the techniques used in the two papers are similar, the main mathematical argument is fundamentally different. The last step of the proof of  Theorem~\ref{thm:main} is an application of the Riesz Representation Theorem and is similar to the argument used in the previous paper. However, because the Blackwell order has different properties from first-order stochastic dominance, the remainder of the proof is different, with the previous paper having no analogue of Theorem~\ref{thm:marginal}, which is the main technical step in the proof of Theorem~\ref{thm:main}. This new technique is also needed for the proof of Theorem~\ref{thm:domains}, which characterizes monotone additive statistics beyond bounded random variables. 

 \cite*{dejarnette2018time} study preferences over time lotteries that display risk aversion. One class of preferences they propose is a generalization of expected discounted utility (GEDU) that for a random prize $X$ delivered at a random time $T$ takes the form $\mathbb{E}\left[\phi(u(X)\ee^{-rT})\right]$ for some strictly increasing transformation $\phi$. The curvature of $\phi$ determines the attitude towards risk. While GEDU satisfies stationarity for deterministic $X$ and $T$, stationarity  does not in general hold once random times $T$ are considered, even with respect to adding a deterministic delay. In contrast, we impose a strong form of stationarity but do not insist on expected utility. The only intersection between our model and GEDU are preferences represented by $K_a$, corresponding to a point-mass mixing measure $\mu$. These preferences have the standard EDU representation, but perhaps with a negative discount rate, as we explain in \S\ref{sec:risk-attitudes-time}.\footnote{When the prize $X$ is held constant, a GEDU preference reduces to an expected utility preference over random times. In contrast, our prizes are always deterministic, and the stationary preferences over random times are represented by monotone additive statistics, which are not expected utility unless the mixing measure $\mu$ is a point mass (see Proposition~\ref{prop:independence} in \S\ref{appx:independence} of the online appendix).}

 Applied to choice under risk, our representation also bears resemblance to cautious expected utility theory \citep*{ortoleva2015cautious}, in which a gamble is evaluated according to the minimum certainty equivalent across a family of utility functions. The two representations are conceptually related, as both involve uncertainty over a utility function. Our axioms are however different in that we study preferences that are invariant to adding an independent gamble, while \cite*{ortoleva2015cautious} consider the effect of mixing with another gamble.

 Decision criteria that aggregate multiple certainty equivalents have appeared before in the literature. \cite{myerson2019probability} advocate the maximization of a sum of certainty equivalents as an effective rule for risk sharing. \cite{chambers2012does} formalize and characterize this rule as a social welfare functional.

\medskip

The remainder of the paper is organized as follows. In \S\ref{sec:monotone-additive} we introduce monotone additive statistics and state our main result. In \S\ref{sec:time} we apply this result to time lotteries, and in \S\ref{sec:monetary} we apply it to monetary gambles. Finally, \S\ref{sec:proof-sketch} provides an overview of the proof of our main result. The appendix and online appendix contain omitted proofs for the results in the main text.

\section{Monotone Additive Statistics}
\label{sec:monotone-additive}

 We denote by $L^\infty$ the collection of bounded real random variables, defined over a nonatomic probability space $(\Omega,\mathcal{F},\mathbb{P})$. We will identify each $c \in \R$ with the corresponding constant random variable taking value $X(\omega)=c$ at each $\omega \in \Omega$.

 We say that a map $\Phi \colon L^\infty \to \R$ is a {\em statistic} if it satisfies (i) $\Phi(X)=\Phi(Y)$ whenever $X,Y \in L^\infty$ have the same distribution, and (ii) $\Phi(c) = c$ for every $c \in \R$; that is, $\Phi$ assigns $c$ to the constant random variable $c$. 
 We are interested in statistics that satisfy monotonicity with respect to first-order stochastic dominance and additivity for sums of independent random variables. Formally, $\Phi$ is
 \begin{itemize}
     \item {\em additive} if $\Phi(X+Y) = \Phi(X) + \Phi(Y)$ whenever $X$ and $Y$ are independent, and
     \item {\em monotone} if $X \geq_1 Y$ implies $\Phi(X) \geq \Phi(Y)$, where $\geq_1$ denotes first-order stochastic dominance.
 \end{itemize}
 Since, by assumption, the value $\Phi(X)$ depends only the distribution of the random variable $X$, monotonicity is equivalent to the requirement that $\Phi(X) \geq \Phi(Y)$ whenever $X \geq Y$ almost surely. This equivalence is based on the well-known fact that $X \geq_1 Y$ if and only if there are random variables $\tilde X,\tilde Y$ such that $X$ and $\tilde X$ are identically distributed, $Y$ and $\tilde Y$ are identically distributed, and $\tilde{X} \geq \tilde{Y}$ almost surely.\footnote{An alternative, equivalent definition for a statistic is to let the domain of $\Phi$ be the set of probability distributions on $\mathbb{R}$ with bounded support. In this domain, additivity would be defined with respect to convolution. We choose to have the domain consist of random variables, as this approach offers some notational advantages.}

 We denote by $\overline{\R} = \R\cup \{-\infty,\infty\}$ the extended real numbers. Given $X \in L^\infty$ and $a \in \overline{\R} \setminus\{0,\pm\infty\}$, we consider the statistic
\begin{align}
    \label{eq:K_a}
    K_{a}(X) = \frac{1}{a}\log\E{\ee^{a X}}.
\end{align}
 The value $K_a(X)$ is the certainty equivalent of $X$ for a CARA utility function with coefficient of risk aversion $-a$. In probability and statistics, $K_a(X)$ is known as the \textit{(normalized) cumulant generating function} of $X$, evaluated at $a$.
 
 If $X$ and $Y$ have the same distribution, then $K_a(X) = K_a(Y)$. Moreover, $K_a(c) = c$ for every $c \in \R$, so $K_a$ is a statistic. If $X$ and $Y$ are independent, then $\E{\ee^{a(X+Y)}} = \E{\ee^{aX}} \E{\ee^{aY}}$, and hence $K_a$ is additive. It is also monotone.
 
 We additionally define $K_0(X),K_{\infty}(X),K_{-\infty}(X)$ to be the expectation, the essential maximum, and the essential minimum of $X$, respectively.\footnote{\label{ref:ess-max}The essential maximum and minimum are the maximum and minimum of the support: $\max[X] = \sup \{a\,:\, \Pr{X\leq a} < 1\}$ and $\min[X] = \inf \{a\,:\, \Pr{X\leq a} > 0\}$.} This choice of notation makes $a \mapsto K_a(X)$ a continuous function from $\overline{\R}$ to $\R$, for any $X$. Our main result is a representation theorem for monotone additive statistics:

\begin{theorem}
\label{thm:main}
$\Phi \colon L^\infty \to \R$ is a monotone additive statistic if and only if there exists a (unique) Borel probability measure $\mu$ on $\overline{\R}$ such that for every $X \in L^\infty$
\begin{align}
\label{eq:rep}
    \Phi(X) = \int_{\overline{\R}}K_a(X)\,\dd\mu(a).
\end{align}
\end{theorem}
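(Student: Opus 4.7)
The sufficiency direction is routine: each $K_a$ is a monotone additive statistic (additivity because independence factorizes the moment generating function), and both properties pass through the integral in \eqref{eq:rep}. For uniqueness, I would argue that as $X$ ranges over $L^\infty$ the continuous functions $a \mapsto K_a(X)$ form a rich enough subset of $C(\overline{\R})$ to determine $\mu$. For instance, taking $X$ to be a two-point random variable equal to $1$ with probability $p$ and $0$ otherwise gives $K_a(X) = \tfrac{1}{a}\log(1 - p + p\ee^a)$, and varying $p$ produces a family that, together with the constants $K_a(c) = c$, separates points of $\overline{\R}$; uniqueness of $\mu$ then follows from a Stone--Weierstrass argument applied to the algebra these functions generate.

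For the necessity direction the plan is the two-step reduction flagged in the related literature. Step one proves that any monotone additive statistic $\Phi$ is automatically monotone with respect to the coarser order induced by the full family $\{K_a\}_{a\in\overline{\R}}$: if $K_a(X) \geq K_a(Y)$ for every $a \in \overline{\R}$, then $\Phi(X) \geq \Phi(Y)$. Step two then invokes the theorem of \cite*{goovaerts2004comonotonic}, who show that any additive statistic monotone with respect to this order is of the form \eqref{eq:rep}.

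The main obstacle is step one, since agreement of all moment generating functions is genuinely weaker than first-order stochastic dominance. The natural lever is additivity applied to iid sums: because $K_a(\sum_{i=1}^n X_i) = n K_a(X)$ and $\Phi(\sum_{i=1}^n X_i) = n \Phi(X)$, any $K_a$-gap amplifies linearly in $n$. My strategy would be to exploit this amplification together with an independent bounded smoothing variable $Z$ — for example, a uniform or truncated Gaussian — to upgrade $K_a$-dominance into a genuine first-order stochastic dominance of the form $\sum_{i=1}^n X_i + Z \geq_1 \sum_{i=1}^n Y_i + Z'$ for an iid copy $Z'$, once $n$ is large enough. Since $Z$ and $Z'$ are identically distributed one has $\Phi(Z) = \Phi(Z')$, so monotonicity and additivity of $\Phi$ collapse this to $n\Phi(X) \geq n\Phi(Y)$, and hence $\Phi(X) \geq \Phi(Y)$. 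The technical heart — choosing the right smoothing perturbation, making the FOSD upgrade quantitative (likely via Chernoff-type or large-deviations bounds on the tails of the sums), and handling the boundary atoms $a = \pm\infty$ of $\mu$ that encode the essential sup and inf — is where I would expect the real work.
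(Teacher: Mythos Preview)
Your two-step plan for necessity is exactly the paper's: first show that any monotone additive $\Phi$ respects the order induced by the full family $\{K_a\}$, then deduce the integral representation. Two points of comparison are worth noting.

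For the key lemma (Step~1), the paper's main argument is simpler than what you outline: it shows that whenever $K_a(X) > K_a(Y)$ for all $a \in \overline{\R}$, a \emph{single} convolution with a truncated Gaussian $Z$ already yields $X + Z \geq_1 Y + Z$ (Theorem~\ref{thm:marginal}, the ``catalytic order''). No iid amplification is needed; the Gaussian smoothing alone irons out the crossings of the c.d.f.s. The paper does mention your iid-sums route as an alternative, attributing the relevant large-deviations result to \cite{aubrun2008catalytic}, but prefers the direct construction as more elementary and more reusable (it is recycled, for instance, in the proof of Theorem~\ref{thm:risk-invariance}). For Step~2, the paper carries out its own Riesz-representation argument via a Kantorovich extension rather than invoking \cite*{goovaerts2004comonotonic}, but this is a matter of self-containment rather than substance.

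Your uniqueness argument, however, has a genuine gap. Stone--Weierstrass would tell you that the \emph{algebra} generated by $\{K_{(\cdot)}(X)\}$ is dense in $C(\overline{\R})$, but agreement of two measures $\mu,\nu$ on a set of functions extends only to the closed \emph{linear span}, not to the algebra: from $\int K_X\,\dd\mu = \int K_X\,\dd\nu$ you cannot conclude $\int K_X K_Y\,\dd\mu = \int K_X K_Y\,\dd\nu$. What you actually need is that the linear span of $\{K_{(\cdot)}(X)\}$ is weak-$*$ total, and separating points does not give you that. The paper sidesteps this by a constructive argument (Lemma~\ref{lemma:unique}): evaluating $\Phi$ on Bernoulli variables $X_\eps$ with $\Pr{X_\eps=1}=\eps$ and letting $\eps\to 0$ recovers $\mu(\{\infty\})$; a scaled family $X_{n,b}$ with $\Pr{X_{n,b}=n}=\ee^{-bn}$ then recovers $\int_{[b,\infty]}\frac{a-b}{a}\,\dd\mu(a)$ for each $b>0$, from which $\mu$ on $(0,\infty)$ is read off by differentiation.
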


We refer to $\mu$ as the \textit{mixing measure} of the statistic $\Phi$. Each $K_a$ satisfies monotonicity and additivity, and it is immediate that these two properties are preserved under convex combinations. Theorem~\ref{thm:main} says that the one-parameter family $\{K_a\}$ forms the extreme points of the set of monotone additive statistics; every such statistic must be a weighted average obtained by mixing over this family. In \S\ref{sec:proof-sketch} we provide an overview of the proof of Theorem~\ref{thm:main}.

 Theorem~\ref{thm:main} can be extended to other domains of random variables. We consider the set $L_M$ of random variables $X$ for which $K_a(X)$, as defined in \eqref{eq:K_a}, is finite for all $a \in \R$. The domain $L_M$ contains those unbounded random variables whose distributions have sub-exponential tails, as in the case of the Gaussian distribution.

\begin{theorem}
\label{thm:domains}
$\Phi \colon L_M \to \R$ is a monotone additive statistic if and only if it admits a (unique) representation of the form \eqref{eq:rep} where the measure $\mu$ has compact support in $\R$.
\end{theorem}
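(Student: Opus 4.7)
Plan. I will handle the three domains separately, in each case reducing to Theorem~\ref{thm:main} via an extension or approximation argument; the case $L^\infty_+$ will be routine, $L^\infty_\N$ will be the main obstacle, and $L_M$ will require tail control to obtain compact support of $\mu$ together with a continuity argument to extend the representation.

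For $L^\infty_+$: given a monotone additive $\Phi$, extend it to all of $L^\infty$ by choosing, for each $X \in L^\infty$, a constant $c \geq 0$ with $X + c \geq 0$ almost surely and setting $\tilde\Phi(X) = \Phi(X + c) - c$. This is well-defined because for $c' > c$ the decomposition $X + c' = (X + c) + (c' - c)$ has a constant second summand that is independent of everything, so additivity together with $\Phi$ fixing constants gives $\Phi(X + c') = \Phi(X + c) + (c' - c)$. Monotonicity and additivity of $\tilde\Phi$ are inherited, so Theorem~\ref{thm:main} applied to $\tilde\Phi$ yields a unique $\mu$ on $\overline{\R}$ representing $\Phi$ on $L^\infty_+$.

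For $L^\infty_\N$: the integer-shift variant of the above extends $\Phi$ to bounded $\Z$-valued variables, but extending further to general $L^\infty$ is obstructed by the fact that $K_a(nX) = n\,K_{na}(X) \neq n\,K_a(X)$, so naive discretizations such as $\lim_n \Phi(\lfloor nX \rfloor)/n$ collapse to $K_{+\infty}$ and lose information. The cleanest alternative is to rerun the argument behind Theorem~\ref{thm:main} directly on $L^\infty_\N$: the point-separating family $\{a \mapsto K_a(X) : X \in L^\infty_\N\} \subset C(\overline{\R})$ remains rich enough to sustain the underlying Choquet/Krein--Milman-style extreme-point analysis, and these extreme points should again be precisely the $K_a$'s. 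I expect this to be the most delicate step, since it requires verifying that the discreteness of the domain introduces no spurious extremal statistics.

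For $L_M$: restrict $\Phi$ to $L^\infty \subset L_M$ and apply Theorem~\ref{thm:main} to obtain $\mu$ on $\overline{\R}$. I would first rule out atoms of $\mu$ at $\pm\infty$ by choosing any $X \in L_M$ unbounded above and below (e.g., a Gaussian), for which $K_{\pm\infty}(X) = \pm\infty$ would otherwise force $\Phi(X) = \pm\infty$. To prove compact support of $\mu$ in $\R$, I would exhibit $X \in L_M$ with $K_a(X)$ growing arbitrarily fast in $|a|$---for instance, distributions with super-Gaussian tails but still finite exponential moments of every order, iterated as needed---so that unbounded support of $\mu$ would force $\int K_a(X)\,\dd\mu(a) = \infty$. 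Finally, I would extend the representation from $L^\infty$ to $L_M$ via the truncations $X_n = (X \wedge n) \vee (-n)$: the right-hand side converges to $\int K_a(X)\,\dd\mu(a)$ by dominated convergence (using compact support of $\mu$), while $\Phi(X_n) \to \Phi(X)$ follows from a continuity argument for $\Phi$ on $L_M$ derived from monotonicity and additivity.
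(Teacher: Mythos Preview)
Your treatment of $L^\infty_+$ is correct and matches the paper: shift by a constant and invoke Theorem~\ref{thm:main}.

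For $L^\infty_\N$, your diagnosis of why $\Phi(\lfloor nX\rfloor)/n$ fails is right, but ``rerun the extreme-point analysis directly on $L^\infty_\N$'' is not a proof; in particular, the catalytic-order step (Theorem~\ref{thm:marginal}) produces a truncated Gaussian $Z\notin L^\infty_\N$, so you would still need to discretize that construction or switch to the large-numbers order, neither of which you carry out. The paper's fix is the idea you almost found but discarded: replace scalar multiplication by \emph{independent sums}. Since $K_a(X^{*n})=nK_a(X)$ with no rescaling of $a$, the correct extension to $L^\infty_+$ is
\[
\Psi(X)=\lim_{n\to\infty}\tfrac{1}{n}\Phi(\lfloor X^{*n}\rfloor),
\]
which exists by sub/super-additivity of the sequence via the rounding inequalities $\lfloor X\rfloor+\lfloor Y\rfloor \leq_1 \lfloor X+Y\rfloor$ and $\lfloor X+Y+1\rfloor \leq_1 \lfloor X+1\rfloor+\lfloor Y+1\rfloor$, and is then checked to be monotone and additive.

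For $L_M$ there is a genuine gap. You obtain $\mu$ on $L^\infty$ and propose to extend via two-sided truncation $X_n=(X\wedge n)\vee(-n)$, claiming $\Phi(X_n)\to\Phi(X)$ ``from monotonicity and additivity''. This does not follow: the $X_n$ are not monotone in first-order dominance (raising $n$ raises the upper cap and lowers the lower cap simultaneously), and there is no independent decomposition $X=X_n+R_n$ to which additivity applies. The one-sided truncations $X\wedge n$ and $X\vee(-n)$ do sandwich $X$ in FOSD, but they are unbounded, so you do not yet have the representation for them---the argument is circular. Your step ruling out mass at $\pm\infty$ is similarly imprecise: you cannot evaluate the integral representation at an unbounded $X$ before you have proved it holds there; the correct argument compares $\Phi(Y)$ for unbounded $Y\in L_M$ with $\Phi(\min\{Y,n\})\geq \mu(\{\infty\})\cdot n$ via monotonicity. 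The paper's route is substantially different: it first proves the analogue of Lemma~\ref{lemma:monotone} directly on $L_M$ (a catalytic order with an untruncated Gaussian, then a perturbation by heavy-tailed $Z_\eps\in L_M$ to force tail dominance, plus a nontrivial lemma that $\Phi(Z_\eps)\to 0$), then redoes the functional analysis on subclasses $L_M^Z$ to get the representation for one-sided bounded variables, and only then closes with the truncation sandwich you propose.
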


 
 The extension of Theorem~\ref{thm:main} to the larger domain $L_M$ adds to the applicability of our representation, as it includes distributions with unbounded support, such as Gaussian or Poisson, for which the function $K_a$ has closed-form expressions. For example, Theorem~\ref{thm:domains} implies that when applied to a Gaussian random variable $Z$, a monotone additive statistic $\Phi$ defined on $L_M$ takes the simple mean-variance form $\Phi(Z) = \E{Z}+c\mathrm{Var}[Z]/2$, where $c \in \R$ is the mean of the measure $\mu$ characterizing $\Phi$.
 
 A few additional remarks are in order. First, Theorems~\ref{thm:main} and \ref{thm:domains} answer an open question in the mathematical finance literature on risk measures posed by \cite*{goovaerts2004comonotonic}, who asked if entropic risk measures are the only additive risk measures. Second, a possible strengthening of our additivity condition requires $\Phi(X + Y) = \Phi(X) + \Phi(Y)$ to hold for all pairs of random variables, rather than just the independent ones. As is well known, the only statistic that is additive in this more restrictive sense is the expectation \citep[see, for example,][]{deFinetti1970probability}. A different strengthening is additivity with respect to uncorrelated random variables. It follows from the analysis of \cite{chambers2020spherical} that on a finite probability space the expectation is, again, the only monotone statistic that is additive for uncorrelated random variables. 
 

\section{Monotone Stationary Time Preferences}
\label{sec:time}

 Next, we apply monotone additive statistics to the study of time preferences.
 We consider decision problems where an agent is asked to choose between time lotteries that pay a fixed payoff at a future random time, as in the case of a driver choosing between different routes, where some routes are more likely than others to face heavy traffic, or a company choosing between projects with different random completion times. We argue that in this context additivity is connected to a notion of stationarity, according to which a choice between future rewards is not affected by the addition of an independent delay. In this section we study preferences over time lotteries that are monotone and stationary, characterize them using monotone additive statistics, discuss the risk attitudes they can model, and apply them to the problem of aggregating heterogeneous time preferences.
 
 
 

\subsection{Domain and Axioms}

 A {\em time lottery} is a monetary reward received by a decision maker at a future, random time. Formally, it consists of a pair $(x,T)$, where $x \in \R_{++}$ is a positive payoff and $T \in L^{\infty}_+$ is the random time at which it realizes.\footnote{Per standard notation, $L^{\infty}_{+}$ denotes the set of non-negative bounded random variables.}
 Thus, time is non-negative and continuous. Our primitive is a complete and transitive binary relation $\succeq$ on the domain $\R_{++} \times L^{\infty}_+$. We denote by $\sim$ the indifference relation induced by $\succeq$. To avoid notational confusion, in the rest of this section $x$ and $y$ always denote monetary payoffs, $t$, $s$ and $d$ denote deterministic times, and $T,S$, and $D$ denote random times. 

 We say that a preference relation $\succeq$ on $\R_{++}\times L^\infty_+$ is a {\em monotone stationary time preference} (henceforth, MSTP) if it satisfies the following axioms:

\begin{axiom}[More is Better]\label{ax:monotonicity in money}
    If $x > y$ then $(x, T) \succ (y, T)$.
\end{axiom}

\begin{axiom}[Earlier is Better]\label{ax:monotonicity in time}
    If $s > t$ then $(x,t) \succ (x,s)$, and if $S \geq_1 T$ then $(x,T) \succeq (x,S)$.
\end{axiom}

\begin{axiom}[Stochastic Stationarity]\label{ax:stationarity in time}
    If $(x,T) \succeq (y,S)$ then $(x,T+D) \succeq (y,S+D)$ for any $D$ that is independent from $T$ and $S$.
\end{axiom}


\begin{axiom}[Continuity]\label{ax:continuity time lotteries}
    For any $(y, S)$, the sets $\{(x,t)\colon (x,t) \succeq (y,S)\}$ and $\{(x,t)\colon(x,t) \preceq (y, S)\}$ are closed in $\R_{++} \times \R_{+}$. 
\end{axiom}

 The first two Axioms \ref{ax:monotonicity in money} and \ref{ax:monotonicity in time} are standard conditions that directly generalize those in \cite{fishburn1982time}, who studied preferences over dated rewards $(x,t)$ with a deterministic time $t$. They require the decision maker to prefer higher payoffs, and to prefer earlier times. Axiom~\ref{ax:continuity time lotteries} is a standard continuity assumption that does not require a choice of topology over random times. The most substantive of our axioms is stochastic stationarity. In \S\ref{sec:dynamic} we discuss this axiom in depth and motivate it using the notions of time invariance and dynamic consistency \citep{halevy2015time}.\footnote{It is worthwhile to note that we implicitly assume agents to be indifferent with respect to the timing of resolution of uncertainty. We think of the choice as being made at time $0$, and we do not distinguish between situations where the realization of the random time $T$ is revealed immediately, gradually until time $T$, or only at time $T$. Modeling preferences over the timing of resolution of uncertainty would require enlarging the choice domain beyond time lotteries.}

\subsection{Representation}

Our next result characterizes monotone stationary time preferences:

\begin{theorem}\label{thm:time-lotteries}
A preference relation $\succeq$ over time lotteries is an MSTP if and only if there exist a monotone additive statistic $\Phi$, a constant $r>0$, and a continuous and  increasing function $u\colon \R_{++} \to \R_{++}$ such that $\succeq$ is represented by
\begin{align}
    \label{eq:mono-stat-pref}
 V(x, T) =u(x) \cdot \ee^{-r \Phi(T)}.
\end{align}
\end{theorem}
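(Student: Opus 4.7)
The plan is to split the proof into sufficiency (direct verification of the four axioms) and necessity (construction of $u$, $r$, and $\Phi$ from $\succeq$). Sufficiency is routine: given $V(x,T) = u(x)\ee^{-r\Phi(T)}$, Axioms~\ref{ax:monotonicity in money}--\ref{ax:monotonicity in time} follow from $u$ strictly increasing, $\Phi$ monotone with $\Phi(t)=t$ on constants, and $r>0$; Axiom~\ref{ax:stationarity in time} follows from additivity of $\Phi$, since for $D$ independent of $T$ and $S$ the common positive factor $\ee^{-r\Phi(D)}$ cancels in any comparison; and Axiom~\ref{ax:continuity time lotteries} follows from continuity of $u$ on $\R_{++}$ together with $\Phi(t)=t$ on constants.

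For necessity, I would first restrict $\succeq$ to the sub-domain $\R_{++}\times\R_+$ of deterministic dated rewards. The four axioms specialize to the hypotheses of \cite{fishburn1982time} (their stationarity is Axiom~\ref{ax:stationarity in time} with $D$ deterministic), yielding a continuous strictly increasing $u\colon\R_{++}\to\R_{++}$ and $r>0$ with $(x,t)\succeq(y,s) \iff u(x)\ee^{-rt}\geq u(y)\ee^{-rs}$. Next, for each reference $x^*>0$ and each $T\in L^\infty_+$, define $\Phi_{x^*}(T)$ as the unique $t\in[0,\|T\|_\infty]$ such that $(x^*,T)\sim(x^*,t)$: existence follows from Axiom~\ref{ax:continuity time lotteries} combined with $(x^*,0)\succeq(x^*,T)\succeq(x^*,\|T\|_\infty)$ (Axiom~\ref{ax:monotonicity in time}), and uniqueness from the strict time-monotonicity part of Axiom~\ref{ax:monotonicity in time}.

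The main obstacle is to show that $\Phi_{x^*}$ does not depend on the reference $x^*$. Given any other reward $y>0$, assume $u(x^*)\geq u(y)$ (the reverse case is symmetric after interchanging $x^*$ and $y$) and pick $t\geq 0$ with $(y,0)\sim(x^*,t)$, possible by the Fishburn--Rubinstein form. I then apply Stationarity three times: with random delay $D=T$ on $(y,0)\sim(x^*,t)$ to get $(y,T)\sim(x^*,T+t)$; with deterministic $D=t$ on $(x^*,T)\sim(x^*,\Phi_{x^*}(T))$ to get $(x^*,T+t)\sim(x^*,\Phi_{x^*}(T)+t)$; and with deterministic $D=\Phi_{x^*}(T)$ on $(y,0)\sim(x^*,t)$ to get $(y,\Phi_{x^*}(T))\sim(x^*,\Phi_{x^*}(T)+t)$. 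Chaining by transitivity yields $(y,T)\sim(y,\Phi_{x^*}(T))$, so $\Phi_y(T)=\Phi_{x^*}(T)$. Write $\Phi$ for this common value.

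Finally I verify that $\Phi$ is a monotone additive statistic on $L^\infty_+$: law-invariance holds since identically distributed $T,S$ satisfy both $T\geq_1 S$ and $S\geq_1 T$, forcing $\Phi(T)=\Phi(S)$ by Axiom~\ref{ax:monotonicity in time}; normalization $\Phi(c)=c$ and monotonicity are direct from the same axiom; and additivity for independent $T,S\in L^\infty_+$ follows from two applications of Stationarity that deliver $(x^*,T+S)\sim(x^*,\Phi(T)+S)\sim(x^*,\Phi(T)+\Phi(S))$. The representation $V(x,T)=u(x)\ee^{-r\Phi(T)}$ then represents $\succeq$ on all of $\R_{++}\times L^\infty_+$, since any comparison reduces via $(x,T)\sim(x,\Phi(T))$ and $(y,S)\sim(y,\Phi(S))$ to the deterministic Fishburn--Rubinstein form.
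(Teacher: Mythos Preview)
Your proof is correct and follows essentially the same approach as the paper: define a certainty equivalent $\Phi_x(T)$ for each reward, invoke \cite{fishburn1982time} on the deterministic sub-domain, show that $\Phi_x$ is a monotone additive statistic, and verify it does not depend on $x$. The only difference is the order and style of the independence-of-reward step: the paper first proves additivity of $\Phi_x$ and then, choosing $t,s$ with $(x,t)\sim(y,s)$, applies stationarity with delay $T$ and divides the two resulting Fishburn--Rubinstein equalities to get $\Phi_x(T)=\Phi_y(T)$, whereas you obtain the same conclusion directly by chaining three stationarity-induced indifferences before establishing additivity; both arguments are short and equally valid.
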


As in \cite{fishburn1982time}, the parameter $r$ can be normalized to be any arbitrary positive constant by applying a monotone transformation to the representation $V$. We will often set $r$ appropriately to simplify the form of the representation. In contrast, the monotone additive statistic $\Phi$ is uniquely determined by the preference.


 Over the domain of deterministic time lotteries (i.e.\ dated rewards), $V$ coincides with an exponentially discounted utility representation. For general time lotteries, $\Phi(T)$ is the certainty equivalent of $T$, i.e.\ the unique deterministic time that satisfies $(x,T) \sim (x,\Phi(T))$. The monotonicity and continuity axioms ensure that such a certainty equivalent exists, and it is an implication of stochastic stationarity that $\Phi(T)$ is independent of the reward $x$. As we further show in the proof of Theorem~\ref{thm:time-lotteries}, the monotonicity and stochastic stationarity axioms formally translate into the certainty equivalent $\Phi$ being a monotone additive statistic. 
 
 Proposition~\ref{prop:non-negative} in the appendix shows that the representation in Theorem~\ref{thm:main} extends to the domain of non-negative bounded random variables. Thus every MSTP can be represented in the following form:
\begin{align}
    \label{eq:mono-stat-pref-expanded}
    V(x,T) = u(x)\cdot\ee^{-r \int K_a(T)\,\dd\mu(a)}.
\end{align}
 We recover expected discounted utility when $\mu$ is a point mass concentrated on a point $-a < 0$, in which case $\Phi$ takes the form $$\Phi(T) = K_{-a}(T) = \frac{1}{-a}\ \log \E{\ee^{-aT}}.$$ This certainty equivalent, with the normalization $r=a$, yields the familiar representation $V(x,T) = u(x)\mathbb{E}[\ee^{-aT}]$. For a general measure $\mu$, the statistic $\Phi(T) = \int K_a(T)\,\dd\mu(a)$ aggregates different discount rates by mixing over their corresponding certainty equivalents.

 The key feature of the representation \eqref{eq:mono-stat-pref-expanded} is that the average is not over discount factors, but instead over certainty equivalents induced by the discount factors. The resulting representation is behaviorally distinct from expected discounted utility whenever $\mu$ is not a point mass. Indeed, as we formally prove in \S\ref{appx:independence} in the online appendix, this representation satisfies the independence axiom if and only if $\mu$ is a point mass.
 

\subsection{Implications for Risk Attitudes toward Time}\label{sec:risk-attitudes-time}

Theorem~\ref{thm:time-lotteries} demonstrates that there are many ways to extend discounted utility to the domain of time lotteries, while maintaining stochastic stationarity. As is well known, standard expected discounted utility preferences are \emph{risk-seeking} over time, in the sense that a decision maker prefers receiving a reward at a random time $T$ rather than at the deterministic expected time $t = \E{T}$. But other monotone additive statistics lead to stationary time preferences that are not risk-seeking. As an example, for every $a > 0$ the statistic 
\begin{align}
  \label{eq:discount-pref}
  \Phi(T) = K_a(T) =  \frac{1}{a} \log \E{\ee^{aT}}  
\end{align}
leads, with the normalization $r = a$, to the representation
\begin{align}\label{eq:negative-discount-pref}
    V(x,T) = \frac{u(x)}{\E{\ee^{a T} }},
\end{align}
which is in fact \emph{risk-averse} over time. Under this preference, the decision maker applies a \emph{negative} discount rate $-a$ within the monotone additive statistic $\Phi$, and yet is impatient. These two aspects are  compatible because in the representation $u(x)\ee^{-r\Phi(T)}$ the statistic $\Phi$ controls the risk attitude, while the decision maker still prefers receiving prizes earlier rather than later, since $\Phi$ appears with a negative coefficient.


Another key distinctive property of monotone stationary time preferences is their flexibility in allowing for risk attitudes that are not uniform across time lotteries. To illustrate this point, consider two decision problems with a fixed common reward $x =\$1000$, where in the first problem the choice is between
\begin{enumerate}[(I)]
    \item receiving the reward after 1 day for sure, versus
    \item receiving the reward immediately with $99\%$ probability and after 100 days with $1\%$ probability.
\end{enumerate} 
In the second decision problem the choice is between
\begin{enumerate}[(I')]
    \item receiving the reward after 99 days for sure, versus
    \item receiving the reward immediately with $1$\% probability and after 100 days with $99$\% probability.
\end{enumerate} 
In both problems, the times at which the safe options I and I' deliver the prize are equal to the expected delay of the lotteries II and II', and thus a decision maker who is globally risk-averse or risk-seeking must either choose the safe options or the risky options in both problems. Nevertheless, it does not seem unreasonable for a person to choose I over II in order to avoid the risk of a long delay, but also choose II' to I', since the time lottery offers at least a chance of avoiding an otherwise very long delay.\footnote{We are grateful to Weijie Zhong for suggesting this example to us.}

Preferences based on monotone additive statistics are not necessarily globally risk-averse or risk-seeking, and can accommodate the aforementioned behavior.
For example, the statistic
$$\Phi(T) = \frac{1}{2} K_{1}(T) + \frac{1}{2} K_{-1}(T) =  \frac{1}{2}\log \E{\ee^{T}} - \frac{1}{2}\log \E{\ee^{-T}}$$
leads the decision maker to choose the safe option I in the first problem and the risky option II' in the second.

Empirically, both risk-averse and risk-seeking behavior over time lotteries are observed. For example, the experiment by \cite{ebert2018prudent} finds that there are risk-seeking and risk-averse subjects: ``Overall, therefore, and in
contrast to the evidence on wealth risk preferences, there is substantial heterogeneity in preferences
toward delay risk.'' Moreover, \cite*{dejarnette2018time} find that even the same subject often exhibits both risk aversion and risk seeking depending on the choice at hand.



In \S\ref{sec:risk} below we provide a detailed analysis of the risk attitudes of preferences represented by monotone additive statistics, including a characterization of those statistics that give rise to mixed risk attitudes, as in the above example.

\subsection{Stationarity, Time Invariance and Dynamic Consistency}
\label{sec:dynamic}

  In the absence of risk, it was shown by \cite{halevy2015time} that stationarity can be understood as the implication of two more basic principles: that preferences are not affected by calendar time, and that the decision maker is dynamically consistent. As we next explain, Axiom~\ref{ax:stationarity in time} is related to a particular notion of dynamic consistency for time lotteries.
  
  We consider an enlarged framework where the decision maker is endowed with a profile $(\succeq_t)$ of preferences over time lotteries, with $\succeq_t$ representing the preference the decision maker expresses at time $t$. Formally, $\succeq_t$ is a preference over $\R_{++} \times L^\infty_+$, where in the context of $\succeq_t$ the pair $(x,T)$ represents a payoff of $x$ received at time $t + T$. 
 Adapting the definitions from \cite{halevy2015time} to our setting, we define  \emph{time invariance} and \emph{dynamic consistency} below:\footnote{These definitions are slightly different from his, and in particular his (deterministic) dynamic consistency axiom is slightly stronger, requiring the implication to hold in both directions.}
 \begin{definition*}
 The collection of preferences $(\succeq_t)$ satisfies \emph{time invariance} if all the preferences $\succeq_t$ are identical.
 \end{definition*}
  Intuitively, if the agent chooses $(x,T)$ over $(y,S)$ at some time $t$ then she makes the same choice at all other times. 
\begin{definition*}
 The collection $(\succeq_t)$ satisfies  \emph{deterministic dynamic consistency} if, for every pair of time lotteries $(x,T)$ and $(y,S)$, and every $d,t \in \R_+$ it holds that 
 \begin{align*}
 (x,T) \succeq_{t+d} (y,S) \text{ implies } (x,T+d) \succeq_t (y,S+d).
 \end{align*}
 \end{definition*}
That is, the decision maker does not reverse her choice between time $t$ and time $t+d$. Time invariance together with deterministic dynamic consistency imply stationarity with respect to deterministic delays, namely $(x,T) \succeq_t (y,S)$ implies $(x,T+d) \succeq_t (y,S+d)$.
 
Our next definition proposes a generalization of dynamic consistency to a choice between $(x,T)$ and $(y,S)$ made after a random delay $D$. What we call \emph{weak stochastic dynamic consistency} requires that if, at the random time $t+D$, the decision maker always prefers $(x,T)$ over $(y,S)$, then she would not revert her choice if asked to make the decision at time $t$ for her future self. In general, the realization of the delay $D$ could affect the distributions of $S$ and $T$ faced by the decision maker. Weak stochastic dynamic consistency considers only the case where the decision maker always faces the same choice independent of the delay, which mathematically corresponds to $D$ being independent of both $S$ and $T$.  
 
 
 \begin{definition*}
 The collection $(\succeq_t)$ satisfies  \emph{weak stochastic dynamic consistency} if, for every pair of time lotteries $(x,T)$ and $(y,S)$, every $t \in \R_+,$ and every $D \in L^\infty_+$ independent of $S,T$ it holds that 
 \begin{align*}
  (x,T) \succeq_{t+d} (y,S)\,\text{for almost every realization $d$ of $D$} \,\, \implies (x,T+D) \succeq_t (y,S+D).
 \end{align*}
 \end{definition*}
 As we record in the next claim, our stochastic stationarity axiom is immediately implied by time invariance and weak stochastic dynamic consistency.
 \begin{claim}
 Suppose the collection $(\succeq_t)$ satisfies time invariance, so that $\succeq_t\ =\ \succeq$ for every $t$, and also satisfies weak stochastic dynamic consistency. Then the preference $\succeq$ satisfies stochastic stationarity.
 \end{claim}
 Indeed, by time invariance $(x,T) \succeq_t (y,S)$ implies $(x,T) \succeq_{t+d} (y,S)$ for every realization $d$ of $D$. Thus by weak stochastic dynamic consistency, $(x,T) \succeq_{t} (y,S)$ implies $(x,T+D) \succeq_{t} (y,S+D)$ whenever $D$ is independent of $S, T$. Conversely, if $(x,T) \succeq_{t+d} (y,S)$ for \emph{any} realization $d$ of $D$, then $(x, T) \succeq_{t} (y,S)$ by time invariance, and $(x,T+D) \succeq_{t} (y,S+D)$ would follow from stochastic stationarity. So stochastic stationarity also implies weak stochastic dynamic consistency under the assumption of time invariance. 
 
 Weak stochastic dynamic consistency considers the case where $D$ is independent of $S$ and $T$, which means that at the delayed time $t+D$ the agent always chooses between the same two time lotteries. A stronger dynamic consistency axiom would impose the same condition, but for an arbitrary delay $D$ that need not be independent of $S$ and $T$. To make this dependency more explicit we write $S_d,T_d$ for random variables that have the conditional distributions of $S,T$ when conditioning on $D=d$.
 \begin{definition*}
 The collection $(\succeq_t)$ satisfies  \emph{strong stochastic dynamic consistency} if, for every pair of time lotteries $(x,T)$ and $(y,S)$, every $t \in \R_+,$ and every $D \in L^\infty_+$ it holds that 
  \begin{align*}
  (x,T_d) \succeq_{t+d} (y,S_d)\,\text{for almost every realization $d$ of $D$} \,\, \implies (x,T+D) \succeq_t (y,S+D).
 \end{align*}
 \end{definition*}
 
 
 Intuitively, strong stochastic dynamic consistency requires consistency at different times across different decision problems, while weak stochastic dynamic consistency only requires it over the same decision problem. For instance, imagine a traveler who must choose between a train and a flight, which involve travel times $S$ and $T$ respectively, and who does not know the specific day of the month $D$ when they will need to travel. Dynamic consistency compares a traveler who must buy their ticket at the start of the month to one who can make the decision on the day of travel. Weak stochastic dynamic consistency applies when the distributions of travel times $S$ and $T$ are not dependent on the day of the month. Strong stochastic dynamic consistency applies further to cases where travel times $S_d$ and $T_d$ do depend on the day $d$.
 


 The following result shows that under time invariance, strong stochastic dynamic consistency constrains the preference over time to be represented by $K_a$ for some $a \in \overline{\R}$, rather than a general monotone additive statistic $\Phi$ as in Theorem~\ref{thm:time-lotteries}.
 \begin{proposition}\label{prop:time-lotteries-strong}
  Suppose $\succeq$ is an MSTP. Then the collection $(\succeq_t)$ with $\succeq_t\ =\ \succeq$ for every $t$ satisfies strong stochastic dynamic consistency if and only if $\succeq$ can be represented by 
\begin{align*}
    V(x, T) = u(x) \cdot \ee^{-r K_a(T)}
\end{align*}
for some $a \in \overline{\R}$, $r > 0$, and $u \colon \R_{++} \to \R_{++}$.

\end{proposition}

 In words, the preference over time is either risk-neutral, expected discounted utility, the discounted maximum or minimum, or the negatively discounted preference described in \eqref{eq:negative-discount-pref}. In particular, strong stochastic dynamic consistency would rule out the kind of mixed risk attitudes described in \S\ref{sec:risk-attitudes-time}. 
 
 Proposition~\ref{prop:time-lotteries-strong} follows from the fact that strong stochastic dynamic consistency, in combination with monotonicity and continuity, implies the classic independence axiom as we discuss in \S\ref{appx:independence} of the online appendix. Weak stochastic dynamic consistency does not imply the independence axiom and thus allows for a richer set of time preferences.

\subsection{Aggregation of Preferences over Time Lotteries}
\label{sec:aggregation}

 In this section we apply monotone stationary time preferences to collective decision problems. A company making a choice among projects with different expected completion dates, a public agency choosing which research projects to fund, or a family deciding which highway to take, are all examples of social decisions where the alternatives at hand can be seen as time lotteries. In such situations, even if individuals share the same views about the desirability of the possible outcomes, there still exists a need to compromise between different degrees of patience and risk tolerance. 

 We model this type of problem by studying a group of individuals where each agent, denoted by $i$, is equipped with a preference relation $\succeq_i$ over time lotteries. These preferences may display different degrees of patience. Following the approach in social choice, we ask how individual preferences can be aggregated into a social preference relation $\succeq$ that is aligned to the individual preferences by the Pareto principle. In this context, the Pareto principle requires that if all individuals agree that one time lottery is better than another, so should the social preference:
 
 \begin{axiom}[Pareto]\label{ax:pareto}
    If $(x,T) \succeq_i (y,S)$ for every $i$, then $(x,T) \succeq (y,S)$.
\end{axiom}

 We first consider the case where each individual preference admits a standard expected discounted utility representation $u_i(x)\mathbb{E}[\ee^{-a_iT}]$, where $u_i\colon \R_{++} \to \R_{++}$ is agent $i$'s utility function and $a_i > 0$ is her discount rate.\footnote{It is important to note that the parameter $a_i$ here is uniquely pinned down by agent $i$'s preference---when restricting to a fixed reward $x$, the preference is expected utility over random times $T$, so the discounting functions $\ee^{-a_it}$ are unique up to a linear transformation.} The next result shows that if one insists that the social preference also conforms to expected discounted utility, then dictatorship is the only admissible aggregation procedure satisfying the Pareto axiom whenever the individual discount rates are distinct.\footnote{When some agents have the same discount rate, Paretian aggregation need not be dictatorial. For example, if $a_1 = a_2$, then $u(x)\mathbb{E}[\ee^{-aT}]$ with $u = \frac{u_1 + u_2}{2}$ and $a = a_1 = a_2$ satisfies the Pareto axiom.}
 
\begin{proposition}\label{prop:pareto-impossibility}
    Let $(\succeq_1,\ldots,\succeq_n,\succeq)$ be expected discounted utility preferences over time lotteries, where each $\succeq_i$ is represented by $u_i(x)\mathbb{E}[\ee^{-a_iT}]$ and $\succeq$ is represented by $u(x)\mathbb{E}[\ee^{-aT}]$. Suppose $a_1, \ldots, a_n$ are distinct positive numbers. Then the Pareto axiom is satisfied if and only if $\succeq\ =\ \succeq_i$ for some agent $i$.
\end{proposition}

 As the proof shows, this impossibility result is a consequence of Harsanyi's utilitarian theorem \citep{harsanyi1955}. Similar impossibility results have been obtained in the setting of preferences over consumption streams \citep{gollier2005aggregation,zuber2011aggregation,jackson2014present,jackson2015collective,feng2018discount,chambers2018multiple}.

 The next result offers a solution to this impossibility result. It shows that Paretian aggregation and stochastic stationarity are compatible, and do not necessarily result in a dictatorship, if we allow preferences to belong to the larger class of MSTPs.
 \begin{proposition}\label{prop:pareto-possibility}
 Let $(\succeq_1,\ldots,\succeq_n,\succeq)$ be MSTPs, where each $\succeq_i$ is represented by $u_i(x) \ee^{-r_i\Phi_i(T)}$ and $\succeq$ is represented by $u(x) \ee^{-r\Phi(T)}$ for some monotone additive statistics $(\Phi_i)$ and $\Phi$. If there exists $\lambda_1, \dots, \lambda_n \in \R_{+}$ such that \begin{equation}\label{eq:aggregate}
    r = \sum_{i=1}^{n} \lambda_i r_i, \text{~~~~~~} r\Phi = \sum_{i=1}^n \lambda_i r_i\Phi_i, 
    \end{equation}
    and $u = \Pi_{i=1}^n u_i^{\lambda_i}$,
    then the Pareto axiom is satisfied.
 \end{proposition}

 Note that as long as the individual certainty equivalents $\Phi_1, \dots, \Phi_n$ are not all identical, then for generic values of $\lambda_1, \dots, \lambda_n$, the social certainty equivalent $\Phi$ constructed from \eqref{eq:aggregate} is distinct from each of the individual certainty equivalents. Thus the resulting social preference $\succeq$ is generically not a dictatorship.

 The key insight of Proposition~\ref{prop:pareto-possibility} is that a linear aggregation of certainty equivalents preserves both stochastic stationarity and the Pareto axiom; as we show below, this is in fact the only way to preserve these properties. In the special case where individuals have expected discounted utility preferences, the proposition implies that we can aggregate preferences without violating stochastic stationarity by allowing the social preference to be an MSTP. This approach complements alternative solutions that have been proposed in the literature to resolve the tension between Paretian aggregation and stationarity.\footnote{For example, \cite{feng2018discount} define a different notion of Pareto efficiency that takes into account the preferences of individuals across generations. They show that a standard expected discounted social preference can satisfy this weaker Pareto axiom so long as it is more patient than all the individuals. \cite{chambers2018multiple} study a number of representations that weaken stationarity and generalize expected discounted utility.} 

 The next result gives a characterization of all social preferences that admit an MSTP representation and respect the Pareto axiom, in the special case where all agents share the same utility function and it satisfies a mild richness assumption. The assumption that all agents (and the social planner) share the same utility function is common in the literature on aggregating discount factors, following  \cite{weitzman2001gamma} and \cite{chambers2018multiple}.

 \begin{proposition}\label{prop:pareto-characterization}
 Let $(\succeq_1,\ldots,\succeq_n,\succeq)$ be MSTPs, where each $\succeq_i$ is represented by $u(x) \ee^{-r_i\Phi_i(T)}$ and $\succeq$ is represented by $u(x) \ee^{-r\Phi(T)}$ for some monotone additive statistics $(\Phi_i)$ and $\Phi$. Suppose that the common utility function satisfies either $\lim_{x \to 0} u(x) = 0$ or $\lim_{x \to \infty} u(x) = \infty$.
 
 Then, the Pareto axiom is satisfied if and only if  \eqref{eq:aggregate} holds for some $\lambda_1,\ldots,\lambda_n \in \R_{+}$ that sum to $1$.
 \end{proposition}

It follows from our proof that even if agents had different utility functions, the Pareto axiom would still require the social certainty equivalent $\Phi$ to be a convex combination of the individual $(\Phi_i)_{i=1}^{n}$. However, the implications of the Pareto axiom on the social utility function seem difficult to characterize in general.\footnote{To illustrate the difficulty, consider two individual EDU preferences represented by $u_1(x) \E{\ee^{-T}}$ and $u_2(x) \E{\ee^{-T}}$, as well as a social preference represented by $u(x) \E{\ee^{-T}}$, all with the same discount rate. In this case, one can show that the Pareto axiom reduces to the inequality condition $\frac{u(x)}{u(y)} \geq \min \{\frac{u_1(x)}{u_1(y)}, \frac{u_2(x)}{u_2(y)} \}$ for every pair of rewards $x, y$. Now suppose $u_2(x) = u_1(x)^2$ for every $x$, then the previous condition simplifies to $\frac{u_1(x)}{u_1(y)} \leq \frac{u(x)}{u(y)}  
\leq \left(\frac{u_1(x)}{u_1(y)}\right)^2$ for every $x > y$. A wide range of $u$ functions satisfies this condition, including $u_1^{\alpha}$ for any power $\alpha \in [1,2]$ and all convex combinations of such powers. This multiplicity of possible social utility functions makes it challenging to generalize Proposition~\ref{prop:pareto-characterization}.} We leave this question for future work.

\section{Preferences Over Gambles}
\label{sec:monetary}


 In the theory of risk, CARA utility functions form a restrictive but useful class of expected utility preferences. Their usefulness stems  from the analytical tractability of the exponential form, as well as from their invariance properties.
 
 CARA utility functions are invariant to changes in wealth, so that a prospect $X$ is preferred to $Y$ if and only if $X + w$ is preferred to $Y + w$ for all wealth levels $w$. They are more generally invariant to the addition of background risk: if $X$ is preferred to $Y$ then $X + W$ is preferred to $Y + W$ for every independent random variable $W$.
 
 This property makes CARA utility functions a good approximation whenever stakes are small. In addition, they are used in empirical settings in which wealth is unknown.  For example, when estimating risk preferences from insurance choices, the CARA family ``\textit{has the advantage that it implies a household's prior wealth $w$, which frequently is unobserved, is irrelevant to the household's decisions.}'' \citep*{barseghyan2018estimating}. The stronger property of invariance to background risk is also important, since households' additional background risk---arising from, say, investments in the stock market or health conditions---may be unobservable.
 
 
 The invariance properties of CARA utility functions are conceptually distinct from the assumption that preferences obey the axioms of expected utility. In this section, we apply monotone additive statistics to study the general class of preferences that are monotone with respect to stochastic dominance and are invariant to background risk.
 
 
 \subsection{Background-risk Invariant Preferences}\label{sec:background-risk-invariance}

 We consider a complete and transitive preference relation $\succeq$ over $L^\infty$, interpreted here as the space of monetary gambles. We assume that for every gamble $X$ there exists a unique certainty equivalent $\Phi(X)$ such that $\Phi(X) \sim X$. If the preference $\succeq$ is monotone with respect to first-order stochastic dominance then so is $\Phi$. We say that $\succeq$ is \textit{invariant to background risk} when it has the property that $X \succeq Y$ if and only if $X+Z \succeq Y+Z$ for $Z$ independent of $X$ and $Y$.
 
 As we now explain, a preference $\succeq$ is monotone and invariant to background risk if and only if its certainty equivalent is a monotone additive statistic. Indeed, invariance implies that $X + Y \sim \Phi(X) + Y$ for any two independent random variables $X$ and $Y$.  Likewise, $Y + \Phi(X) \sim \Phi(Y) + \Phi(X)$. Combining the two indifferences yields $X + Y \sim \Phi(X)+\Phi(Y)$. So, the certainty equivalent of $X+Y$ is given by the sum $\Phi(X) + \Phi(Y)$, and thus $\Phi$ is an additive. The converse is immediate to verify.
 
 By Theorem~\ref{thm:main}, the certainty equivalent $\Phi$ of such a preference is a weighted average $\Phi(X) = \int K_a(X) \,\dd\mu(a)$ of the certainty equivalents of multiple CARA expected utility agents, where $\mu$ is a probability measure over the coefficient of absolute risk aversion.

\subsection{Risk Aversion}\label{sec:risk}

 In this section we characterize risk-averse and risk-seeking behavior for preferences that are represented by monotone additive statistics. A preference relation $\succeq$ over gambles is risk-averse if its certainty equivalent $\Phi$ satisfies $\Phi(X) \leq \E{X}$ for every gamble $X$, and risk-seeking if the opposite inequality holds. Risk aversion translates into a property of the support of the corresponding mixing measure $\mu$:

\begin{proposition}\label{prop:risk-aversion}
A monotone additive statistic satisfies $\Phi(X) \leq \E{X}$ for every $X \in L^\infty$ if and only if  $\Phi(X) = \int_{\overline{\R}} K_a(X)\,\dd\mu(a)$ for a Borel probability measure $\mu$ supported on $[-\infty,0]$. Likewise, $\Phi(X) \geq \E{X}$ for every $X$ if and only if $\mu$ is supported on $[0, \infty]$.
\end{proposition}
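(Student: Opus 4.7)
By Theorem~\ref{thm:main} I may write $\Phi(X) = \int_{\overline{\R}} K_a(X)\,\dd\mu(a)$ uniquely, so the question reduces to an assertion about the support of $\mu$. I focus on the first equivalence; the second ($\Phi(X) \geq \E{X}$ for all $X$ iff $\mu$ is supported on $[0,\infty]$) will follow by a symmetry argument via $\Phi'(X) := -\Phi(-X)$ together with the identity $K_a(-X) = -K_{-a}(X)$, which shows that $\Phi'$ is a monotone additive statistic whose representing measure is the pushforward of $\mu$ under $a \mapsto -a$.

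\textbf{Easy direction.} For any $a < 0$, Jensen's inequality applied to the convex function $x \mapsto \ee^{ax}$ gives $\log \E{\ee^{aX}} \geq a\E{X}$; dividing by $a<0$ flips this to $K_a(X) \leq \E{X}$. Combined with $K_0(X) = \E{X}$ and $K_{-\infty}(X) = \operatorname{essinf}(X) \leq \E{X}$, this covers all $a \in [-\infty,0]$, and integrating against a $\mu$ supported there yields $\Phi(X) \leq \E{X}$.

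\textbf{Hard direction and main obstacle.} Suppose for contradiction that $\mu((0,\infty]) > 0$; the plan is to exhibit gambles $X_n \in L^\infty$ with bounded expectation yet $\Phi(X_n) \to +\infty$. I take highly skewed Bernoullis: using nonatomicity of $(\Omega,\mathcal{F},\PP)$, pick $A_n$ with $\Pr{A_n} = 1/n$ and set $X_n = n \cdot \mathbbm{1}_{A_n} \in L^\infty_+$, so $\E{X_n} = 1$. Because $X_n \geq 0$, monotonicity of $K_a$ in $X$ forces $K_a(X_n) \geq 0$ for every $a \in \overline{\R}$. For $a \in (0,\infty)$ the elementary bound $\E{\ee^{aX_n}} \geq \ee^{an}/n$ yields $K_a(X_n) \geq n - (\log n)/a$, while $K_\infty(X_n) = n$; hence $K_a(X_n) \to +\infty$ pointwise on $(0,\infty]$. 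Applying Fatou's lemma to the non-negative integrands,
\begin{align*}
\liminf_n \Phi(X_n) \;\geq\; \int_{(0,\infty]} \liminf_n K_a(X_n)\,\dd\mu(a) \;=\; \infty \cdot \mu((0,\infty]) \;=\; +\infty,
\end{align*}
contradicting $\Phi(X_n) \leq \E{X_n} = 1$. The one delicate step is this Fatou argument: because the pointwise limit is $+\infty$, the integrands must be non-negative throughout $\overline{\R}$ so that interchanging $\liminf$ with $\int$ is legitimate, which is precisely what $X_n \geq 0$ delivers. Everything else is routine verification.
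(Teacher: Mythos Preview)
Your proof is correct and essentially the same as the paper's: both use the skewed Bernoulli $X_n$ (equal to $n$ with probability $1/n$, so $\E{X_n}=1$ and $K_a(X_n)\geq 0$) to show that positive mass of $\mu$ on $(0,\infty]$ drives $\Phi(X_n)\to\infty$. The only cosmetic difference is that the paper fixes an explicit $\eps>0$ with $\mu((\eps,\infty])\geq\eps$ and bounds $K_a(X_n)\geq n/2$ for $a\geq\eps$, whereas you reach the same contradiction via Fatou's lemma.
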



In words, a preference that is invariant to background risk is additionally risk-averse (resp.\ risk-seeking) if and only if it ranks gambles by aggregating the certainty equivalents of risk-averse (resp.\ risk-seeking) CARA utility functions.\footnote{A corollary of Proposition~\ref{prop:risk-aversion} is that an additive statistic $\Phi$ is risk averse if and only if it is monotone with respect to \textit{second-order} stochastic dominance. This is perhaps surprising, since the two properties are in general not equivalent for a preference over gambles.}

\subsection{Mixed Risk Aversion}\label{sec:mixed-risk-aversion}

 As pointed out in the classical work of \cite{friedman1948utility}, it is not uncommon to observe behavior that is neither risk-averse nor risk-seeking, such as that of a person who buys both lottery tickets and insurance. For concreteness, in analogy with our discussion in the time domain, consider a decision maker faced with the following two choices.
 
 In the first, the choice is between (I) facing a risk of losing \$100 with probability 1\%, or (II) paying \$1 and being fully insured against that risk. In the second decision problem the choice is between (I') paying \$1 dollar for a lottery ticket that yields \$100 with probability 1\%, or (II') not participating in the lottery. 

 Preferences represented by monotone additive statistics can model a decision maker who chooses (II) over (I) but (I') over (II'), while at the same time remaining invariant to background risk. This is the case, for example, for a preference whose certainty equivalent $\Phi(X)$ takes the form $\Phi(X) = \frac{1}{2}K_{-a}(X) + \frac{1}{2}K_{a}(X)$, with a mixing measure that puts equal weights on two coefficients of risk aversion $a$ and $-a$. 


\subsection{Comparative Risk Attitudes}


We now proceed to compare the risk attitudes expressed by different monotone additive statistics. For two preference relations $\succeq_1$ and $\succeq_2$ over gambles, with corresponding certainty equivalents $\Phi_1$ and $\Phi_2$, the preference $\succeq_1$ is \textit{more risk-averse} than $\succeq_2$ if $\Phi_1(X) \leq \Phi_2(X)$ for every gamble $X \in L^\infty$. That is, if the first decision maker assigns to every gamble a lower certainty equivalent. The next proposition characterizes comparative risk aversion for preferences represented by monotone additive statistics:



\begin{proposition}\label{prop:comparative-risk-aversion}
Let $\succeq_1$ and $\succeq_2$ be represented by monotone additive statistics with mixing measures $\mu_1$ and $\mu_2$, respectively. Then $\succeq_1$ is more risk-averse than $\succeq_2$ if and only if
\begin{enumerate}
    \item[(i)] For every $b > 0$, $\int_{[b,\infty]}\frac{a-b}{a}\,\dd\mu_1(a) \leq  \int_{[b,\infty]}\frac{a-b}{a}\,\dd\mu_2(a)$.
    \item[(ii)] For every $b < 0$, $\int_{[-\infty,b]}\frac{a-b}{a}\,\dd\mu_1(a) \geq \int_{[-\infty,b]}\frac{a-b}{a}\,\dd\mu_2(a)$.
\end{enumerate}
\end{proposition}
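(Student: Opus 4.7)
My plan is to reformulate the inequality $\Phi_1(X)\leq \Phi_2(X)$ for all $X$ as $\int K_a(X)\,\dd\nu(a)\leq 0$ for all $X\in L^\infty$, where $\nu=\mu_1-\mu_2$ is a signed measure with $\nu(\overline{\R})=0$. I will establish the necessity of (i) and (ii) by constructing simple two-point test distributions and taking scaling limits, and the sufficiency by giving an explicit integration-by-parts representation of $K_a(X)-\E{X}$ as a positive integral of the very functions appearing in (i) and (ii).

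For necessity, I would consider for each $b>0$ and $c>0$ the random variable $X_{c,b}$ taking value $c$ with probability $\ee^{-bc}$ and $0$ otherwise. A direct computation of $\E{\ee^{aX_{c,b}}}=\ee^{-bc}\ee^{ac}+1-\ee^{-bc}$ shows that as $c\to\infty$,
\[
\tfrac{1}{c}K_a(X_{c,b}) \;\longrightarrow\; \tfrac{a-b}{a}\,\mathbf{1}_{a\geq b} \quad \text{for every } a\in\overline{\R}
\]
(adopting $(a-b)/a=1$ at $a=\pm\infty$), while $|K_a(X_{c,b})/c|\leq 1$ uniformly since $K_a(X_{c,b})$ lies between $\mathrm{ess\,inf}(X_{c,b})=0$ and $\mathrm{ess\,sup}(X_{c,b})=c$. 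Dominated convergence yields $\Phi_i(X_{c,b})/c\to\int_{[b,\infty]}\tfrac{a-b}{a}\,\dd\mu_i(a)$, and the hypothesis $\Phi_1\leq\Phi_2$ delivers (i). A symmetric construction $Y_{c,b}$ equal to $-c$ with probability $\ee^{bc}$ and $0$ otherwise (for $b<0$) satisfies $K_a(Y_{c,b})/c\to -\tfrac{a-b}{a}\mathbf{1}_{a\leq b}$ pointwise and uniformly bounded, giving (ii) by the same dominated convergence argument.

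For sufficiency, the key step is the following integral identity: for every $X\in L^\infty$, setting $L(a)=\log\E{\ee^{aX}}$ so that $aK_a(X)=L(a)$,
\[
K_a(X)-\E{X} \;=\; \int_{0}^{\infty}\tfrac{a-b}{a}\mathbf{1}_{a\geq b}\,L''(b)\,\dd b \;-\; \int_{-\infty}^{0}\tfrac{a-b}{a}\mathbf{1}_{a\leq b}\,L''(b)\,\dd b \quad \text{for all } a\in\overline{\R}.
\]
I would verify this by integration by parts on each piece, using $L(0)=0$, $L'(0)=\E{X}$, and $L'(\pm\infty)=\mathrm{ess\,sup}/\mathrm{ess\,inf}(X)$, with the endpoints $a=\pm\infty$ handled via $\int_\R L''(b)\,\dd b=L'(\infty)-L'(-\infty)$. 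Two properties of $L''$ are crucial: it is \emph{non-negative}, being the variance of $X$ under the exponentially tilted measure with density proportional to $\ee^{bX}$, and \emph{integrable} on $\R$ since its total integral equals $\mathrm{ess\,sup}(X)-\mathrm{ess\,inf}(X)<\infty$.

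With this representation in hand, since $\nu(\overline{\R})=0$ the constant term satisfies $\int\E{X}\,\dd\nu=0$, and Fubini's theorem—justified because the test functions are bounded by $1$ and $L''\in L^1(\R)$—gives
\[
\Phi_1(X)-\Phi_2(X) \;=\; \int_{0}^{\infty}\!\Big(\!\!\int_{[b,\infty]}\!\tfrac{a-b}{a}\,\dd\nu(a)\Big) L''(b)\,\dd b \;-\; \int_{-\infty}^{0}\!\Big(\!\!\int_{[-\infty,b]}\!\tfrac{a-b}{a}\,\dd\nu(a)\Big) L''(b)\,\dd b.
\]
Condition (i) makes the first bracketed integral $\leq 0$ and (ii) makes the second $\geq 0$; together with $L''\geq 0$ this forces $\Phi_1(X)\leq \Phi_2(X)$. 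I expect the main obstacle to be establishing the integral representation cleanly at the extended endpoints $a=\pm\infty$ and ensuring the convention $(a-b)/a=1$ there is consistent with the pointwise limit arguments in the necessity step; both reduce to routine limit computations once the structure is isolated.
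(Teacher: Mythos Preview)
Your proposal is correct and follows essentially the same approach as the paper. The paper's necessity argument uses the identical two-point test variables $X_{n,b}$ (with integer parameter) and the same dominated-convergence limit, and its sufficiency argument rests on the very same integration-by-parts identity expressing $K_a(X)-\E{X}$ as $\int L''(b)\tfrac{(a-b)^+}{a}\,\dd b$ (split over $b>0$ and $b<0$), combined with $L''\geq 0$; the only cosmetic difference is that the paper integrates against $\mu_1$ and $\mu_2$ separately rather than against the signed measure $\nu=\mu_1-\mu_2$, and it handles the atoms at $\pm\infty$ by separating them explicitly rather than via your convention $(a-b)/a=1$.
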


 Since $K_a(X)$ increases in the parameter $a$, a sufficient condition for $\succeq_1$ being more risk-averse than $\succeq_2$ is that $\mu_2$ first-order stochastically dominates $\mu_1$. First-order stochastic dominance is, however, only a sufficient condition. The reason is that the cone generated by the functions of the form $K_{(\cdot)}(X)$, as we vary $X$, does not contain all increasing functions, and hence defines a strictly finer stochastic order over the mixing measures.\footnote{For a concrete example that the order characterized by Proposition~\ref{prop:comparative-risk-aversion} is strictly finer than first-order stochastic dominance, consider $\mu_1$ to be a point mass at $a = 2$ and $\mu_2$ to have $1/4$ mass at $a = 1$ and $3/4$ mass at $a = 3$. Clearly, neither one first-order dominates the other. Condition (ii) in Proposition~\ref{prop:comparative-risk-aversion} is trivially satisfied, whereas condition (i) reduces to $\frac{1}{2} (2-b)^{+} \leq \frac{1}{4} (1-b)^{+} + \frac{1}{4} (3-b)^{+}$, which holds because the function $(a-b)^{+} = \max\{a-b, 0\}$ is convex in $a$.}

 Proposition~\ref{prop:comparative-risk-aversion} characterizes this stochastic order by showing that the convex cone generated by the set of normalized cumulant generating functions is equal to the convex cone generated by a simple one-parameter family of test functions, of the form $g(a) = \frac{a-b}{a}1_{a \geq b}$ or $g(a) = -\frac{a-b}{a}1_{a \leq b}$.

 \subsection{Betweenness}\label{sec:betweenness}
 
 A disadvantage of the class of preferences represented by monotone additive statistics is that it is large, with the entire measure $\mu$ as an infinite-dimensional parameter of the preference. In this section we identify a small subset of such preferences that is indexed by only two parameters, and yet retains enough flexibility to accommodate interesting risk attitudes such as mixed risk aversion.
 
 To this end we study preferences that satisfy the \emph{betweenness} axiom. This well-known property, first studied by \cite{dekel1986betweenness} and \cite{chew1989axiomatic}, requires that the decision maker's preference over probability distributions displays indifference curves that are straight lines. In comparison, the standard independence axiom (which we study in \S\ref{appx:independence} of the online appendix) would additionally require the indifference curves to be parallel to each other. 

 Given two random variables $X$ and $Y$, we denote by $X_\lambda Y$ any random variable whose distribution is a convex combination that assigns weight $\lambda$ to the distribution of $X$ and weight $1-\lambda$ to the distribution of $Y$.
 
\begin{axiom}[Betweenness]\label{ax:betweenness}
 For all $X, Y$ and all $\lambda \in (0,1)$, $X \sim Y$ if and only if $X_\lambda Y \sim Y$.
\end{axiom}

 The betweenness axiom characterizes the following class of preferences:

\begin{theorem}\label{thm:betweenness}
Suppose a preference $\succeq$ on $L^\infty$ is represented by a monotone additive statistic $\Phi(X) = \int_{\overline{\R}} K_a(X)\,\dd\mu(a)$. Then $\succeq$ satisfies the betweenness axiom if and only if
\begin{align*}
    \Phi(X) = \beta K_{-a\beta}(X) + (1-\beta)K_{a(1-\beta)}(X)
\end{align*}
for some $\beta \in [0,1]$ and $a \in [0, \infty)$.
\end{theorem}

 This family of preferences is much smaller, as it is parameterized by only two numbers. It retains the properties of monotonicity, invariance to background risk, as well as the tractability of the CARA representation. Yet it is versatile enough to describe the kind of mixed risk attitude that leads to buying both insurance and lottery tickets.

 The risk-attitude parameter $\beta$ weights the levels of risk aversion/seeking, with $\beta=1$ corresponding to pure CARA risk aversion and $\beta=0$ corresponding to pure CARA risk seeking. For internal $\beta$, the preference exhibits mixed risk aversion as guaranteed by the previous Proposition~\ref{prop:risk-aversion}. Moreover, a simple calculation shows that for any $\beta \in (0,1)$, such a preference would buy both insurance and lottery tickets of the kind described in \S\ref{sec:mixed-risk-aversion} whenever those gambles entail a small probability of a large loss or gain.\footnote{It can be shown that if $\beta \neq 0.5$, then lottery tickets and insurance as described in \S\ref{sec:mixed-risk-aversion} are preferred if and only if the probability of gain/loss ($0.01$ in the example) is smaller than $\min(\beta, 1-\beta)$, and the corresponding gain/loss amount ($100$ in the example) is sufficiently large. If $\beta = 0.5$, then the same holds for any probability of gain/loss $< 0.5$, and for any gain/loss amount.} 

 The parameter $a$ is a scale parameter. It can be understood as the scale at which the preference deviates from risk neutrality. For gambles whose sizes are much smaller than $1/a$, the preference is very close to being risk-neutral. While for gambles that vary by much more than $1/a$, behavior will be far from risk-neutral. 

\subsection{Combined Choices over Gambles}\label{sec:combined-choices}

In large organizations, risky prospects are not always chosen through a deliberate, centralized process. Rather, they are combinations of independent choices, often carried out with limited coordination among the different actors. 

Consider, for example, a bank that employs two workers. The first is a trader who must choose between two contracts, the Lean Hog futures $X$ and $X'$. The second is an administrator who must choose between two insurance policies $Y$ or $Y'$ for the bank's building. Assuming the first worker chooses $X$ and the second $Y$, the resulting revenue for the bank is given by the random variable $X + Y$. When the agents face choice problems that belong to independent domains, so that $X$ and $X'$ are stochastically independent from $Y$ and $Y'$, it is natural to ask to what extent coordination is necessary for the organization.

In this section we make this question precise by asking under what conditions the agents' combined choices respect first-order stochastic dominance. Our result shows this is true if and only if individual preferences are identical and represented by a monotone additive statistic. Thus, this is the only class of preferences with the property that choices over independent domains can be decentralized without obvious harm to the organization.

We study the following model. We are given two preference relations $\succeq_1$ and $\succeq_2$ over $L^\infty$, the set of bounded gambles, that are complete and transitive (our result immediately generalizes to three or more agents). As in the example above, we think of each preference relation as describing the choices of a different agent, so that $X \succeq_i X'$ if agent $i$ chooses $X$ over $X'$. These preferences can be interpreted as being endogenous or as the result of exogenous incentives; for example, the bank trader's preferences could be driven by her contract with the employer. 

Our main axiom requires that whenever the two agents face independent decision problems, their choices, when combined, do not violate stochastic dominance:

\begin{axiom}[Consistency of Combined Choices]\label{ax:R-W}
Suppose $X,X'$ are independent of $Y,Y'$. If $X \succ_1 X'$ and $Y \succ_2 Y'$, then $X' + Y'$ does not strictly dominate $X + Y$ in first-order stochastic dominance.
\end{axiom}
If we interpret $\succeq_1$ and $\succeq_2$ as decision-making rules that are determined by the organization, then Axiom~\ref{ax:R-W} requires such rules to never result in an outcome that is stochastically dominated. That collective choices should not violate stochastic dominance is clearly a desirable requirement for a rational organization. A similar axiom was first introduced by \cite{rabin2009narrow} in the context of a model of narrow framing.

In addition to this axiom, we assume individual preference relations $\succeq_i$ satisfy basic continuity and monotonicity assumptions:

\begin{axiom}[Continuity]
  \label{ax:cont-gambles}
    If $X \succ_i Y$ then there exists $\eps > 0$ such that $X \succ_i Y + \eps$ and $X - \eps \succ_i Y$. 
\end{axiom}

\begin{axiom}[Responsiveness]
\label{ax:responsiveness}
$X+\eps \succ_i X$ for every $\eps > 0$.
\end{axiom}

We next show that under these axioms, the two preference relations must be represented by monotone additive statistics. Moreover, the statistic must be the same for both agents.
\begin{theorem}\label{thm:R-W gambles}
Two preference $\succeq_1,\succeq_2$ on $L^\infty$ satisfy Axioms~\ref{ax:R-W}, \ref{ax:cont-gambles}, and \ref{ax:responsiveness} if and only if there exists a monotone additive statistic that represents both $\succeq_1$ and $\succeq_2$.
\end{theorem}

Thus, when individual choices are not coordinated, their combination will, in general, lead to violations of stochastic dominance, even when agents' choices concern independent decision problems. The theorem singles out preferences represented by monotone additive statistics as the only class of preferences that are robust to this lack of coordination.

Theorem~\ref{thm:R-W gambles} admits an alternative interpretation, closely related to the work of \cite{rabin2009narrow} on narrow framing. In their paper, a decision maker faces multiple decisions and engages in ``narrow bracketing'' by choosing separately, in each problem, according to a fixed preference relation $\succeq$ over gambles. This is a special case of our model where $\succeq\ =\ \succeq_1\ =\ \succeq_2$. They show that the decision maker's combined choices result in dominated outcomes whenever $\succeq$ is not invariant to changes in wealth (i.e.\ for some $X, Y$ and $c \in \R$, $X \succ Y$ and $Y + c \succ X + c$), but leave open the question of characterizing the class of preferences, beyond expected utility, that satisfy Axiom~\ref{ax:R-W}. Theorem~\ref{thm:R-W gambles} provides a complete characterization of those preferences over gambles for which narrow framing does not lead to dominated choices.

\section{Overview of the Proof of Theorem~\ref{thm:main}}\label{sec:proof-sketch}

Our approach to the proof of Theorem~\ref{thm:main} is via a stochastic order known as the {\em catalytic stochastic order} \citep[see][and references therein]{fritz2017resource}. Given $X,Y \in L^\infty$, we say that $X$ dominates $Y$ in the catalytic stochastic order on $L^\infty$ if there exists a $Z \in L^\infty$, independent of $X$ and $Y$, such that $X+Z$ dominates $Y+Z$ in first-order stochastic dominance.

The applicability of this order to our problem is immediate. If $X$ dominates $Y$ in the catalytic stochastic order then
$$
  \Phi(X+Z) \geq \Phi(Y+Z)
$$
for some $Z$, independent of $X$ and $Y$. If $\Phi$ is also additive, then $\Phi(X+Z) = \Phi(X)+\Phi(Z)$ and $\Phi(Y+Z) = \Phi(Y)+\Phi(Z)$, and so we have that $\Phi(X) \geq \Phi(Y)$. Thus, any monotone additive $\Phi$ is monotone with respect to this order.

Clearly, if $X \geq_1 Y$ then $X$ also dominates $Y$ in the catalytic stochastic order, as one can take $Z=0$. A priori, one may conjecture that this is also a necessary condition. 
But as Figure~\ref{fig:marginal} shows, it is easy to give examples of two random variables $X$ and $Y$ that are not ranked with respect to first-order stochastic dominance, but are ranked with respect to the catalytic stochastic order.\footnote{We are indebted to the late Kim Border for helping us construct this example.} The random variable $X$ equals $1$ with probability $1/3$ and $0$ with probability $2/3$, while $Y$ is uniformly distributed on $[-\frac{3}{5},\frac{2}{5}]$. As the figure shows, their c.d.f.s are not ranked, and hence they are not ranked in terms of first-order stochastic dominance.\footnote{\cite*{pomatto2018stochastic} give examples of random variables $X$ and $Y$ that are not ranked in stochastic dominance, but are ranked after adding an {\em unbounded} independent $Z$. In fact, they show that this is possible whenever $\E{X}>\E{Y}$. As we explain below, this result no longer holds when $Z$ is required to be bounded.}
\begin{figure}
    \centering
    \includegraphics[scale=0.4]{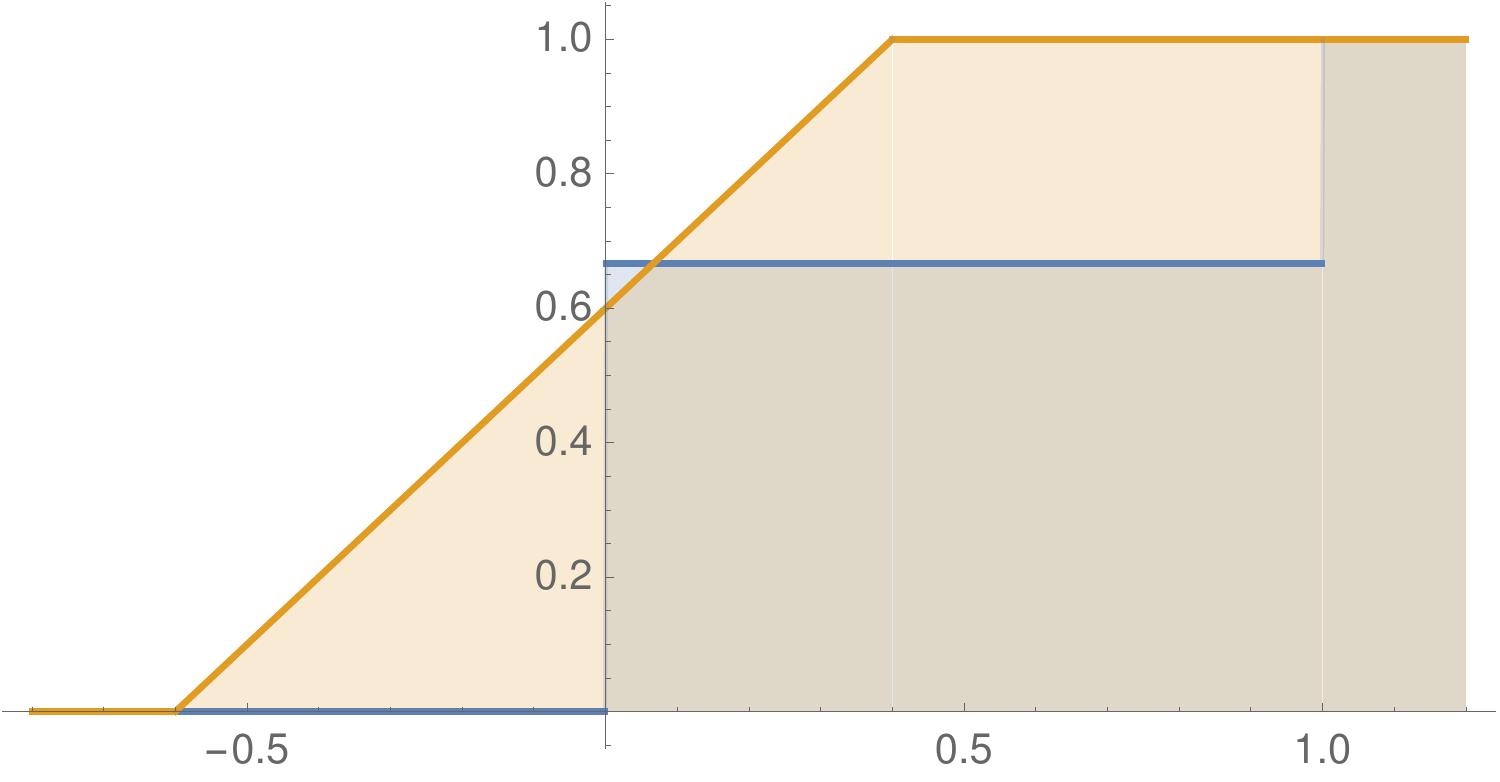}
    \caption{The c.d.f.s of $X$ (blue) and $Y$ (orange).}
    \label{fig:marginal}
\end{figure}
However, if we let $Z$ assign probability half to $\pm \frac{1}{5}$, then $X+Z >_1 Y+Z$. Intuitively, since the c.d.f.\ of $X+Z$ is the average of the two translations (by $\pm \frac{1}{5}$) of the c.d.f.\ of $X$, and since the same holds for the c.d.f.\ of $Y$, the result of adding $Z$ is the disappearance of the small ``kink'' in which the ranking of the c.d.f.s is reversed. This is depicted in Figure~\ref{fig:sum-cdfs}.
\begin{figure}
    \centering
    \includegraphics[scale=0.4]{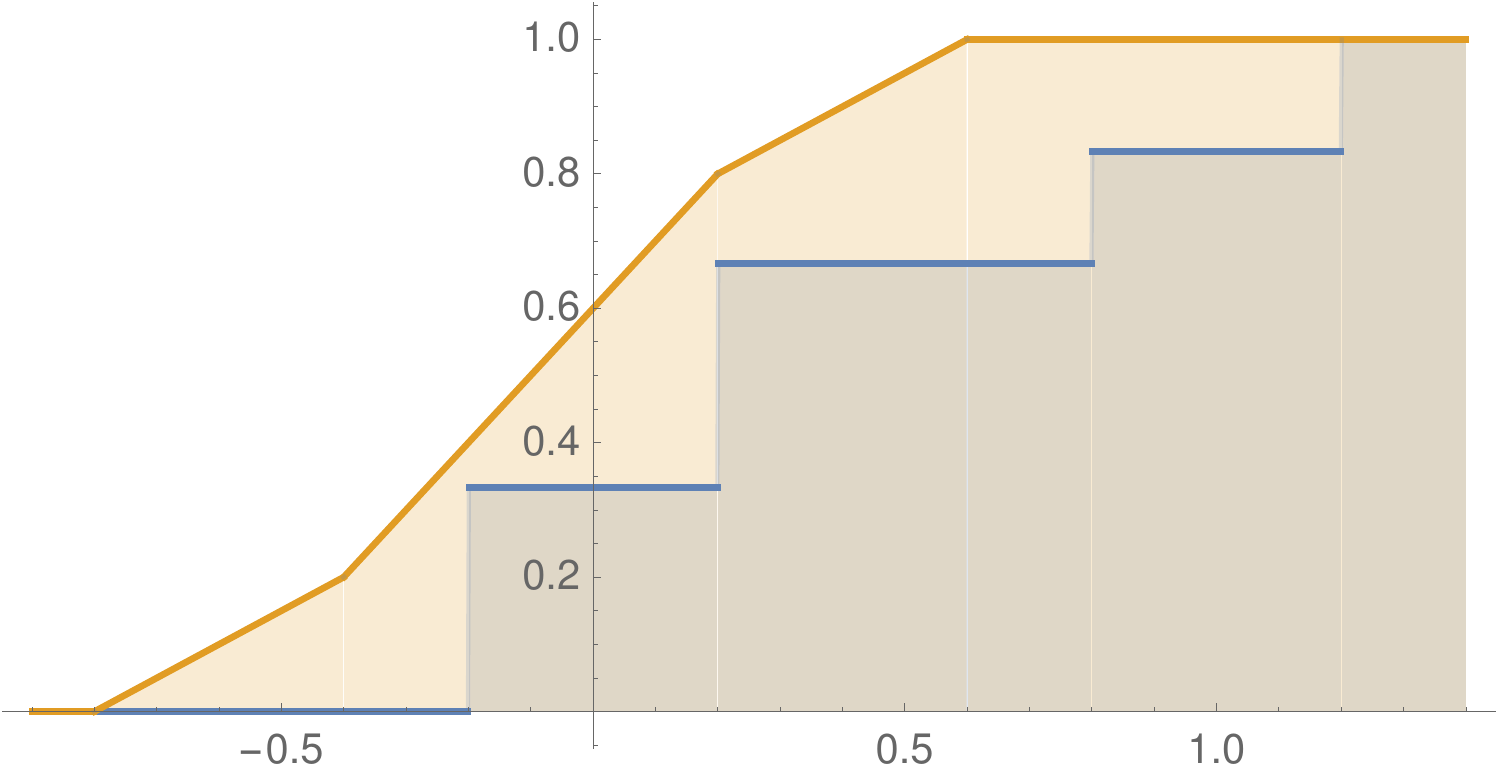}
    \caption{The c.d.f.s of $X+Z$ (blue) and $Y+Z$ (orange).}
    \label{fig:sum-cdfs}
\end{figure}

Every monotone additive statistic $\Phi$ provides an obstruction to dominance in the catalytic stochastic order. That is, if $\Phi(X) < \Phi(Y)$, then it is impossible that $X+Z \geq_1 Y+Z$ for some independent $Z$, since monotonicity would imply that $\Phi(X+Z) \geq \Phi(Y+Z)$, and additivity would then imply that $\Phi(X) \geq \Phi(Y)$. In particular, the existence of an $a$ for which $K_a(X) < K_a(Y)$ forms an obstruction to the existence of such a $Z$. The following result shows that these are, in a sense, the only possible obstructions:\footnote{In fact, except for the trivial case where $X$ and $Y$ have the same distribution, the  strict inequality $K_a(X) > K_a(Y)$ for all $a \in \R$ is necessary for the existence of a $Z$ such that $X+Z \geq_1 Y+Z$. This is because $X + Z \geq_1 Y + Z$ implies the strict inequality $K_a(X+Z) > K_a(Y+Z)$ for finite $a$ whenever $X+Z$ and $Y+Z$ have different distributions. Thus, Theorem~\ref{thm:marginal} below implies that for distributions with different minima and maxima, the condition $K_a(X) > K_a(Y)$ for all $a \in \R$ is both necessary and sufficient for dominance in the catalytic stochastic order.}
\begin{theorem}
  \label{thm:marginal}
  Let $X,Y\in L^\infty$ satisfy $K_a(X) > K_a(Y)$ for all $a \in \overline{\R}$. Then there exists a c.d.f.\ $H$ such that any independent $Z \in L^\infty$ with c.d.f.\ $H$ satisfies $X+Z \geq_1 Y+Z$.
\end{theorem}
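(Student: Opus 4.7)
The strategy is to translate the problem into a convolution inequality and attack it via duality. Setting $h = F_Y - F_X$, the desired conclusion is equivalent to finding a compactly supported probability measure $\rho$ (the law of $Z$) with $(h \ast \rho)(t) \geq 0$ for every $t \in \R$, since this says $F_{X+Z}(t) \leq F_{Y+Z}(t)$. A short integration by parts shows that $\E{e^{aX}} - \E{e^{aY}} = a \int e^{at} h(t)\,dt$ for every $a \neq 0$, so the hypothesis $K_a(X) > K_a(Y)$ becomes the single condition $\int e^{at} h(t)\,dt > 0$ for all $a \in \R$ (with $a = 0$ recovering $\E{X} > \E{Y}$). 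The conditions at $a = \pm\infty$ give $\min X > \min Y$ and $\max X > \max Y$, which in particular imply that $h > 0$ on the outer intervals $[\min Y, \min X)$ and $[\max Y, \max X)$.

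The core of the argument is a Hahn-Banach separation. Fix a large parameter $R$ and consider the compact convex set $S_R = \{\rho \ast h : \rho \in \mathcal P([-R,R])\}$, viewed inside $C(I)$ for a sufficiently large interval $I$. Assume for contradiction that $S_R$ is disjoint from the non-negative cone $C_+(I)$. Separation yields a nonzero, non-negative Radon measure $\mu$ on $I$ with $\int (\rho \ast h)\,d\mu < 0$ for every $\rho \in \mathcal P([-R,R])$. Specializing $\rho = \delta_s$ gives $G(s) := \int h(t-s)\,d\mu(t) < 0$ uniformly for $s \in [-R, R]$. To derive a contradiction, I integrate $G(s)\,e^{as}$ over $s \in [-R,R]$ and swap the order of integration: for $t$ in the "central" range of the support of $\mu$, the inner integral collapses to the bilateral Laplace transform $\hat h(-a) = \int e^{-au} h(u)\,du$, which is strictly positive by hypothesis. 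The central contribution is then $\hat h(-a)\,\hat\mu(a) > 0$, since $\mu$ is a nonzero positive measure, contradicting $\int_{-R}^R G(s)\,e^{as}\,ds < 0$.

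The main obstacle lies in this swap of integration: the separating measure $\mu$ may place mass near $\pm R$, where the inner integral differs from $\hat h(-a)$ and the exponential weight $e^{at}$ is of the same order $e^{\pm aR}$ as the central contribution, so the boundary terms do not obviously vanish. I would overcome this by invoking the hypothesis at $a = \pm \infty$: the strict positivity of $h$ on $[\min Y, \min X) \cup [\max Y, \max X)$ forces $G(s) > 0$ on appropriate translates of these intervals, so $\mu$ cannot place mass there without contradicting $G < 0$; this confines $\mu$ to a central region where the Laplace calculation goes through cleanly. A cleaner alternative would be a Choquet-type identification of the extreme rays of the dual cone $\{\mu \geq 0 : h \ast \check\mu \leq 0 \text{ on } \R\}$ as exactly the exponential measures $e^{at}\,dt$; the only possible obstructions would then correspond to some $\hat h(a) \leq 0$, directly contradicting the hypothesis. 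In either route, the delicate step is precisely this extremal or boundary analysis linking the dual objects back to the exponential functionals $K_a$.
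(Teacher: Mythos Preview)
Your approach is genuinely different from the paper's: you try to locate $Z$ by a Hahn--Banach separation argument, whereas the paper constructs $Z$ explicitly as a truncated Gaussian with large variance. Your setup is correct --- the problem does reduce to showing that $\{h \ast \rho : \rho \in \mathcal P([-R,R])\}$ meets the nonnegative cone in $C(I)$, and the identity $\int e^{at} h(t)\,dt > 0$ is exactly the content of the hypothesis $K_a(X) > K_a(Y)$ for finite $a$.

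However, the boundary issue you flag as ``the main obstacle'' is a genuine gap that neither of your proposed fixes closes. For the first fix: even though $h > 0$ on the two outer subintervals of its support, the value $G(s) = \int h(t-s)\,d\mu(t)$ for $s$ near $\pm R$ still involves the \emph{full} profile of $h$, not only its positive part --- as $t$ ranges over the support of $\mu$, the argument $t-s$ sweeps across the entire support of $h$. So there is no evident mechanism by which boundary mass in $\mu$ forces $G(s) > 0$ at some $s \in [-R,R]$, once $\mu$ is allowed to carry compensating mass elsewhere; your one-line claim that ``$\mu$ cannot place mass there'' is not justified. For the second fix: the measures $e^{at}\,dt$ are not finite on $\R$, so it is unclear in which space the Choquet representation is to be carried out, and there is no standard result identifying the extreme rays of the cone $\{\mu \geq 0 : h \ast \check\mu \leq 0\}$ with exponentials. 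Both routes remain programmatic rather than proofs.

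By contrast, the paper sidesteps separation entirely. Writing $\sigma = F_Y - F_X$ and taking $h_V$ to be a Gaussian density with variance $V$ (later truncated), one has
\[
(\sigma \ast h_V)(y) \;=\; \tfrac{1}{\sqrt{2\pi V}}\, e^{-y^2/(2V)} \int \sigma(x)\,e^{(y/V)x}\, e^{-x^2/(2V)}\,dx,
\]
and for large $V$ the factor $e^{-x^2/(2V)}$ is uniformly close to $1$ on the bounded support of $\sigma$, so the integral is close to $M_\sigma(y/V) = \int \sigma(x)\,e^{(y/V)x}\,dx$, which is strictly positive by hypothesis. The regions where $|y|/V$ is large and the truncation of the Gaussian are handled by direct estimates that use the positivity of $\sigma$ near the endpoints of its support --- the same ingredient you invoke, but deployed quantitatively rather than through a soft duality. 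This explicit construction is elementary and self-contained, and it is precisely the hard estimate your argument is missing.
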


 To prove Theorem~\ref{thm:marginal} we explicitly construct $H$ as a truncated Gaussian c.d.f.\  with appropriately chosen parameters. The idea behind the proof is as follows. Denote by $F$ and $G$ the c.d.f.s of $X$ and $Y$, respectively, and suppose that they are supported on $[-N,N]$. Let $h(x)=\frac{1}{\sqrt{2\pi V}} \ee^{-\frac{x^2}{2V}}$ be the density of a Gaussian  $Z$. Then the c.d.f.s of $X+Z$ and $Y+Z$ are given by the convolutions $F * h$ and $G * h$, and their difference is equal to 
\begin{align*}
    [G*h-F*h](y) &= \int_{-N}^N [G(x)-F(x)] \cdot h(y-x)\,\dd x\\
                   &= \frac{1}{\sqrt{2\pi V}}\ee^{-\frac{y^2}{2V}} \cdot \int_{-N}^{N} \underbrace{[G(x)-F(x)] \cdot
    \ee^{\frac{y}{V} \cdot x}}_{(*)} \cdot \underbrace{\ee^{-\frac{x^2}{2V}}}_{(**)} \,\dd x
\end{align*}
If we denote $a=\frac{y}{V}$, then by integration by parts, the integral of just $(*)$ is equal to $\frac{1}{a}\left(\E{\ee^{aX}}-\E{\ee^{aY}}\right)$, which is positive by the assumption that $K_a(X) > K_a(Y)$ and is in fact bounded away from zero by the Extreme Value Theorem. The term $(**)$ can be made arbitrarily close to 1---uniformly on the integral domain $[-N,N]$---by making $V$ large. This implies that $[G*h-F*h](y) > 0$ for all $y$, and we further show that the inequality still holds if we modify $H$ by truncating its tails, ensuring that it is in $L^\infty$.

Theorem~\ref{thm:marginal} leads to the following lemma, which is a key component of the proof of Theorem~\ref{thm:main}:
\begin{lemma}
  \label{lemma:monotone}
  Let $\Phi \colon L^\infty \to \R$ be a monotone additive statistic. If $K_a(X) \geq K_a(Y)$ for all $a \in \overline{\R}$ then $\Phi(X) \geq \Phi(Y)$.
\end{lemma}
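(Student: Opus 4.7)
The plan is to reduce the weak-inequality hypothesis to the strict-inequality hypothesis of Theorem~\ref{thm:marginal}, and then read off the conclusion from monotonicity together with additivity. The key quantitative fact that powers the reduction is the shift identity
\begin{equation*}
    K_a(X+c) = K_a(X) + c
\end{equation*}
for every $c\in\R$ and every $a\in\overline{\R}$, which I would verify quickly in the three regimes: for $a\in\R\setminus\{0\}$ it follows from $\E{\ee^{a(X+c)}}=\ee^{ac}\E{\ee^{aX}}$; for $a=0$ it reduces to linearity of the expectation; and for $a=\pm\infty$ it is the translation invariance of the essential supremum and infimum.

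Fix $\varepsilon>0$ and set $X_\varepsilon = X+\varepsilon$. By the shift identity and the hypothesis $K_a(X)\ge K_a(Y)$ for all $a\in\overline{\R}$, we get
\begin{equation*}
    K_a(X_\varepsilon) = K_a(X)+\varepsilon \ \ge\  K_a(Y)+\varepsilon \ >\ K_a(Y)
\end{equation*}
for every $a\in\overline{\R}$. The strict inequalities are exactly the hypothesis of Theorem~\ref{thm:marginal}, so there exists a bounded $Z\in L^\infty$, independent of $X_\varepsilon$ and $Y$, such that $X_\varepsilon+Z \ \ge_1\ Y+Z$.

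Monotonicity of $\Phi$ then gives $\Phi(X_\varepsilon+Z)\ge \Phi(Y+Z)$. Since $Z$ is independent of $X_\varepsilon$ and of $Y$, additivity yields $\Phi(X_\varepsilon+Z)=\Phi(X_\varepsilon)+\Phi(Z)$ and $\Phi(Y+Z)=\Phi(Y)+\Phi(Z)$, so we can cancel the common term $\Phi(Z)$ (which is finite since $Z\in L^\infty$) to conclude $\Phi(X_\varepsilon)\ge \Phi(Y)$. Finally, $\varepsilon$ is a constant and therefore independent of $X$, so a second application of additivity and the fact that $\Phi(\varepsilon)=\varepsilon$ give $\Phi(X_\varepsilon)=\Phi(X)+\varepsilon$. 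Thus $\Phi(X)+\varepsilon\ge\Phi(Y)$ for every $\varepsilon>0$, and letting $\varepsilon\downarrow 0$ delivers $\Phi(X)\ge\Phi(Y)$.

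The only substantive ingredient is Theorem~\ref{thm:marginal}, so there is no real obstacle beyond the mild bookkeeping of the $\varepsilon$-perturbation; I flag only the minor point that one must verify the shift identity at $a=\pm\infty$ and at $a=0$ so that the strict inequality $K_a(X_\varepsilon)>K_a(Y)$ holds \emph{uniformly} on the compact space $\overline{\R}$, which is exactly what Theorem~\ref{thm:marginal} requires.
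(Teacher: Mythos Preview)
Your proof is correct and follows essentially the same route as the paper's own argument: perturb $X$ by $\varepsilon$ to obtain strict inequalities, invoke Theorem~\ref{thm:marginal} to get a catalyst $Z$, then cancel $\Phi(Z)$ via additivity and let $\varepsilon\downarrow 0$. The only difference is cosmetic---you spell out the shift identity $K_a(X+c)=K_a(X)+c$ in each regime, which the paper uses implicitly.
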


\begin{proof}
  Suppose $K_a(X) \geq K_a(Y)$ for all $a \in \overline{\R}$. Given $\eps > 0$, let $\hat{X}$, $\hat{Y}$ and $Z$ in $L$ be such that: $\hat{X}$ has the same c.d.f.\ as $X+\eps$, $\hat{Y}$ has the same c.d.f.\ as $Y$, and $Z$ has the c.d.f.\ obtained by applying Theorem~\ref{thm:marginal} to $\hat{X}$ and $\hat{Y}$. We can indeed apply the theorem, since $K_a(\hat{X}) = K_a(X)+\eps > K_a(Y) = K_a(\hat{Y})$ for all $a$. Hence, $\hat{X} + Z \geq_1 \hat{Y} + Z$. Thus, by monotonicity of $\Phi$, $\Phi(\hat{X} + Z) \geq \Phi(\hat{Y}+Z)$, and by additivity $\Phi(\hat{X}) \geq \Phi(\hat{Y})$. This means that $\Phi(X) + \eps = \Phi(\hat{X}) \geq \Phi(\hat{Y})=\Phi(Y)$ for all $\eps > 0$, and hence $\Phi(X) \geq \Phi(Y)$.
\end{proof}

Once we have established Lemma~\ref{lemma:monotone}, the remainder of the proof uses functional analysis techniques (in particular the Riesz Representation Theorem) to deduce the integral representation in Theorem~\ref{thm:main}. See \S\ref{appx:main} in the appendix for the complete proof.

 An alternative proof of Lemma~\ref{lemma:monotone} can be given based on a different stochastic order. Given two random variables $X$ and $Y$, let $X_1,X_2,\ldots$ and $Y_1,Y_2,\ldots$ be i.i.d.\ copies of $X$ and $Y$, respectively. We say that $X$ dominates $Y$ in \textit{large numbers} if 
\begin{align*}
    X_1+\cdots+X_n \geq_1 Y_1+\cdots+Y_n
\end{align*}
for all $n$ large enough. Using large-deviations techniques, it was shown by \cite{aubrun2008catalytic} that if $K_a(X) > K_a(Y)$ for all $a \in \overline{\R}$, then $X$ dominates $Y$ in large numbers. This implies Lemma~\ref{lemma:monotone} since, by the additivity of $\Phi$, $\Phi(X) \geq \Phi(Y)$ holds if and only if $n\Phi(X) = \Phi(X_1+\cdots+X_n) \geq \Phi(Y_1+\cdots+Y_n) = n\Phi(Y)$. 

 Compared to this alternative argument, our proof of Lemma~\ref{lemma:monotone} based on Theorem~\ref{thm:marginal} is self-contained and more elementary. More importantly, (an analogue of) the catalytic stochastic order established in Theorem~\ref{thm:marginal} is essential for studying monotone additive statistics defined on a domain of unbounded random variables, for which the large numbers order is difficult to characterize as far as we know.\footnote{One particular challenge is that the large numbers order require a uniform comparison between the tail probabilities of $X_1 + \cdots + X_n$ versus those of $Y_1 + \cdots + Y_n$, for a fixed large $n$. For a given threshold of the tail, large-deviations theory can be used to show the desired comparison when $n$ is large enough. But making the required $n$ uniform across all thresholds becomes nontrivial when the random variables $X$ and $Y$ are unbounded.} This generalization of Theorem~\ref{thm:marginal} is presented in Lemma \ref{lemma:marginal-unbounded} in the online appendix, as a key step toward the proof of Theorem \ref{thm:domains}. 

\clearpage

\bibliography{refs}

\clearpage

\appendix
\begin{center}
    {\Large \textbf{Appendix}}
\end{center}

\bigskip
The appendix contains the omitted proofs for most of the results in the main text, in the order in which they appeared. The only exceptions are Theorem~\ref{thm:domains} regarding the larger domain $L_M$, Proposition \ref{prop:time-lotteries-strong} regarding strong stochastic dynamic consistency and a few results in Section \ref{sec:monetary}, whose proofs are relegated to the online appendix. 


Throughout the proofs we will often use the notation $K_X(a) = K_a(X)$, so that $K_X$ is a map from $\overline{\R}$ to $\R$. The following facts are standard: 
\begin{lemma}
\label{lemma:L}
Let $X, Y \in L^\infty$.
\begin{enumerate}
    \item $K_X \colon \overline{\R} \to \R$ is well defined, non-decreasing and continuous. 
    \item If $K_X=K_Y$ then $X$ and $Y$ have the same distribution.
\end{enumerate}
\end{lemma}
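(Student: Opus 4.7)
The plan is to establish part (1) by verifying well-definedness, monotonicity, and continuity of $a \mapsto K_a(X)$ separately on $\R \setminus \{0\}$, at $a = 0$, and at $a = \pm\infty$, and then to obtain part (2) from the classical uniqueness of the moment generating function for a bounded random variable. Fix $N$ with $|X| \leq N$ almost surely, and set $\psi(a) := \log \E{\ee^{aX}}$. Since $\ee^{-|a|N} \leq \ee^{aX} \leq \ee^{|a|N}$, the quantity $\psi(a)$ is finite and $\psi(0) = 0$, so $K_a(X) = \psi(a)/a$ is a real number for every $a \in \R \setminus \{0\}$, while $K_0(X)$, $K_\infty(X)$ and $K_{-\infty}(X)$ are defined explicitly in the paper as the expectation, essential maximum, and essential minimum of $X$. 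Hölder's inequality makes $\phi(a) := \E{\ee^{aX}}$ log-convex, so $\psi$ is convex; since $\psi(0) = 0$, the standard chord-slope property of convex functions through the origin makes $a \mapsto \psi(a)/a$ non-decreasing on $\R \setminus \{0\}$.

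For continuity, dominated convergence (with a local dominator of the form $\ee^{(|a|+1)N}$) gives continuity of $\psi$, and hence of $a \mapsto \psi(a)/a$, on $\R \setminus \{0\}$. At $a = 0$, the expansion $\ee^{aX} = 1 + aX + O(a^2)$ gives $\psi(a) = a\E{X} + O(a^2)$, so $\psi(a)/a \to \E{X} = K_0(X)$, which simultaneously pins down continuity and monotonicity across $a = 0$. For the endpoint $a \to \infty$, let $M$ denote the essential supremum of $X$; the bound $\E{\ee^{aX}} \leq \ee^{aM}$ gives $K_a(X) \leq M$, while for any $\eps > 0$ the strict positivity $p_\eps := \Pr{X \geq M - \eps} > 0$ yields $\E{\ee^{aX}} \geq p_\eps \ee^{a(M - \eps)}$, so $K_a(X) \geq (M - \eps) + a^{-1}\log p_\eps$ and thus $\liminf_{a \to \infty} K_a(X) \geq M - \eps$; letting $\eps \downarrow 0$ yields $\lim_{a \to \infty} K_a(X) = M = K_\infty(X)$. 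The case $a \to -\infty$ is symmetric, and, combined with $K_a(X) \in [K_{-\infty}(X), K_\infty(X)]$ for all $a \in \R$, this completes monotonicity and continuity on $\overline{\R}$.

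Finally, for part (2), the hypothesis $K_X = K_Y$ forces $\E{\ee^{aX}} = \E{\ee^{aY}}$ for every $a \in \R$. Since $X$ and $Y$ are bounded, the two moment generating functions extend to entire functions of $a \in \C$ that agree on the real axis, hence everywhere; in particular their characteristic functions $t \mapsto \E{\ee^{itX}}$ and $t \mapsto \E{\ee^{itY}}$ coincide, so by the uniqueness theorem for characteristic functions $X$ and $Y$ have the same distribution. The only step requiring any real care is the endpoint behavior at $a = \pm\infty$, which rests on the two-sided Laplace-type estimate above; the remaining pieces are direct invocations of Hölder's inequality, dominated convergence, Taylor expansion, and the uniqueness theorem for moment generating functions of bounded random variables.
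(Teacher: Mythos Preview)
Your argument is correct and complete. The paper itself does not give a proof but simply refers the reader to \cite{curtiss1942note}; you have instead supplied a self-contained elementary derivation. The convexity-of-$\psi$/secant-slope argument for monotonicity, the dominated-convergence and Taylor arguments for continuity at finite $a$, the two-sided Laplace estimate at $a=\pm\infty$, and the analytic-continuation/characteristic-function argument for uniqueness are all standard and correctly executed. Compared with the paper's approach of delegating to a reference, your write-up has the advantage of being self-contained and transparently elementary, at the cost of a paragraph of routine verification; neither approach offers anything the other does not in terms of strength of conclusion.
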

\begin{proof}
 Over $\R$ the map $K_X$ is continuous and non-decreasing. This follows directly from the fact that $K_X(a)$ is the certainty equivalent of a CARA expected utility preference with coefficient of risk aversion equal to $-a$. That $\lim_{a \to \infty}K_X(a) = \max[X]$ and $\lim_{a \to -\infty}K_X(a) = \min[X]$  follow from a simple application of Laplace's method. It is a standard fact that $K_X=K_Y$ implies that $X$ and $Y$ have the same distribution \citep[see for instance][]{curtiss1942note}.
\end{proof}

\section{Proof of Theorem~\ref{thm:main}}\label{appx:main}

We follow the proof outlined in \S\ref{sec:proof-sketch} of the main text and first establish Theorem~\ref{thm:marginal}. 

\subsection{Proof of Theorem~\ref{thm:marginal}}

First, we can add the same constant $b$ to both $X$ and $Y$ so that $\min[Y+b] = -N$ and $\max[X+b] = N$ for some $N > 0$. Since translating both $X$ and $Y$ leaves the existence of an appropriate $Z$ unchanged (and also does not affect $K_X > K_Y$), we henceforth assume without loss of generality that $\min[Y] = -N$, and $\max[X] = N$. Since $K_X > K_Y$, we know that $\min[X] > -N$ and $\max[Y] < N$.

Denote the c.d.f.s of $X$ and $Y$ by $F$ and $G$, respectively. Let $\sigma(x) = G(x)-F(x)$. Note that $\sigma$ is supported on   $[-N,N]$ and bounded in absolute value by $1$. Moreover, by choosing $\eps > 0$ sufficiently small, we have that $\min[X] > -N+\eps$ and $\max[Y] < N-\eps$. So $\sigma(x)$ is positive on $[-N, -N+\eps]$ and on $[N-\eps, N]$. In fact, there exists $\delta > 0$ such that $\sigma(x) \geq \delta$ whenever $x \in [-N+\frac{\eps}{4}, -N+\frac{\eps}{2}]$ and $x \in [N-\frac{\eps}{2}, N-\frac{\eps}{4}]$. We also fix a large constant $A$ such that 
\begin{align*}
    \ee^{\frac{\eps A}{4}} \geq \frac{8N}{\eps \delta}.
\end{align*}
  
Define
$$
M_\sigma(a) = \int_{-N}^N \sigma(x)\ee^{ax}\,\dd x.
$$
Note that for $a \neq 0$, integration by parts shows $M_\sigma(a) = \frac{1}{a}\left(\E{ \ee^{aX}}-\E{\ee^{aY}} \right)$, and that $M_\sigma(0) = \E{X}-\E{Y}$. Therefore, since $K_X > K_Y$, we have that $M_\sigma$ is strictly positive everywhere. Since $M_\sigma(a)$ is clearly continuous in $a$, it is in fact bounded away from zero on any compact interval.

We will use these properties of $\sigma$ to construct a truncated Gaussian density $h$ such that 
\[
    [\sigma * h](y) = \int_{-N}^{N} \sigma(x) h(y-x) \,\dd x\geq 0
\]
for each $y \in \R$. If we let $Z$ be a random variable independent from $X$ and $Y$, whose distribution has density function $h$, then $\sigma * h = (G-F) * h$ is the difference between the c.d.f.s of $Y+Z$ and $X+Z$. Thus $[\sigma * h](y) \geq 0$ for all $y$ would imply $X + Z \geq_1 Y + Z$.

To do this, we write $h(x) = \ee^{-\frac{x^2}{2V}}$ for all $\vert x \vert \leq T$, where $V$ is the variance and $T$ is the truncation point to be chosen.\footnote{In general we need a normalizing factor to ensure $h$ integrates to one, but this multiplicative constant does not affect the argument.} We will show that given the above constants $N$ and $A$, $[\sigma * h](y)\geq 0$ holds for each $y$ when $V$ is sufficiently large and $T \geq AV + N$ . 
 
First consider the case where $y \in [-AV, AV]$. In this region, $\vert y-x \vert \leq T$ is automatically satisfied when $x \in [-N, N]$. So we can compute the convolution $\sigma * h$ as follows:
\begin{align}\label{eq:convo}
    \int \sigma(x) h(y-x) \,\dd x = \ee^{-\frac{y^2}{2V}} \cdot \int_{-N}^{N} \sigma(x) \cdot
    \ee^{\frac{y}{V} \cdot x} \cdot \ee^{-\frac{x^2}{2V}} \,\dd x.
\end{align}
Note that $\frac{y}{V}$ in the exponent belongs to the compact interval $[-A, A]$. So for our fixed choice of $A$, the integral  $M_\sigma(\frac{y}{V})=\int_{-N}^{N} \sigma(x) \cdot \ee^{\frac{y}{V} \cdot x} \,\dd x$ is uniformly bounded away from zero when $y$ varies in the current region. Thus, 
\begin{equation}\label{eq:convo1}
\begin{split}
\int_{-N}^{N} \sigma(x) \cdot
    \ee^{\frac{y}{V} \cdot x} \cdot \ee^{-\frac{x^2}{2V}}\,\dd x &=  M_\sigma\left(\frac{y}{V}\right) - \int_{-N}^{N} \sigma(x) \cdot
    \ee^{\frac{y}{V} \cdot x} \cdot (1-\ee^{-\frac{x^2}{2V}})\,\dd x \\
    &\geq M_\sigma\left(\frac{y}{V}\right) - 2N \cdot \ee^{AN} \cdot (1- \ee^{\frac{-N^2}{2V}}),
\end{split}
\end{equation}
which is positive when $V$ is sufficiently large. So the right-hand side of \eqref{eq:convo} is positive. 

Next consider the case where $y \in (AV, T+N-\eps]$; the case where $-y$ is in this range can be treated symmetrically. Here the convolution can be written as
\[
    [\sigma * h](y) = \int_{\max\{-N, y - T\}}^{N} \sigma(x) \cdot \ee^{\frac{-(y-x)^2}{2V}}
    \,\dd x.
\]
We break the range of integration into two sub-intervals: $I_1 = [\max\{-N, y - T\}, N-\eps]$ and $I_2 = [N-\eps, N]$. On $I_1$ we have $\sigma(x) = G(x) - F(x) \geq -1$. As long as $AV \geq N - \varepsilon$, we have $\ee^{\frac{-(y-x)^2}{2V}} \leq \ee^{\frac{-(y-N+\eps)^2}{2V}}$ for $y > AV$ and $x \leq N-\varepsilon$, and thus
\[
   \int_{x \in I_1} \sigma(x) \cdot \ee^{\frac{-(y-x)^2}{2V}} \,\dd x  \geq -2N \cdot \ee^{\frac{-(y-N+\eps)^2}{2V}}.
\]
On $I_2$ we have $\sigma(x) \geq 0$ by our choice of $\eps$, and furthermore $\sigma(x) \geq \delta$ when $x \in [N-\frac{\eps}{2}, N-\frac{\eps}{4}]$. Thus
\[
    \int_{x \in I_2} \sigma(x) \cdot \ee^{\frac{-(y-x)^2}{2V}} \,\dd x \geq \frac{\eps}{4} \cdot \delta \cdot \ee^{\frac{-(y-N+\frac{\eps}{2})^2}{2V}} \geq 2N \cdot \ee^{\frac{-(y-N+\frac{\eps}{2})^2}{2V} - \frac{\eps A}{4}},
\]
where the second inequality holds by the choice of $A$. Observe that when $y > AV$ and $V$ is large, the exponent $\frac{-(y-N+\frac{\eps}{2})^2}{2V} - \frac{\eps A}{4}$ is larger than $\frac{-(y-N+\eps)^2}{2V}$. Summing the above two inequalities then yields the desired result that $[\sigma * h](y) \geq 0$.

Finally, if $y \in (T+N-\eps, T+N]$, then the range of integration
in computing $[\sigma * h](y)$ is from $x = y-T$ to $x = N$, where $\sigma(x)$ is always positive. So the convolution is positive. And if $y > T+N$, then clearly the convolution is zero. These arguments symmetrically apply to $-y \in (T+N-\eps, T+N]$ and $-y > T+N$. We therefore conclude that $[\sigma * h](y) \geq 0$ for all $y$, completing the proof.

\subsection{Integral Representation}

For fixed $X$, $K_X(a) = K_a(X)$ is a function of $a$, from $\overline{\R}$ to $\R$. Let $\mathcal{L}$ denote the set of functions $\{K_X:\, X \in L^\infty\}$. If $\Phi$ is a monotone additive statistic and $K_X = K_Y$, then $X$ and $Y$ have the same distribution and $\Phi(X) = \Phi(Y)$. Thus there exists some functional $F \colon \mathcal{L} \to \R$ such that $\Phi(X) = F(K_X)$. It follows from the additivity of $\Phi$ and the additivity of $K_a$ that $F$ is additive: $F(K_X+K_Y) = F(K_X)+F(K_Y)$.\footnote{We note that $\mathcal{L}$ is closed under addition. This is because $K_X + K_Y = K_{X'} + K_{Y'}$ whenever $X',Y'$ are independently distributed random variables with the same distribution as $X,Y$.
Such random variables $X', Y'$ exist as the probability space is non-atomic, see for example Proposition~9.1.11 in \cite{bogachev2007measure}. Thus, for $K_X,K_Y \in \mathcal{L}$ we can find $X', Y'$ so that $K_X + K_Y = K_{X'} + K_{Y'} = K_{X'+Y'} \in \mathcal{L}$. } Moreover, $F$ is monotone in the sense that $F(K_X) \geq F(K_Y)$ whenever $K_X \geq K_Y$ (i.e., $K_X(a) \geq K_Y(a)$ for all $a \in \overline{\R}$); this follows from Lemma~\ref{lemma:monotone} which in turn is proved by Theorem~\ref{thm:marginal} (see \S\ref{sec:proof-sketch} in the main text).

The rest of this proof is a functional analysis exercise analogous to the proof of Theorem 2 in \cite*{mu2021blackwell}, but for completeness we provide the details below. The main goal is to show that the monotone additive functional $F$ on $\mathcal{L}$ can be extended to a positive linear functional on the entire space of continuous functions $\mathcal{C}(\overline{\R})$. We first equip $\mathcal{L}$ with the sup-norm of $\mathcal{C}(\overline{\R})$ and establish a technical claim. 

\begin{lemma}\label{lemma:Lipschitz}
$F \colon \mathcal{L} \to \R$ is 1-Lipschitz: 
\begin{align*}
    |F(K_X) - F(K_Y)| \leq \Vert K_X - K_Y\Vert.
\end{align*}
\end{lemma}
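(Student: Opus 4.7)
The plan is to exploit the fact that translation of a random variable by a constant translates its cumulant generating function by the same constant, and then to apply the monotonicity of $F$ together with its additivity. Specifically, I would first verify the identity $K_{X+c} = K_X + c$ (where on the right $c$ denotes the constant function on $\overline{\R}$, which equals $K_c$) for any $X \in L^\infty$ and any constant $c \in \R$. This follows directly from the computation
\[
    K_a(X+c) \;=\; \tfrac{1}{a}\log \E{\ee^{a(X+c)}} \;=\; c + \tfrac{1}{a}\log \E{\ee^{aX}} \;=\; K_X(a) + c,
\]
with the usual extension at $a \in \{0,\pm\infty\}$.

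Next I would set $c = \|K_X - K_Y\|$, so that pointwise
\[
    K_{Y-c}(a) \;=\; K_Y(a) - c \;\le\; K_X(a) \;\le\; K_Y(a) + c \;=\; K_{Y+c}(a)
\]
for every $a \in \overline{\R}$. By the monotonicity of $F$ on $\mathcal{L}$ (which we have from Lemma~\ref{lemma:monotone} via the preceding functional construction), this implies
\[
    F(K_{Y-c}) \;\le\; F(K_X) \;\le\; F(K_{Y+c}).
\]

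Finally, I would use the additivity of $F$ together with the normalization $F(K_c) = \Phi(c) = c$ (valid because $\Phi$ is a statistic) to evaluate the outer terms: since $K_{Y+c} = K_Y + K_c$ lies in $\mathcal{L}$ (as noted in the footnote, by independence of $Y$ from the constant $c$), additivity gives $F(K_{Y+c}) = F(K_Y) + F(K_c) = F(K_Y) + c$, and likewise $F(K_{Y-c}) = F(K_Y) - c$. Substituting yields $|F(K_X) - F(K_Y)| \le c = \|K_X - K_Y\|$, as required.

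There is no real obstacle here; the only subtlety is to ensure that the sums $K_Y \pm K_c$ genuinely arise as $K_Z$ for some $Z \in L^\infty$ so that $F$ can be evaluated on them and its additivity applied, but this is immediate because the probability space is nonatomic and the random variable $Y \pm c$ is itself bounded.
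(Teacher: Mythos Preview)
Your proof is correct and follows essentially the same approach as the paper: both exploit the translation identity $K_{Z+c}=K_Z+c$, apply the monotonicity of $F$ coming from Lemma~\ref{lemma:monotone}, and then use additivity together with $F(K_c)=\Phi(c)=c$ to extract the constant. The only cosmetic difference is that you sandwich $F(K_X)$ between $F(K_{Y-c})$ and $F(K_{Y+c})$, whereas the paper shifts $X$ instead and handles the two inequalities separately.
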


\begin{proof}
Let $\Vert K_X - K_Y\Vert = \eps$. Then $K_{X+\eps} = K_X + \eps \geq K_Y.$
Hence, by Lemma~\ref{lemma:monotone}, $F(K_Y) \leq F(K_{X+\eps})$, and so 
\begin{align*}
    F(K_Y) - F(K_X) \leq F(K_{X+\eps}) - F(K_X)= F(K_\eps) = \Phi(\eps) = \eps.
\end{align*}
Symmetrically we have $F(K_X) - F(K_Y) \leq \eps$, as desired.
\end{proof}

\begin{lemma}\label{lemma:extend}
Any monotone additive functional $F$ on $\mathcal{L}$ can be extended to a positive linear functional on $\mathcal{C}(\overline{\R})$. 
\end{lemma}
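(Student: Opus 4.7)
The plan is to extend $F$ in two steps: first to the linear span $\mathcal{M} := \mathcal{L} - \mathcal{L}$, then to all of $\mathcal{C}(\overline{\R})$ via a positive form of Hahn--Banach.

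First, define $\tilde F : \mathcal{M} \to \R$ by $\tilde F(K_X - K_Y) := F(K_X) - F(K_Y)$. To see this is well-defined, suppose $K_X - K_Y = K_{X'} - K_{Y'}$; then $K_X + K_{Y'} = K_{X'} + K_Y$, and since $\mathcal{L}$ is closed under addition (as noted in the footnote preceding Lemma~\ref{lemma:Lipschitz}), additivity of $F$ on $\mathcal{L}$ yields $F(K_X) + F(K_{Y'}) = F(K_{X'}) + F(K_Y)$, as required. Additivity of $\tilde F$ on $\mathcal{M}$ follows directly from the definition. Using Lemma~\ref{lemma:Lipschitz} one obtains $|\tilde F(K_X - K_Y)| = |F(K_X) - F(K_Y)| \leq \|K_X - K_Y\|_\infty$, so $\tilde F$ is $1$-Lipschitz, hence bounded, on $\mathcal{M}$. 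An additive continuous map between normed real vector spaces is automatically $\R$-linear, so $\tilde F$ is a bounded linear functional on $\mathcal{M}$.

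Next, $\tilde F$ is positive: if $h = K_X - K_Y \geq 0$, i.e.\ $K_X \geq K_Y$ pointwise on $\overline{\R}$, then Lemma~\ref{lemma:monotone} gives $F(K_X) \geq F(K_Y)$, so $\tilde F(h) \geq 0$. Moreover, $\mathcal{M}$ contains every constant function, since the constant random variable $c$ satisfies $K_c \equiv c$. In particular, for every $g \in \mathcal{C}(\overline{\R})$ the constant function $\|g\|_\infty$ lies in $\mathcal{L} \subseteq \mathcal{M}$ and pointwise dominates $g$, so $\mathcal{M}$ is a majorizing subspace of the ordered vector space $\mathcal{C}(\overline{\R})$.

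Finally, we invoke the classical Krein extension theorem: a positive linear functional on a majorizing subspace of an ordered vector space extends to a positive linear functional on the whole space. Applied here, this yields the desired positive linear extension of $\tilde F$ to $\mathcal{C}(\overline{\R})$. The conceptual point to be careful about is that $\mathcal{L}$ is only a convex cone under addition and not itself a vector space (it is not closed under multiplication by non-integer positive scalars, since $\alpha K_X$ is typically not of the form $K_{X'}$), so one must pass to the linear span $\mathcal{L} - \mathcal{L}$ before invoking the extension theorem; this is the only slightly nontrivial structural step in the argument.
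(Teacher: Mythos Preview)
There is a genuine gap in the argument. You correctly observe that $\mathcal{L}$ is not closed under multiplication by non-integer positive scalars, but then you assert that $\mathcal{M} := \mathcal{L} - \mathcal{L}$ is the ``linear span'' and treat it as a vector subspace of $\mathcal{C}(\overline{\R})$. It is not: since $\mathcal{L}$ is merely an additive semigroup (closed under addition, hence under positive \emph{integer} scalars only), $\mathcal{L} - \mathcal{L}$ is only an additive subgroup of $\mathcal{C}(\overline{\R})$, not a linear subspace. For example, if $X$ is a nondegenerate Bernoulli variable then $\tfrac{1}{2}K_X$ is not of the form $K_{X'}$ (Bernoulli is not $2$-divisible), and there is no reason to expect it to lie in $\mathcal{L} - \mathcal{L}$ either. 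Consequently the step ``an additive continuous map between normed real vector spaces is automatically $\R$-linear'' does not apply---that argument requires the domain to be closed under rational scalar multiplication so that one can bootstrap from $\mathbb{Z}$-linearity to $\mathbb{Q}$-linearity to $\R$-linearity via continuity. And the Krein/Kantorovich extension theorem you invoke at the end likewise requires a genuine vector subspace.

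The paper's proof addresses exactly this point. It first extends $F$ by homogeneity to the rational cone $\mathrm{Cone}_{\mathbb{Q}}(\mathcal{L}) = \{qL : q \in \mathbb{Q}_+,\, L \in \mathcal{L}\}$, checking that $G(qL) := qF(L)$ is well-defined, additive, monotone, and still $1$-Lipschitz. It then uses Lipschitz continuity to pass to the closure, which contains the full real cone $\mathrm{Cone}(\mathcal{L})$. Only after this does it form the difference $\mathcal{V} = \mathrm{Cone}(\mathcal{L}) - \mathrm{Cone}(\mathcal{L})$, which \emph{is} a bona fide linear subspace (since $\mathrm{Cone}(\mathcal{L})$ is closed under positive real scalars), and applies the Kantorovich extension theorem there. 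Your two-step outline (go to a difference set, then apply positive Hahn--Banach) is the right shape, but the intermediate passage through rational and then real scalars is not a technicality you can skip---it is precisely what turns the additive subgroup into a vector space on which the extension theorem is available.
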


\begin{proof}
First consider the rational cone spanned by $\mathcal{L}$:
$$
    \text{Cone}_{\mathbb{Q}}(\mathcal{L}) = \{q L : q \in \mathbb{Q}_+, L \in \mathcal{L} \}.
$$
Define $G \colon\text{Cone}_{\mathbb{Q}}(\mathcal{L}) \to \R$ as $G(q L) = q F(L)$, which is an extension of $F$. The functional $G$ is well defined: If $\frac{m}{n} K_1 = \frac{r}{n} K_2$ for $K_1,K_2 \in \mathcal{L}$ and $n,m,r \in \mathbb{N}$, then, using the fact that $\mathcal{L}$ is closed under addition, we obtain $mF(K_1) = F(mK_1) = F(r K_2) = r F(K_2)$, hence $\frac{m}{n} F(K_1) = \frac{r}{n} F(K_2)$. $G$ is also additive, because
\[
    G\left(\frac{m}{n} K_1 \right ) + G\left(\frac{r}{n} K_2\right) = \frac{m}{n} F\left(K_1\right) + \frac{r}{n} F\left( K_2\right) = \frac{1}{n}F(mK_1 + rK_2) = G\left(\frac{m}{n} K_1 + \frac{r}{n} K_2\right).
\]
In the same way we can show $G$ is positively homogeneous over $\mathbb{Q}_{+}$ and monotone. 

Moreover, $G$ is Lipschitz: Lemma~\ref{lemma:Lipschitz} implies
\[
    \left\vert G\left(\frac{m}{n} K_1 \right) - G\left(\frac{r}{n} K_2 \right) \right \vert = \frac{1}{n} \left\vert F(mK_1) - F\left(rK_2 \right) \right\vert \leq \frac{1}{n} \left\Vert mK_1 - rK_2 \right\Vert =  \left\Vert \frac{m}{n}K_1 - \frac{r}{n}K_2 \right\Vert.
\]
Thus $G$ can be extended to a Lipschitz functional $H$ defined on the closure of $\text{Cone}_{\mathbb{Q}}(\mathcal{L})$ with respect to the sup norm. In particular, $H$ is defined on the convex cone spanned by $\mathcal{L}$:
\[
    \text{Cone}(\mathcal{L}) = \{\lambda_1 K_1 + \cdots + \lambda_k K_k : k \in \mathbb{N} \text{ and for each } 1\leq i \leq k, \lambda_i \in \R_+, K_i \in \mathcal{L}\}.
\]
It is immediate to verify that the properties of additivity, positive homogeneity (now over $\R_+$), and monotonicity extend, by continuity, from $G$ to $H$.

Consider the vector subspace $\mathcal{V} = \text{Cone}(\mathcal{L}) - \text{Cone}(\mathcal{L}) \subset \mathcal{C}(\overline{\R})$ and define $I \colon\mathcal{V} \to \R$ as
$$
    I(g_1  - g_2) = H(g_1) - H(g_2)
$$
for all $g_1, g_2 \in \text{Cone}(\mathcal{L})$. The functional $I$ is well defined and linear (because $H$ is additive and positively homogeneous). Moreover, by monotonicity of $H$, $I(f) \geq 0$ for any non-negative function $f \in \mathcal{V}$.

The lemma then follows from the next theorem of \cite{kantorovich1937moment}, a generalization of the Hahn-Banach Theorem. It applies not only to $\mathcal{C}(\overline{\R})$ but to any Riesz space \citep[see Theorem 8.32 in][]{guide2006infinite}. 


\begin{theorem*}
Let $\mathcal{V}$ be a vector subspace of $\mathcal{C}(\overline{\R})$ with the property that for every $f \in \mathcal{C}(\overline{\R})$ there exists a function $g \in \mathcal{V}$ such that $g \geq f$. Then every positive linear functional on $\mathcal{V}$ extends to a positive linear functional on $\mathcal{C}(\overline{\R})$.
\end{theorem*}

The ``majorization'' condition $g \geq f$ is satisfied because every function in $\mathcal{C}(\overline{\R})$ is bounded and $\mathcal{V}$ contains all of the constant functions.
\end{proof}

The integral representation in Theorem~\ref{thm:main} now follows from Lemma~\ref{lemma:extend} by the Riesz-Markov-Kakutani Representation Theorem.

\subsection{Uniqueness of Mixing Measure}

We complete the proof of Theorem~\ref{thm:main} by showing that the mixing measure $\mu$ is unique: 

\begin{lemma}
\label{lemma:unique}
Suppose $\mu$ and $\nu$ are two Borel probability measures on $\overline{\R}$ such that 
\begin{align*}
    \int_{\overline{\R}}K_a(X)\,\dd\mu(a) = \int_{\overline{\R}}K_a(X)\,\dd\nu(a).
\end{align*}
for all $X \in L^{\infty}$.\footnote{The proof shows that it suffices to require such equality for non-negative integer-valued $X$.} Then $\mu = \nu$.
\end{lemma}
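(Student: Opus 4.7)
I would set $\eta := \mu - \nu$, a finite signed measure on $\overline{\R}$ with $\eta(\overline{\R}) = 0$, and prove $\eta = 0$ by showing separately that it vanishes on $(0, \infty]$, on $[-\infty, 0)$, and at $\{0\}$. The common thread is to translate $\int K_X\,\dd\eta = 0$ into integral identities against explicit continuous test functions on $\overline{\R}$, by letting the support point $n$ of Bernoulli-type variables $X_n \in L^\infty_{\mathbb{N}}$ tend to infinity and applying dominated convergence to $K_{X_n}/n$, which is uniformly bounded by $1$.

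For each fixed $p \in (0,1)$, let $X_n$ be Bernoulli on $\{0,n\}$ with mass $p$ at $n$. A direct computation shows $K_{X_n}/n$ converges pointwise on $\overline{\R}$ to the function equal to $1$ on $(0,\infty]$, to $p$ at the origin, and to $0$ on $[-\infty,0)$. Dominated convergence gives
\[
    0 \,=\, \int \frac{K_{X_n}}{n}\,\dd\eta \,\longrightarrow\, \eta((0,\infty]) + p\,\eta(\{0\}) \qquad \text{for every } p \in (0,1),
\]
forcing $\eta(\{0\}) = 0$ and $\eta((0,\infty]) = 0$.

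Next I would localize on the positive half-line. For each $c > 0$, take $\PP(X_n = n) = \ee^{-cn}$. A short asymptotic analysis gives $K_{X_n}/n \to F_c$ pointwise on $\overline{\R}$, where
\[
    F_c(a) \,=\, \left(1 - \frac{c}{a}\right)\mathbf{1}\{a > c\}
\]
is continuous on $\overline{\R}$ (with $F_c(\infty) = 1$ and $F_c(c) = 0$). DCT yields $\int F_c\,\dd\eta = 0$, i.e.\ $\eta((c,\infty]) = c \int_{(c,\infty]} (1/a)\,\dd\eta(a)$ for every $c > 0$. The difference quotient $(F_{c_1} - F_{c_2})/(c_2 - c_1)$ is uniformly bounded (by $1/c_1$) and converges pointwise, as $c_2 \to c_1^+$, to $(1/a)\mathbf{1}\{a > c_1\}$; crucially, because $F_c(c) = 0$, no atom of $\eta$ at $a = c_1$ contributes a spurious boundary term. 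Dominated convergence then gives $\int_{(c,\infty]} (1/a)\,\dd\eta(a) = 0$ for all $c > 0$; taking differences over $c_1 < c_2$ shows $\int_B (1/a)\,\dd\eta = 0$ on any interval $B \subset (0,\infty)$, which by positivity of $1/a$ forces $\eta \equiv 0$ on $(0,\infty)$. Feeding this back into the displayed identity gives $\eta(\{\infty\}) = 0$, so $\eta|_{(0,\infty]} \equiv 0$.

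Finally, reflection handles the negative half-line. For any $Y \in L^\infty_{\mathbb{N}}$ with $\max Y \le m$, the variable $X := m - Y$ also lies in $L^\infty_{\mathbb{N}}$, and a one-line calculation gives $K_X(a) = m - K_Y(-a)$. Combined with $\eta(\overline{\R}) = 0$, the hypothesis rewrites as $\int K_Y\,\dd\eta_{-} = 0$ for every $Y \in L^\infty_{\mathbb{N}}$, where $\eta_{-}$ is the pushforward of $\eta$ under $a \mapsto -a$. Applying the previous two steps to $\eta_{-}$ yields $\eta_{-}|_{(0,\infty]} \equiv 0$, i.e.\ $\eta|_{[-\infty,0)} \equiv 0$. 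Together with $\eta(\{0\}) = 0$, this establishes $\eta \equiv 0$ and hence $\mu = \nu$. The main subtlety is the difference-quotient argument in the third paragraph, where the identity $F_c(c) = 0$ is exactly what lets dominated convergence go through uniformly in $c$, independently of where $\eta$ may place atoms.
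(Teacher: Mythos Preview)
Your proof is correct and follows essentially the same route as the paper's: both use the Bernoulli variables $X_n$ with $\mathbb{P}(X_n=n)=\ee^{-cn}$, pass to the limit $K_{X_n}/n \to F_c(a)=(1-c/a)^+$ via dominated convergence, and then ``differentiate in $c$'' to recover the measure on $(0,\infty)$. The only differences are in execution: the paper rewrites $\int F_b\,\dd\mu$ via Tonelli as $\int_b^\infty \hat\mu([x,\infty))\,\dd x$ (with $\dd\hat\mu=(1/a)\,\dd\mu$) and takes a left derivative, whereas you work directly with $\eta=\mu-\nu$ and differentiate via an explicit difference-quotient DCT; and the paper isolates the atoms at $\pm\infty$ first and $\{0\}$ last, whereas your varying-$p$ trick and reflection argument handle these more compactly.
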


\begin{proof}
We first show $\mu(\{\infty\}) = \nu(\{\infty\})$. For any $\eps > 0$, consider the Bernoulli random variable $X_{\eps}$ that takes value $1$ with probability $\eps$ and value 0 with probability $1-\eps$. It is easy to see that as $\eps$ decreases to zero, $K_a(X_{\eps})$ also decreases to zero for each $a < \infty$ whereas $K_{\infty}(X_{\eps}) = \max[X_{\eps}] = 1$. Since $K_a(X_{\eps})$ is uniformly bounded in $[0,1]$, the Dominated Convergence Theorem implies 
\[
\lim_{\eps \to 0}  \int_{\overline{\R}}K_a(X_{\eps})\,\dd\mu(a) = \mu(\{\infty\}). 
\]
A similar identity holds for the measure $\nu$, so $\mu(\{\infty\}) = \nu(\{\infty\})$ follows from the assumption that $\int_{\overline{\R}}K_a(X_{\eps})\,\dd\mu(a) = \int_{\overline{\R}}K_a(X_{\eps})\,\dd\nu(a)$. 

We can symmetrically apply the above argument to the Bernoulli random variable that takes value 1 with probability $1-\eps$ and value $0$ with probability $\eps$. Thus $\mu(\{-\infty\}) = \nu(\{-\infty\})$ holds as well.

Next, for each $n \in \mathbb{N}_{+}$ and real number $b  > 0$, define a random variable $X_{n,b}$ by
\begin{align*}
    \Pr{X_{n,b}=n} &= \ee^{-bn}\\
    \Pr{X_{n,b}=0} &= 1-\ee^{-bn}.
\end{align*}
Then $K_a(X_{n,b}) = \frac{1}{a}\log\left[(1-\ee^{-bn}) + \ee^{(a-b)n}\right]$, and so
\begin{align*}
    \lim_{n \to \infty} \frac{1}{n}K_a(X_{n,b}) 
    &= \lim_{n \to \infty} \frac{1}{n}\frac{1}{a}\log\left[1 -  \ee^{-bn} + \ee^{(a-b)n}\right] \nonumber \\
    &=
    \begin{cases}
    0 & \text{if } a < b\\
    \frac{a-b}{a}& \text{if } a \geq b.
    \end{cases}
\end{align*}
This result holds also for $a = 0, \pm\infty$.

Note that $\frac{1}{n}K_a(X_{n,b})$ is uniformly bounded in $[0,1]$ for all values of $n, b, a$, since $K_a(X_{n,b})$ is bounded between $\min[X_{n,b}] = 0$ and $\max[X_{n,b}] = n$. Thus, by the Dominated Convergence Theorem, 
\begin{align}\label{eq:comparison1}
    \lim_{n \to \infty} \int_{\overline{\R}} \frac{1}{n}K_a(X_{n,b})\,\dd\mu(a)
    = \int_{[b,\infty]}\frac{a-b}{a}\,\dd\mu(a),
\end{align}
and similarly for $\nu$. It follows that for all $b > 0$,
\begin{align*}
    \int_{[b,\infty]}\frac{a-b}{a}\,\dd\mu(a) = \int_{[b,\infty]}\frac{a-b}{a}\,\dd\nu(a).
\end{align*}
As $\mu(\{\infty\}) = \nu(\{\infty\})$, we in fact have
\begin{align*}
    \int_{[b,\infty)}\frac{a-b}{a}\,\dd\mu(a) = \int_{[b,\infty)}\frac{a-b}{a}\,\dd\nu(a).
\end{align*}
This common integral is denoted by $f(b)$. 

We now define a measure $\hat{\mu}$ on $(0, \infty)$ by the condition $\frac{\dd \hat{\mu}(a)}{\dd \mu(a)} = \frac{1}{a}$; note that $\hat{\mu}$ is a positive measure, but need not be a probability measure. Then  
\begin{align*}
    f(b) = \int_{[b,\infty)}\frac{a-b}{a}\,\dd\mu(a) &= \int_{[b,\infty)}(a-b) \,\dd\hat{\mu}(a) 
    = \int_{b}^{\infty} \hat{\mu}([x, \infty)) \,\dd x, 
\end{align*}
where the last step uses Tonelli's Theorem. Hence $\hat{\mu}([b, \infty])$ is the negative of the left derivative of $f(b)$ (this uses the fact that $\hat{\mu}([b, \infty])$ is left continuous in $b$). In the same way, if we define $\hat{\nu}$ by $\frac{\dd \hat{\nu}(a)}{\dd \nu(a)} = \frac{1}{a}$, then $\hat{\nu}([b, \infty])$ is also the negative of the left derivative of $f(b)$. Therefore $\hat{\mu}$ and $\hat{\nu}$ are the same measure on $(0, \infty)$, which implies that $\mu$ and $\nu$ coincide on $(0, \infty)$. 

By a symmetric argument (with $n - X_{n,b}$ in place of $X_{n,b})$, we deduce that $\mu$ and $\nu$ also coincide on $(-\infty, 0)$. Finally, since they are both probability measures, $\mu$ and $\nu$ must have the same mass at $0$, if any. So $\mu = \nu$. 
\end{proof}

\section{Applications to Time Lotteries}

\subsection{Monotone Additive Statistics for Non-Negative Random Variables}

In our applications to time lotteries the random times are non-negative (bounded) random variables. We accordingly prove a version of Theorem~\ref{thm:main} that applies to this smaller domain.

\begin{proposition}\label{prop:non-negative}
 $\Phi \colon L^\infty_+ \to \R$ is a monotone additive statistic if and only if there exists a unique Borel probability measure $\mu$ on $\overline{\R}$ such that for every $X \in L^\infty$
\begin{align}
    \Phi(X) = \int_{\overline{\R}}K_a(X)\,\dd\mu(a).
\end{align}
 \end{proposition}
 \begin{proof}
 It suffices to show that a monotone additive statistic defined on $L^{\infty}_{+}$ can be extended to a monotone additive statistic defined on $L^{\infty}$. Suppose $\Phi$ is defined on $L^{\infty}_{+}$. Then for any bounded random variable $X$, we can define 
\[
\Psi(X) = \min[X] + \Phi(X - \min[X]), 
\]
where we note that $X - \min[X]$ is a non-negative random variable. 

Clearly $\Psi$ is a statistic that depends only on the distribution of $X$ (as $\Phi$ does), and $\Psi(c) = c + \Phi(0) = c$ for constants $c$. When $X$ is non-negative, the additivity of $\Phi$ gives $\Phi(X) = \Phi(\min[X]) + \Phi(X - \min[X]) = \min[X] + \Phi(X - \min[X])$, so $\Psi$ is an extension of $\Phi$. Moreover, $\Psi$ is additive because $\min[X+Y] = \min[X] + \min[Y]$, and $\Phi(X+Y-\min[X+Y]) = \Phi(X-\min[X]) + \Phi(Y-\min[Y])$ by the additivity of $\Phi$. Finally, to show $\Psi$ is monotone, suppose $X$ and $Y$ are bounded random variables satisfying $X \geq_1 Y$. Then we can choose a sufficiently large $n$ such that $X+n$ and $Y+n$ are both non-negative, and $X+n \geq_1 Y+n$. Since $\Phi$ is monotone for non-negative random variables, $\Phi(X+n) \geq \Phi(Y+n)$. Thus $\Psi(X+n) \geq \Psi(Y+n)$ by the fact that $\Psi$ extends $\Phi$, and $\Psi(X) \geq \Psi(Y)$ by the additivity of $\Psi$. This proves that $\Psi$ is a monotone additive statistic on $L^{\infty}$ that extends $\Phi$.
 \end{proof}

\subsection{Proof of Theorem~\ref{thm:time-lotteries}}

It is straightforward to check that the representation satisfies the axioms, so we focus on the other direction of deriving the representation from the axioms. In the first step, we fix any reward $x > 0$. Then by monotonicity in time and continuity, for each $(x, T)$ there exists a (unique) deterministic time $\Phi_x(T)$ such that $(x, \Phi_x(T)) \sim (x, T)$.
Clearly, when $T$ is a deterministic time, $\Phi_x(T)$ is simply $T$ itself. Note also that if $S$ first-order stochastically dominates $T$, then
\[
(x, \Phi_x(T)) \sim (x, T) \succeq (x, S) \sim (x, \Phi_x(S)),
\]
so that $\Phi_x(S) \geq \Phi_x(T)$. We next show that for any $T$ and $S$ that are independent, $\Phi_x(T + S) = \Phi_x(T) + \Phi_x(S)$. Indeed, by stochastic stationarity, $(x, \Phi_x(T)) \sim (x, T)$ implies $(x, \Phi_x(T) + S) \sim (x, T + S)$ and $(x, \Phi_x(S)) \sim (x, S)$ implies $(x, \Phi_x(T) + \Phi_x(S)) \sim (x, \Phi_x(T) + S)$. Taken together, we have 
\[
(x, \Phi_x(T) + \Phi_x(S)) \sim (x, T + S).
\]
Since $\Phi_x(T) + \Phi_x(S)$ is a deterministic time, the definition of $\Phi_x$ gives $\Phi_x(T) + \Phi_x(S) = \Phi_x(T+S)$ as desired. It follows that each $\Phi_x \colon L^\infty_+ \to \mathbb{R}$ is a monotone additive statistic.

In the second step, note that our preference $\succeq$ induces a preference on $\R_{++} \times \Rp$ consisting of deterministic dated rewards. By Theorem 2 in \cite{fishburn1982time}, for any given $r > 0$ we can find a continuous and strictly increasing utility function $u\colon\R_{++} \to \R_{++}$ such that for deterministic times $t, s \geq 0$
\[
(x, t) \succeq (y, s) \quad \text{if and only if} \quad u(x) \cdot \ee^{-rt} \geq u(y) \cdot \ee^{-rs}. 
\]
By definition, $(x, T) \sim (x, \Phi_x(T))$ for any random time $T$. Thus we obtain that the decision maker's preference is represented by 
\[
(x, T) \succeq (y, S) \quad \text{if and only if} \quad u(x) \cdot \ee^{-r\Phi_x(T)} \geq u(y) \cdot \ee^{-r\Phi_y(S)}.
\]

It remains to show that for all $x, y > 0$, $\Phi_{x}$ and $\Phi_{y}$ are the same statistic. For this we choose deterministic times $t$ and $s$ such that $(x, t) \sim (y, s)$, i.e., $u(x) \cdot \ee^{-r t} = u(y) \cdot \ee^{-r s}$.
For any random time $T$, stochastic stationarity implies $(x, t + T) \sim (y, s + T)$, so that
\[
u(x) \cdot \ee^{-r \Phi_{x}(t+T)} = u(y) \cdot \ee^{-r \Phi_{y}(s+T)}.
\]
Using the additivity of $\Phi_{x}$ and $\Phi_{y}$, we can divide the above two equalities and obtain $\Phi_{x}(T) = \Phi_{y}(T)$ as desired. Since this holds for all $T$ and all $x, y > 0$, we can write $\Phi_x(T) = \Phi(T)$ for a single monotone additive statistic $\Phi$. This completes the proof.

\subsection{Proof of Proposition~\ref{prop:pareto-impossibility}}

 Define, for every $t \geq 0$, $v_i(t) = \ee^{-a_i t}$ and $v(t) = \ee^{-a t}$. We have that for any two random times $S$ and $T$, $(1,S) \succeq_i (1,T)$ if and only if $\E{v_i(S)} \geq \E{v_i(T)}$, and $(1,S) \succeq (1,T)$ if and only if $\E{v(S)} \geq \E{v(T)}$. Thus it follows from the Pareto axiom that for any two random times $S$ and $T$, $\E{v_i(S)} \geq \E{v_i(T)}$ for all $i$ implies $\E{v(S)} \geq \E{v(T)}$.

 By Harsanyi's Theorem \citep[Theorem 2]{zhou1997harsanyi} there exist $(\lambda_i)$ in $\R_+$ and $c \in \R$ such that for every $t$, $v(t) = \sum_i \lambda_i v_i(t) + c$. By letting $t \to \infty$ we obtain $0 = c$ and by setting $t = 0$ it follows that $1 = \sum_i \lambda_i$. Further plugging in $t = 1$ and $t = 2$, we obtain 
 \[
    \sum_{i=1}^n \lambda_i\ee^{-2a_i } = \ee^{-2a} = \left(\ee^{-a}\right)^2 = \left(\sum_{i=1}^n \lambda_i \ee^{-a_i}\right)^2.
 \]
 But the Cauchy-Schwarz inequality gives 
 \[
  \sum_{i=1}^n \lambda_i \ee^{-2a_i } = \left(\sum_{i=1}^n \lambda_i \ee^{-2a_i }\right) \cdot \left(\sum_{i=1}^n \lambda_i\right) \geq \left(\sum_{i=1}^n \lambda_i \ee^{-a_i}\right)^2. 
 \]
 Thus equality holds. Since the individual discount rates $\{a_i\}$ are assumed to be distinct, the equality condition of the Cauchy-Schwarz inequality implies that exactly one $\lambda_i$ is nonzero (in fact equal to $1$), and hence $a = a_i$ for some agent $i$. 

 Without loss of generality suppose $a = a_1$. It remains to show that $u(x)$ is a constant multiple of $u_1(x)$ so that the social preference coincides with agent 1. Note that by the same argument as above, $v_1(t) = \ee^{-a_1 t}$ cannot be expressed as a linear combination of $1, v_2(t), v_3(t), \cdots, v_n(t)$  whenever $a_1$ is distinct from $a_2, \cdots, a_n$. So the contrapositive of Harsanyi's Theorem implies the existence of random times $S$ and $T$ such that $\E{v_i(S)} \geq \E{v_i(T)}$ for all $i > 1$ but $\E{v_1(T)} > \E{v_1(S)}$. In what follows we fix these particular $S$ and $T$, and also fix $\eps > 0$ sufficiently small so that $\E{v_1(T)} \geq (1+\eps) \E{v_1(S)}$. 

 For any pair of rewards $x, y \in \R_{++}$, we now show that the Pareto property implies $\frac{u(y)}{u_1(y)} = \frac{u(x)}{u_1(x)}$ which will complete the proof. To do this, let $k$ be a sufficiently large positive integer, and define $T^{\oplus k}$, $S^{\oplus k}$ to be the random variables obtained by adding $k$ independent copies of $T$ and $S$. Since the moment generating function $\E{\ee^{-\alpha Z}}$ is multiplicative when we add two independent random variables $Z_1$ and $Z_2$, our previous assumptions about $S$ and $T$ imply that $\E{\ee^{-a_iS^{\oplus k}}} \geq \E{\ee^{-a_iT^{\oplus k}}}$ for all $i > 1$ but $\E{\ee^{-a_1T^{\oplus k}}} \geq (1+\eps)^k \E{\ee^{-a_1S^{\oplus k}}}$. 
 
 Next, let $t_k \in \R$ be the number that satisfies 
 \[
 \ee^{-a_1 t_k} \cdot u_1(x) \E{\ee^{-a_1T^{\oplus k}}} = u_1(y) \E{\ee^{-a_1S^{\oplus k}}}. 
 \]
 Thus, the time lottery $(x, T^{\oplus k} + t_k)$ is indifferent to $(y, S^{\oplus k})$ for agent 1. At the same time, the above equality implies $e^{a_1t_k} \geq (1+\eps)^k \cdot \frac{u_1(x)}{u_1(y)}$, so that $\lim_{k\to \infty} t_k = \infty$. In particular, we deduce that for $k$ large, $\ee^{a_i t_k} \geq \frac{u_i(x)}{u_i(y)}$ for every $i > 1$ and thus
 \[
 \ee^{-a_i t_k} \cdot u_i(x) \E{\ee^{-a_iT^{\oplus k}}} \leq u_i(y) \E{\ee^{-a_iS^{\oplus k}}}. 
 \]
 Therefore $(x, T^{\oplus k} + t_k)$ is less preferred than $(y, S^{\oplus k})$ for every agent $i > 1$. 
 
 Putting together the above analysis, we can find $k$ and $t_k$ such that $(x, T^{\oplus k} + t_k)$ is weakly less preferred than $(y, S^{\oplus k})$ for every agent, with indifference for agent $1$. By the Pareto property, $(x, T^{\oplus k} + t_k)$ must be weakly less preferred than $(y, S^{\oplus k})$ under the social preference. That is, we must have 
 \[
 \ee^{-a t_k} \cdot u(x) \E{\ee^{-aT^{\oplus k}}} \leq u(y) \E{\ee^{-aS^{\oplus k}}}. 
 \]
 But we already know $\ee^{-a_1 t_k} \cdot u_1(x) \E{\ee^{-a_1T^{\oplus k}}} = u_1(y) \E{\ee^{-a_1S^{\oplus k}}}$ and $a = a_1$, so after dividing out $e^{-at_k}$, $\E{\ee^{-aT^{\oplus k}}}$ and $\E{\ee^{-aS^{\oplus k}}}$ we obtain $\frac{u(y)}{u_1(y)} \geq \frac{u(x)}{u_1(x)}$. 
 
 Finally, since $x, y$ are arbitrary, we can switch them and use the same argument to deduce the opposite inequality $\frac{u(x)}{u_1(x)} \geq \frac{u(y)}{u_1(y)}$. This proves that $\frac{u(y)}{u_1(y)} = \frac{u(x)}{u_1(x)}$ for any pair of rewards $x, y$. Hence the social utility representation is a constant multiple of agent 1's. 

\subsection{Proof of Proposition~\ref{prop:pareto-possibility}}

We prove that the proposed representation for the social preference relation $\succeq$ satisfies the Pareto axiom. If $(x, T) \succeq_i (y, S)$ for every $i$, then $u_i(x) \ee^{-r_i \Phi_i(T)} \geq u_i(y) \ee^{-r_i \Phi_i(S)}$, which can be rewritten as
\[
    r_i(\Phi_i(S) - \Phi_i(T)) \geq \log\frac{u_i(y)}{u_i(x)}. 
\]
Summing across $i$ using the weights $\lambda_i$ we obtain
\[
\sum_{i = 1}^{n} \lambda_ir_i(\Phi_i(S) - \Phi_i(T)) \geq \sum_{i=1}^{n} \lambda_i \log\frac{u_i(y)}{u_i(x)}  = \log\frac{u(y)}{u(x)},
\]
where the last equality uses $u = \Pi_{i=1}^{n} u_i^{\lambda_i}$. Since $r\Phi = \sum_{i=1}^n \lambda_i r_i\Phi_i$, it follows that $r(\Phi(S) - \Phi(T)) \geq \log\frac{u(y)}{u(x)}$, which is equivalent to $u(x) \ee^{-r \Phi(T)} \geq u(y) \ee^{-r \Phi(S)}$. Thus $(x, T) \succeq (y, S)$ as desired.

\subsection{Proof of Proposition~\ref{prop:pareto-characterization}}

We assume the Pareto axiom holds and deduce its implications. Note that if $\Phi_i(T) \leq \Phi_i(S)$ for every $i$, then $(1,T) \succeq_i (1,S)$ for every $i$ and thus, by the Pareto axiom, $(1, T) \succeq (1,S)$ and $\Phi(T) \leq \Phi(S)$ also hold. 

We say that a collection of monotone additive statistics $(\Phi_1,\ldots,\Phi_n,\Phi)$ have the \emph{Pareto property} if $\Phi_i(T) \leq \Phi_i(S)$ for every $i$ implies $\Phi(T) \leq \Phi(S)$. We have the following result: 

\begin{lemma}\label{lemma:aggregation}
Let $(\Phi_1,\ldots,\Phi_n,\Phi)$ be monotone additive statistics defined on $L^\infty_+$, and suppose that they satisfy the Pareto property. Then there exists a probability vector $(\beta_1, \dots, \beta_n)$ such that $\Phi = \sum_{i = 1}^n \beta_i \Phi_i$. 
\end{lemma}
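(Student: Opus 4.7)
The plan is to reduce the problem to a finite-dimensional Farkas argument applied to the vector statistic $\Psi(T)=(\Phi_1(T),\ldots,\Phi_n(T))\in\R^n$. First I would form the set of $\Psi$-differences
\[
B = \{\Psi(T)-\Psi(S) : T,S\in L^\infty_+\} \subseteq \R^n.
\]
Additivity of each $\Phi_i$ on independent copies (available since the probability space is nonatomic) together with swapping $T,S$ makes $B$ a $\Z$-submodule of $\R^n$, and because $\Phi_i(c)=c$ the diagonal line $\{c\mathbf{1}_n : c\in\R\}$ sits inside $B$.

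The key object is the functional $\phi\colon B\to\R$ defined by $\phi(\Psi(T)-\Psi(S))=\Phi(T)-\Phi(S)$. Well-definedness follows from the Pareto axiom applied in both directions: if $\Psi(T_1)+\Psi(S_2)=\Psi(T_2)+\Psi(S_1)$, then taking mutually independent copies and using additivity of each $\Phi_i$ yields $\Psi(T_1+S_2)=\Psi(T_2+S_1)$; two invocations of Pareto force $\Phi(T_1+S_2)=\Phi(T_2+S_1)$, which rearranges by additivity of $\Phi$ to $\Phi(T_1)-\Phi(S_1)=\Phi(T_2)-\Phi(S_2)$. The same Pareto reasoning shows $\phi(v)\geq 0$ whenever $v\geq 0$ in $\R^n$; additivity of $\phi$ is immediate; and since $\phi(c\mathbf{1}_n)=c$, sandwiching $-\|v\|_\infty\mathbf{1}_n\leq v\leq\|v\|_\infty\mathbf{1}_n$ via monotonicity yields the $1$-Lipschitz bound $|\phi(v)|\leq \|v\|_\infty$.

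Next I would extend $\phi$ to a linear functional on a subspace of $\R^n$. Being $\Z$-linear and $1$-Lipschitz, $\phi$ extends canonically (via $\phi(b/k)=\phi(b)/k$) to a $\Q$-linear, $1$-Lipschitz, monotone functional on the set of rational multiples of elements of $B$. Its closure $V$ in $\R^n$ is an additive subgroup closed under $\Q$-scaling, hence---by density of $\Q$ in $\R$---a genuine vector subspace, and continuous extension produces an $\R$-linear functional $\phi\colon V\to\R$ that remains non-negative on $V\cap\R^n_+$ (verified by perturbing any $v\in V\cap\R^n_+$ to $v+\eps\mathbf{1}_n$, approximating by non-negative elements of the rational span, and letting $\eps\to 0$). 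Applying the finite-dimensional Farkas lemma inside $V$, with constraint functionals taken to be the coordinate projections restricted to $V$, produces $\lambda_1,\ldots,\lambda_n\geq 0$ with $\phi(v)=\sum_i\lambda_iv_i$ on $V$. Taking $v=\Psi(T)$ (i.e.\ $S=0$) yields $\Phi(T)=\sum_i\lambda_i\Phi_i(T)$, and the normalization $\Phi(c)=\Phi_i(c)=c$ forces $\sum_i\lambda_i=1$.

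The main obstacle I anticipate is the clean passage from the $\Z$-module $B$ to a vector subspace of $\R^n$ on which Farkas' lemma is applicable. This passage depends crucially on the sup-norm Lipschitz estimate, which in turn is supplied by the presence of constants and monotonicity; these features promote $\Z$-linearity to $\R$-linearity via continuity and the density of $\Q$ in $\R$. Without them, one would be left with a $\Z$-homomorphism on an abelian subgroup and no linear structure against which to run a separation argument.
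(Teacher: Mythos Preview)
Your argument is correct and genuinely different from the paper's. The paper invokes the representation theorem (Theorem~\ref{thm:domains}) to pass to the measures $\mu_1,\ldots,\mu_n,\mu$ on $\overline{\R}$, then works in the infinite-dimensional space $\mathcal{C}(\overline{\R})$: it shows that the ``Pareto domain'' property propagates from $\mathcal{L}_+=\{K_X:X\in L^\infty_+\}$ to $\mathcal{L}$, to the rational cone, to its closure, and finally to the generated vector space $\mathcal{V}$, where a Farkas-type result (Corollary~5.95 in Aliprantis--Border) yields $I=\sum_i\lambda_iI_i$ as linear functionals. You bypass the representation theorem entirely by mapping everything into $\R^n$ via $\Psi(T)=(\Phi_1(T),\ldots,\Phi_n(T))$ and running the whole additivity/Lipschitz/closure/Farkas program there. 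This is more elementary---it uses only additivity, law-dependence, $\Phi(c)=c$, and the Pareto hypothesis, never monotonicity or the integral form---and the finite dimensionality of the target makes the functional-analytic steps trivial. The paper's route has the virtue of reusing machinery already built for Theorem~\ref{thm:main}; yours has the virtue of showing that the aggregation lemma is really a statement about additive Pareto-compatible maps into $\R^n$, with the structure of monotone additive statistics playing no role beyond supplying constants and additivity.
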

\begin{proof}
Let $(\mu_1,\ldots,\mu_n,\mu)$ be the mixing measures on $\overline{\R}$ that correspond to the monotone additive statistics $(\Phi_1,\ldots,\Phi_n,\Phi)$. Define the linear functionals $(I_1,\ldots,I_n,I)$ on $\mathcal{C}(\overline{\R})$ as $I_i(f) = \int_{\overline{\R}} f \dd \mu_i $ and $I(f) = \int_{\overline{\R}} f \dd \mu$.  

We call a set of functions $\mathcal{D} \subseteq \mathcal{C}(\overline{\R})$ a \textit{Pareto domain} if for every $f,g \in \mathcal{D}$,
\[
    I_i(f) \geq I_i(g)~~ i=1,\ldots,n \implies I(f) \geq I(g).
\]
The Pareto property implies $\mathcal{L}_+ = \{K_X : X \in L^\infty_+\}$ is a Pareto domain. 
Define, as in the proof of Theorem~\ref{thm:main}, $\mathcal{L} = \{K_X : X \in L^\infty\}$ as well as the rational cone spanned by $\mathcal{L}$:
$$
    \text{cone}_{\mathbb{Q}}(\mathcal{L}) = \{q L : q \in \mathbb{Q}_+, L \in \mathcal{L} \} = \bigcup_{n=1}^\infty \frac{1}{n} \mathcal{L}
$$

We show that $\mathcal{L}$ and $\textup{cone}_{\mathbb{Q}}(\mathcal{L})$ are both Pareto domains. Given $X,Y \in L^\infty$, let $c$ be a large positive constant such that $X + c \geq 0$ and $Y + c \geq 0$. If $I_i(K_X) \geq I_i(K_Y)$ for all $i$ then $I_i(K_X + c) \geq I_i(K_Y + c)$ for all $i$ since each $I_i$ is linear. Thus, by the Pareto property  and the linearity of $I$, $I(K_X + c) \geq I(K_Y + c)$ and $I(K_X) \geq I(K_Y)$. This shows $\mathcal{L}$ is a Pareto domain. As for $\textup{cone}_{\mathbb{Q}}(\mathcal{L})$, observe that $I_i(\frac{1}{m} K_X) \geq I_i(\frac{1}{n}K_Y)$ for all $i$ is equivalent to $I_i(nK_X) \geq I_i(mK_Y)$ for all $i$, which implies $I(nK_X) \geq I(mK_Y)$ since $\mathcal{L}$ is a Pareto domain and is closed under addition. This shows $I(\frac{1}{m} K_X) \geq I(\frac{1}{n}K_Y)$ as desired.

Next we show that the closure of $\textup{cone}_{\mathbb{Q}}(\mathcal{L})$ (with respect to the usual sup norm) is also a Pareto domain. Let $f, g$ be in the closure, such that $I_i(f) \geq I_i(g)$ for all $i$. Pick sequences $(f_k)$ and $(g_k)$ in $\textup{cone}_{\mathbb{Q}}(\mathcal{L})$ converging to $f$ and $g$. Define $\eps_{i,k} = \vert I_i(f) - I_i(f_k) \vert + \vert I_i(g) - I_i(g_k) \vert$ and $\eps_k = \max_{1 \leq i \leq n} \varepsilon_{i,k}$. Then from $I_i(f) \geq I_i(g)$ we deduce $I_i(f_k) \geq I_i(g_k) - \eps_k = I_i(g_k - \eps_k)$ for every $i$. Note that $g_k - \eps_k$ belongs to $\textup{cone}_{\mathbb{Q}}(\mathcal{L})$ since the latter contains all the constant functions and is closed under addition. Thus by the fact that $\textup{cone}_{\mathbb{Q}}(\mathcal{L})$ is a Pareto domain, $I_i(f_n) \geq I_i(g_n - \eps_n)$ for every $i$ implies $I(f_k) \geq I(g_k - \eps_k) = I(g_k) - \eps_k$ for every $k$. Continuity of the functionals $(I_i)$ yields $\eps_k \to 0$. Continuity of $I$ thus yields $I(f) \geq I(g)$.

This proves that the closure of $\textup{cone}_{\mathbb{Q}}(\mathcal{L})$ is a Pareto domain. Since the subset of a Pareto domain is a Pareto domain, we conclude that $\textup{cone}(\mathcal{L})$ (i.e.\ the cone generated by $\mathcal{L}$) is a Pareto domain as well.

Now define $\mathcal{V} = \textup{cone}(\mathcal{L}) - \textup{cone}(\mathcal{L})$ to be the vector space generated by the cone. It is immediate to verify, using the linearity of the integral, that $\mathcal{V}$ is a Pareto domain as well. In particular, for any $f \in \mathcal{V}$, $I_i(f) \leq 0$ for every $i$ implies $I(f) \leq 0$. Corollary 5.95 in \cite{guide2006infinite} thus implies there exist non-negative scalars $\beta_1,\ldots,\beta_n$ such that $I = \sum_{i=1}^n \beta_i I_i$ on $\mathcal{V}$. So $I(K_X) = \sum_{i=1}^n \beta_i I_i(K_X)$ for every $X \in L^{\infty}$, which implies $\Phi(X) = \sum_{i=1}^n \beta_i \Phi_i(X)$. For constant $X$ this implies $\sum_i \beta_i = 1$, proving the lemma.
\end{proof}

Thus, the Pareto axiom implies that the social certainty equivalent $\Phi$ must be a convex combination of the individual $\Phi_i$. To complete the proof, we restrict to the case of identical utility functions $u_i = u$ which additionally satisfies $\lim_{x \to 0}u(x) = 0$ or $\lim_{x \to \infty}u(x) = \infty$. In this case, in order for $u = \Pi_{i=1}^{n} u_i^{\lambda_i}$ to hold, the weights $\lambda_1, \dots, \lambda_n$ must sum to $1$. Therefore the desired identity $r\Phi = \sum_{i=1}^{n} \lambda_i r_i \Phi_i$ requires us to show that not only $\Phi$ is a convex combination of $(\Phi_i)$, but $r\Phi$ is also a convex combination of $(r_i\Phi_i)$.

To prove this, we make use of the Pareto axiom when applied to time lotteries with different rewards. For any $S, T \in L^{\infty}_{+}$, the Pareto axiom says that if rewards $x,y$ are such that $r_i\Phi_i(S) - r_i \Phi_i(T) \geq \log\left(u(y)/u(x)\right)$ for all $i$, then $r\Phi(S) - r\Phi(T) \geq \log\left(u(y)/u(x)\right)$ also holds. By the richness assumption on $u$, we can choose $x, y$ with
\[
\log\left(u(y)/u(x)\right) = \min_{1 \leq i \leq n} \{r_i\Phi_i(S) - r_i \Phi_i(T)\}.
\]
Therefore the Pareto axiom implies that for any $S, T\in L^{\infty}_{+}$,
\begin{equation}\label{eq:pareto-rewrite}
r\Phi(S) - r\Phi(T) \geq \min_{1 \leq i \leq n} \{r_i\Phi_i(S) - r_i \Phi_i(T)\}. 
\end{equation}

The conclusion that $r\Phi$ is a convex combination of $(r_i\Phi_i)$ will follow from the condition \eqref{eq:pareto-rewrite} via an application of Farkas' Lemma. To rewrite this condition in linear algebra form, we let $m \leq n$ be the largest number of different $\Phi_i$ that are linearly independent (when viewed as functions on $L^{\infty}_{+}$). Reordering if necessary, we can assume $\Phi_1, \dots, \Phi_m$ are linearly independent, and every $\Phi_i$ is a (not necessarily positive) linear combination of those $m$. Thus we can find vectors $\gamma^1, \dots, \gamma^n \in \R^m$ such that every $r_i\Phi_i$ can be rewritten as the following inner product (i.e., linear combination):
\[
r_i \Phi_i = \gamma^i \cdot (\Phi_1, \dots, \Phi_m).
\]
Since $\Phi$ is a convex combination of $(\Phi_i)$, there also exists $\gamma \in \R^m$ such that $r \Phi = \gamma \cdot (\Phi_1, \dots, \Phi_m)$. 

Consider the following set of vectors: 
\[
\mathcal{W} = \{w \in \R^m: ~\gamma \cdot w \geq \min_{1 \leq i \leq n} \gamma^i \cdot w \}. 
\]
Let $\mathcal{D}$ be all vectors of the form $(\Phi_1(S)- \Phi_1(T), \dots, \Phi_m(S)-\Phi_m(T))$ for some $S, T \in L^{\infty}_{+}$. Condition \eqref{eq:pareto-rewrite} says that $\mathcal{D} \subseteq \mathcal{W}$. Note that $-\mathcal{D} = \mathcal{D}$, and $\mathcal{D}$ is closed under addition because every $\Phi_i$ is additive. Moreover, since the definition of $\mathcal{W}$ involve homogeneous inequalities,
$\frac{1}{N} \mathcal{D} \subseteq \mathcal{W}$ for every positive integer $N$. From these properties we deduce that any vector of the form $q_1 w_1 + \dots + q_k w_k$ with $q_j \in \Q$ and $w_j \in \mathcal{D}$ belongs to $\mathcal{W}$, because it can be written as $\frac{1}{N} w$ for some positive integer $N$ and $w \in \mathcal{D}$. Since $\mathcal{W}$ is a closed set, the span of $\mathcal{D}$ (not just the rational span) is also contained in $\mathcal{W}$. Finally note that $\mathcal{D}$ spans the entirety of $\R^m$. This is because by setting $T = 0$, $\mathcal{D}$ in particular includes vectors of the form $(\Phi_1(S), \dots, \Phi_m(S))$, and such vectors cannot all belong to a lower-dimensional subspace by the assumption that $\Phi_1, \dots, \Phi_m$ are linearly independent. 

Therefore, $\mathcal{D} = \mathcal{W} = \R^m$, which implies 
\begin{equation}\label{eq:pareto-gamma}
\gamma \cdot w \geq \min_{1 \leq i \leq n} \gamma^i \cdot w \,\,\, \text{for all} \,\, w \in \R^m. 
\end{equation}
For any $\varepsilon > 0$, this condition implies that there exists no $w \in \R^m$ such that $-\gamma^i \cdot w \leq -1-\varepsilon$ for every $i$ while $\gamma \cdot w \leq 1$. Let $A$ be an $(n+1) \times m$ matrix whose first $n$ rows are $-\gamma^1, \dots, -\gamma^n$, and whose last row is $\gamma$. Let $b$ be the $n+1$-dimensional vector $(-1-\varepsilon, \dots, -1-\varepsilon, 1)$. Then $A w \leq b$ has no solution $w \in \R^m$. 

By Farkas' Lemma, there exists a non-negative $n+1$-dimensional vector $z = (z_1, \dots, z_{n+1})$ such that $z' A = 0$ while $z \cdot b < 0$. The former implies $z_{n+1} \gamma = z_1 \gamma^1 + \cdots + z_n \gamma^n$, while the latter implies $z_{n+1} < (1+\varepsilon)(z_1+\cdots + z_n)$. Note that $z_{n+1}$ cannot be zero, for otherwise we have a positive linear combination of $\gamma^1,\dots,\gamma^n$ that gives the zero vector, leading to the impossible implication that a positive linear combination of $\Phi_1, \dots, \Phi_n$ equals zero. 

Thus we can write $\gamma = \alpha_1\gamma^1 + \cdots + \alpha_n\gamma^n$, 
with non-negative weights $\alpha_i = \frac{z_i}{z_{n+1}}$ whose sum is greater than $\frac{1}{1+\varepsilon}$. Consequently $r\Phi = \sum_{i = 1}^{n} \alpha_i r_i \Phi_i$, which implies $r = \sum_{i=1}^{n} \alpha_i r_i$ and thus $\alpha_i \leq \frac{r}{r_i}$ in any such representation. Since $\varepsilon$ is arbitrary, a compactness argument then yields that $\gamma = \sum_{i = 1}^{n} \alpha_i \gamma^i$ for some non-negative weights $\alpha_i$ with $\sum_{i=1}^{n}\alpha_i \geq 1$. 

We can also choose $\hat{b} = (1- \varepsilon, \dots, 1 - \varepsilon, -1)$ and deduce from \eqref{eq:pareto-gamma} that $A w \leq \hat{b}$ has no solution $w \in \R^m$. Then a similar analysis yields $\gamma = \hat{\alpha}_1\gamma^1 + \cdots + \hat{\alpha}_n\gamma^n$ for some weights $\hat{\alpha}_i \geq 0$ and $\sum_{i=1}^{n} \alpha_i < \frac{1}{1-\epsilon}$. Again by compactness, we can assume $\sum_{i=1}^{n}\hat{\alpha}_i \leq 1$. Finally, by suitably averaging between $\alpha_i$ and $\hat{\alpha}_i$, we can find non-negative weights $(\lambda_i)$ whose sum is equal to $1$, such that $\gamma = \sum_{i=1}^{n} \lambda_i \gamma^{i}$. So $r\Phi = \sum_{i=1}^{n} \lambda_i r_i \Phi_i$. Since $\Phi$ is also a convex combination of $(\Phi_i)$, it follows that $r = \sum_i\lambda_i r_i$, completing the proof. 

\section{Proof of Theorem~\ref{thm:betweenness}}

Since the preference $\succeq$ is represented by $\Phi$, the betweenness axiom is equivalent to the following: 
\begin{align*}
    \Phi(X) = \Phi(Y) ~~\text{if and only if}~~ \Phi(X_{\lambda}Y) = \Phi(Y).
\end{align*}
In this case, we say that the statistic $\Phi$ satisfies betweenness. We need to show that $\Phi(X)$ satisfies betweenness if and only if it is equal to $K_a(X)$ for some $a \in \R$ or equal to $\beta K_{-a\beta}(X) + (1-\beta) K_{a(1-\beta)}(X)$ for some $\beta \in (0,1)$ and $a \in (0, \infty)$. 

We first show the ``if'' direction. Specifically, when $\Phi(X) = K_a(X)$ for some $a \in \R$, then the preference is CARA expected utility, which satisfies independence and thus betweenness. When $\Phi(X) = \beta K_{-a\beta}(X) + (1-\beta) K_{a(1-\beta)}(X)$, we can use the definition of $K$ to rewrite it as
\[
\Phi(X) = \frac{1}{a}\left(\log\mathbb{E}[\ee^{a(1-\beta)X}] - \log \mathbb{E}[\ee^{-a\beta X}]\right).
\]
Thus $\Phi(X) = \Phi(Y)$ if and only if $\log  \E{\ee^{a(1-\beta)X}} - \log\E{\ee^{-a\beta X}} = \log  \E{\ee^{a(1-\beta)Y}} - \log\E{\ee^{-a\beta Y}}$, which in turn is equivalent to
\[
\frac{\E{\ee^{a(1-\beta)X}}}{\E{\ee^{a(1-\beta)Y}}} = \frac{\E{\ee^{-a\beta X}}}{\E{\ee^{-a \beta Y}}}. 
\]
Since $\E{\ee^{bX_{\lambda}Y}} = \lambda  \E{\ee^{bX}} + (1-\lambda) \E{\ee^{bY}}$ for every $b \in \R$, it is not difficult to see that the above ratio equality holds if and only if it holds when $X$ is replaced by  $X_{\lambda}Y$. Hence $\Phi(X) = \Phi(Y)$ if and only if $\Phi(X_{\lambda}Y) = \Phi(Y)$, i.e.\ betweenness is satisfied. 

\medskip

Turning to the ``only if'' direction. We will characterize any monotone additive statistic $\Phi$ that satisfies a weaker form of betweenness:

\begin{lemma}\label{lemma:weak betweenness}
Suppose $\Phi$ is a monotone additive statistic such that $\Phi(X) = c$ implies $\Phi(X_{\lambda}c) = c$ whenever $c$ is a constant. Then either $\Phi$ takes the form described by Theorem~\ref{thm:betweenness}, or $\Phi(X) = \beta \min[X] + (1-\beta) \max[X]$ for some $\beta \in [0,1]$. 
\end{lemma}
This result implies Theorem~\ref{thm:betweenness} because $\Phi(X) = \beta \min[X] + (1-\beta) \max[X]$ violates the original betweenness axiom. To see that, let $X = 0$ and choose any $Y$ supported on $\pm 1$. Then $X_{\lambda} Y$ and $Y$ have the same minimum and maximum, so that $\Phi(X_{\lambda}Y) = \Phi(Y)$.  But $\Phi(X) = \Phi(Y)$ cannot hold for \emph{all} $Y$ supported on $\pm 1$. 

The proof of Lemma \ref{lemma:weak betweenness} is in turn based on the following lemma which further relaxes betweenness: 

\begin{lemma}\label{lemma:quasiconvex}
Suppose $\Phi(X) = \int_{\R} K_a(X) \, \dd \mu(a)$ has the property that $\Phi(X) = c$ implies $\Phi(X_{\lambda}c) \leq c$. Then the measure $\mu$ restricted to $[0, \infty]$ is either the zero measure, or it is supported on a single point. 
\end{lemma}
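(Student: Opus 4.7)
The plan is to argue by contradiction: assume $\mu|_{[0,\infty]}$ has two distinct support points $a_1 < a_2$, and exhibit a random variable of the form $X = X_{n,b}$ (the two-point family from the proof of Lemma~\ref{lemma:unique}) with $\Phi(X_\lambda c) > c$ for $c = \Phi(X)$ and some $\lambda \in (0,1)$. Fix $b \in (a_1, a_2)$ with $b > 0$ and $\mu(\{b\}) = 0$ (possible since atoms of $\mu$ are countable). Recalling $K_a(X_{n,b}) = \frac{1}{a}\log(1-\ee^{-bn}+\ee^{(a-b)n})$, one verifies that $K_a(X_{n,b})/n \in [0,1]$ uniformly in $(a,n)$ and that $K_a(X_{n,b})/n \to (1-b/a)\,\ind{a > b}$ pointwise on $\overline{\R}$. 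Dominated convergence then yields
\[
c_n := \Phi(X_{n,b}) \sim n\,F(b), \qquad F(b) := \int_{(b,\infty]}\Bigl(1-\tfrac{b}{a}\Bigr)\dd\mu(a) > 0,
\]
where positivity of $F(b)$ uses that $a_2 \in (b,\infty]$ is a support point of $\mu$.

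For fixed $\lambda \in (0,1)$, I would next analyze $\Phi\bigl((X_{n,b})_\lambda c_n\bigr)$ by identifying the dominant exponential in
\[
\E{\ee^{a\,(X_{n,b})_\lambda c_n}} = \lambda(1-\ee^{-bn}) + \lambda\ee^{(a-b)n} + (1-\lambda)\ee^{a c_n},
\]
for each $a \in \overline{\R}$, using $c_n \sim nF(b)$. The regime analysis yields $K_a/n \to 0$ for $a < 0$; $K_0/n \to (1-\lambda)F(b)$ (from the arithmetic mean); $K_a/n \to F(b)$ for $0 < a \leq b$ (the $(1-\lambda)\ee^{ac_n}$ term dominates); and $K_a/n \to \max\bigl(1-b/a,\,F(b)\bigr)$ for $a > b$ (since the growth rates $a-b$ and $aF(b)$ compete). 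These limits are uniformly bounded in $[0,1]$, so integrating against $\mu$ and subtracting $c_n/n \to F(b)$ gives
\[
\frac{\Phi\bigl((X_{n,b})_\lambda c_n\bigr) - c_n}{n} \;\longrightarrow\; (1-\lambda)F(b)\,\mu(\{0\}) + F(b)\,\mu\bigl((0,b)\bigr) + \int_{(b,\infty]}\Bigl(F(b)-\bigl(1-\tfrac{b}{a}\bigr)\Bigr)^{+}\dd\mu(a).
\]

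Every term on the right is non-negative, and since $a_1 \in [0,b)$ is a support point of $\mu|_{[0,\infty]}$ we have $\mu([0,b)) > 0$; hence either $\mu(\{0\}) > 0$ (making the first term strictly positive, as $\lambda < 1$) or $\mu((0,b)) > 0$ (making the second term strictly positive). Either way, $\Phi((X_{n,b})_\lambda c_n) > c_n$ for all large $n$, contradicting the hypothesis and forcing $\mu|_{[0,\infty]}$ to have at most one support point. The main technical obstacle is the regime analysis in the second step: one must correctly identify on $a > b$ whether the $\ee^{(a-b)n}$ tail or the $\ee^{anF(b)}$ constant dominates (giving the $\max$ in the limit), verify that $\mu$-mass on $(-\infty,0)$ washes out while $a=0$ contributes only $(1-\lambda)F(b)$, and control the pointwise convergence uniformly enough for dominated convergence to produce the claimed integral identity.
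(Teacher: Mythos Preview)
Your proof is correct and follows essentially the same approach as the paper: both use the two-point family $X_{n,b}$, compute $c_n/n \to \gamma = F(b)$, carry out the identical regime analysis for $K_a\bigl((X_{n,b})_\lambda c_n\bigr)/n$, and apply dominated convergence to conclude that the mixture has a strictly larger $\Phi$-value than $c_n$ whenever $\mu$ puts mass on $[0,b)$. The only cosmetic difference is packaging: the paper derives $\mu\bigl([0,\tfrac{b}{1-\gamma})\bigr)=0$ for every $b<N$ and then lets $b\to N$, while you fix $b$ between two hypothetical support points and obtain the contradiction directly.
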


\begin{proof}
It suffices to show that if $\mu$ puts positive mass on $(0, \infty]$, then that mass is supported on a single point and $\mu(\{0\}) = 0$. For this let $N > 0$ denote the essential maximum of the support of $\mu$; that is, $N = \min \{x:~ \mu((x, \infty]) = 0\}$. We allow $N = \infty$ when the support of $\mu$ is unbounded from above, or when $\mu$ has a non-zero mass at $\infty$. For any positive real number $b < N$, consider the same random variable $X_{n,b}$ as in the proof of Lemma~\ref{lemma:unique}, given by
\begin{align*}
    \Pr{X_{n,b}=n} &= \ee^{-bn}\\
    \Pr{X_{n,b}=0} &= 1-\ee^{-bn}.
\end{align*}
As shown in the proof of Lemma~\ref{lemma:unique}, $\frac{1}{n} K_a(X_{n,b})$ is uniformly bounded in $[0,1]$, and
\[
\lim_{n \to \infty} \frac{1}{n} K_a(X_{n,b}) = \frac{(a-b)^{+}}{a}.
\]
Thus if we let $c_n = \Phi(X_{n,b})$, then by the Dominated Convergence Theorem, 
\[
\lim_{n \to \infty} \frac{c_n}{n} = \lim_{n \to \infty} \frac{1}{n} \Phi(X_{n,b}) = \lim_{n \to \infty} \int_{\overline{\R}} \frac{1}{n} K_a(X_{n,b}) \, \dd \mu(a) = \int_{(b, \infty]} \frac{a-b}{a} \, \dd \mu(a).
\]
Denote $\gamma = \int_{(b, \infty]} \frac{a-b}{a} \, \dd \mu(a)$. This number $\gamma$ is strictly positive because $b < N$ implies $\mu((b,\infty]) > 0$. We can also assume $\gamma < 1$, since otherwise $\mu$ must be the point mass at $\infty$. 

Now, as $\Phi(X_{n,b}) = c_n$ we know by assumption that $\Phi(Y_{n,b}) \leq c_n$ for each $n$, where $Y_{n,b}$ is the mixture between $X_{n,b}$ and the constant $c_n$ (in what follows $\lambda$ is fixed as $n$ varies):
\begin{align*}
    \Pr{Y_{n,b}=n} &= \lambda\ee^{-bn}\\
    \Pr{Y_{n,b}=0} &= \lambda(1-\ee^{-bn}) \\
    \Pr{Y_{n,b}=c_n} &= 1-\lambda. 
\end{align*}
Using $\lim_{n \to \infty} c_n/n = \gamma$, we have
\begin{align*}
    \lim_{n \to \infty} \frac{1}{n}K_a(Y_{n,b}) 
    &= \lim_{n \to \infty} \frac{1}{n}\frac{1}{a}\log\left[\lambda\left(1 -  \ee^{-bn} + \ee^{(a-b)n}\right) + (1-\lambda) \ee^{a \cdot c_n}\right] \nonumber \\
    &=
    \begin{cases}
    0 & \text{if } a < 0\\
    (1-\lambda)\gamma & \text{if } a = 0 \\
    \gamma & \text{if } 0 < a < \frac{b}{1-\gamma} \\
    \frac{a-b}{a} & \text{if } a \geq \frac{b}{1-\gamma}.
    \end{cases}
\end{align*}
Note that the cutoff point $a = \frac{b}{1-\gamma}$ is where $a-b = a\gamma$. When $a$ is smaller than this, the dominant term in the bracketed sum above is $(1-\lambda) \ee^{a \cdot c_n}$. Whereas for larger $a$, the dominant term becomes $\lambda \ee^{(a-b) \cdot n}$.

Crucially, $\lim_{n \to \infty} \frac{1}{n}K_a(Y_{n,b}) \geq \frac{(a-b)^{+}}{a}$ holds for every $a$, with strict inequality for $a \in [0, \frac{b}{1-\gamma})$. Thus again by the Dominated Convergence Theorem, 
\[
    \lim_{n \to \infty} \frac{c_n}{n} \geq \lim_{n \to \infty} \frac{1}{n} \Phi(Y_{n,b}) = \lim_{n \to \infty} \int_{\overline{\R}} \frac{1}{n} K_a(Y_{n,b}) \, \dd \mu(a) \geq \int_{(b, \infty]} \frac{a-b}{a} \, \dd \mu(a).
\]
But we know that the far left is equal to the far right. So both inequalities hold equal, and in particular $\lim_{n \to \infty} \frac{1}{n} K_a(Y_{n,b}) = \frac{(a-b)^{+}}{a}$ holds $\mu$-almost surely. 

As discussed, $\lim_{n \to \infty} \frac{1}{n} K_a(Y_{n,b}) > \frac{(a-b)^{+}}{a}$ for any $a \in [0, \frac{b}{1-\gamma})$. So we can conclude that $\mu([0, \frac{b}{1-\gamma})) = 0$. This must hold for any $b \in (0, N)$ and corresponding $\gamma$. Letting $b$ arbitrarily close to $N$ thus yields $\mu([0, N)) = 0$ (since $\frac{b}{1-\gamma} > b$). It follows that when restricted to $[0, \infty]$ the measure $\mu$ is concentrated at the single point $N$, as we desire to show. 
\end{proof}

\begin{proof}[Proof of Lemma~\ref{lemma:weak betweenness}]
From Lemma~\ref{lemma:quasiconvex}, we know that the measure $\mu$ associated with $\Phi$ can only be supported on one point in all of $[0,\infty]$. By a symmetric argument, $\mu$ also has at most one point support in all of $[-\infty, 0]$. Thus either $\mu = \delta_{a}$ for some $a \in \overline{\R}$, or $\mu$ is supported on two points $\{a_1, a_2\}$ with $a_1 < 0 < a_2$. In the former case we are done, so below we study the latter case where $\mu$ has two-point support.

Suppose $\Phi(X) = \beta K_{a_1}(X) + (1-\beta) K_{a_2}(X)$ for some $\beta \in (0,1)$ and $a_1 < 0 < a_2$. If $a_1 = -\infty$ while $a_2 < \infty$, then $\Phi(X) = \beta \min[X] + (1-\beta) K_{a_2}(X)$. Take any non-constant $X$ and let $c$ denote $\Phi(X)$. Note that since $K_{a_2}(X) > \min[X]$, $c = \beta \min[X] + (1-\beta) K_{a_2}(X)$ lies strictly between $\min[X]$ and $K_{a_2}(X)$. Consider the mixture $X_{\lambda}c$, then $\min[X_{\lambda}c] = \min[X]$, whereas
\[
K_{a_2}(X_{\lambda}c) = \frac{1}{a_2} \log \left(\lambda \E{\ee^{a_2X}} + (1-\lambda) \ee^{a_2c}\right) < \frac{1}{a_2} \log \E{\ee^{a_2X}} = K_{a_2}(X),
\]
where the inequality uses $c < K_{a_2}(X) = \frac{1}{a_2} \log \E{\ee^{a_2X}}$ and $a_2 > 0$. 
We thus deduce that 
\[
\Phi(X_{\lambda}c) = \beta \min[X_{\lambda}c] + (1-\beta) K_{a_2}(X_{\lambda}c) < \beta \min[X] + (1-\beta) K_{a_2}(X) = c,
\]
contradicting the betweenness axiom. A symmetric argument rules out the possibility that $a_1 > -\infty$ while $a_2 = \infty$. 

Hence, either $a_1 = -\infty$ and $a_2 = \infty$, or $a_1 \in (-\infty, 0)$ and $a_2 \in (0, \infty)$. In the former case $\Phi(X)$ is an average of the minimum and the maximum, so we are again done. It remains to consider the latter case where $a_1, a_2$ are both finite. In this case we will show that $\beta = \frac{-a_1}{a_2-a_1}$. Once this is shown, we can let $a = a_2-a_1$ so that $a_1 = -a \beta$ and $a_2 = a(1-\beta)$. Thus $\Phi(X) = \beta K_{-a \beta}(X) + (1-\beta) K_{a(1-\beta)}(X)$ as desired. 

Let us take an arbitrary non-constant $X$, and let 
\[
    c = \Phi(X) = \frac{\beta}{a_1} \log \E{\ee^{a_1X}} + \frac{1-\beta}{a_2} \log \E{\ee^{a_2X}}.
\]
For an arbitrary $\lambda \in [0,1]$, we must also have
\begin{equation}\label{eq:betweenness1}
    c = \Phi(X_{\lambda}c) = \frac{\beta}{a_1} \log \E{\lambda\ee^{a_1X}+(1-\lambda)\ee^{a_1c}} + \frac{1-\beta}{a_2} \log \E{\lambda\ee^{a_2X}+(1-\lambda)\ee^{a_2c}}.
\end{equation}
Since \eqref{eq:betweenness1} holds for every $\lambda$, we can differentiate it with respect to $\lambda$ to obtain
\[
0 = \frac{\beta(\E{\ee^{a_1X}} - \ee^{a_1c})}{a_1 \E{\lambda\ee^{a_1X}+(1-\lambda)\ee^{a_1c}}} +  \frac{(1-\beta)(\E{\ee^{a_2X}} - \ee^{a_2c})}{a_2 \E{\lambda\ee^{a_2X}+(1-\lambda)\ee^{a_2c}}}.
\]
Plugging in $\lambda = 0$ and $\lambda = 1$ gives, respectively,
\begin{equation}\label{eq:betweenness2}
\frac{\beta(\E{\ee^{a_1X}} - \ee^{a_1c})}{a_1\ee^{a_1c}} = - \frac{(1-\beta)(\E{\ee^{a_2X}} - \ee^{a_2c})}{a_2\ee^{a_2c}}.
\end{equation}
\begin{equation}\label{eq:betweenness3}
\frac{\beta(\E{\ee^{a_1X}} - \ee^{a_1c})}{a_1\E{\ee^{a_1X}}} = -  \frac{(1-\beta)(\E{\ee^{a_2X}} - \ee^{a_2c})}{a_2\E{\ee^{a_2X}}}.
\end{equation}
Since $c = \beta K_{a_1}(X) + (1-\beta) K_{a_2}(X)$, the fact that $K_{a_2}(X) > K_{a_1}(X)$ implies $c$ is strictly between $K_{a_1}(X)$ and $K_{a_2}(X)$. Thus, using $a_1 < 0 < a_2$ we deduce $\ee^{a_1c} < \E{\ee^{a_1X}}$ and $\ee^{a_2c} < \E{\ee^{a_2X}}$. 

We can therefore divide \eqref{eq:betweenness2} by \eqref{eq:betweenness3} to obtain
\[
\frac{\E{\ee^{a_1X}}}{\ee^{a_1c}} = \frac{\E{\ee^{a_2X}}}{\ee^{a_2c}}.
\]
Plugging this back to \eqref{eq:betweenness2}, we conclude $\frac{\beta}{a_1} = - \frac{1-\beta}{a_2}$, so $\beta = \frac{-a_1}{a_2-a_1}$ as we desire to show. 
\end{proof}

\newpage
\begin{center}
    {\Large \textbf{Online Appendix}}
\end{center}

\section{Proof of Theorem~\ref{thm:domains}}

The proof is considerably more complex than the proof of Theorem \ref{thm:main}, so we break it into several steps below. 

\subsection{Step 1: Catalytic Order on $L_M$}

We first establish a generalization of Theorem~\ref{thm:marginal} to unbounded random variables. For two random variables $X$ and $Y$ with c.d.f.\ $F$ and $G$ respectively, we say that \emph{$X$ dominates $Y$ in both tails} if there exists a positive number $N$ with the property that 
\[
G(x) > F(x) \quad \text{ for all } \vert x \vert \geq N. 
\]
In particular, $X$ needs to be unbounded from above, and $Y$ unbounded from below. 

\begin{lemma}\label{lemma:marginal-unbounded}
Suppose $X, Y \in L_M$ satisfy $K_a(X) > K_a(Y)$ for every $a \in \R$. Suppose further that $X$ dominates $Y$ in both tails. Then there exists an independent random variable $Z \in L_M$ such that $X + Z \geq_1 Y + Z$. 
\end{lemma}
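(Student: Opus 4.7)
My plan is to adapt the proof of Theorem~\ref{thm:marginal} by taking $Z$ to be an \emph{untruncated} Gaussian $N(0,V)$ with $V$ large; since Gaussians have finite moment generating function everywhere, $Z \in L_M$. Writing $\sigma = G - F$ and letting $h$ denote the Gaussian density, it suffices to show that $[\sigma * h](y) \geq 0$ for every $y \in \R$, which is equivalent to $X + Z \geq_1 Y + Z$. I will split the real line into a central region $|y| \leq AV$ and two tail regions $y > AV$ and $y < -AV$, where $A$ will be chosen independently of $V$. Using right-continuity of the CDFs and the positivity of $\sigma$ on $\{|x| \geq N\}$, I will also locate $N_1, N_2 > N$ and $\delta, \delta' > 0$ with $\sigma \geq \delta$ on $[N_1, N_1 + \delta']$ and on $[-N_2 - \delta', -N_2]$.

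For the central region, setting $a = y/V \in [-A, A]$, I will write
\begin{equation*}
    [\sigma * h](y) = \frac{1}{\sqrt{2\pi V}}\, e^{-y^2/(2V)} \int_{\R} \sigma(x) e^{ax} e^{-x^2/(2V)}\,\dd x
\end{equation*}
and show that the inner integral converges, as $V \to \infty$, to $M_\sigma(a) = \frac{1}{a}(\E{e^{aX}} - \E{e^{aY}})$ uniformly over $a \in [-A, A]$. Since $M_\sigma$ is continuous and strictly positive on $\R$ by hypothesis, it is bounded below by some $m > 0$ on the compact interval $[-A, A]$, so for $V$ sufficiently large the inner integral will exceed $m/2 > 0$. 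The uniform convergence step is where the $L_M$ hypothesis is critical: it supplies an integrable majorant $|\sigma(x)| e^{A|x|}$, whose integrability reduces by a Fubini computation to the finiteness of $\E{e^{AX^{+}}} + \E{e^{AY^{+}}} + \E{e^{-AX^{-}}} + \E{e^{-AY^{-}}}$, all guaranteed by $X, Y \in L_M$.

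For the tail region $y > AV$, I will use the tail dominance: since $\sigma \geq 0$ outside $[-N, N]$, the only possibly negative contribution to $[\sigma*h](y)$ comes from $x \in (-N, N)$ and is bounded in absolute value by $\frac{2N}{\sqrt{2\pi V}} e^{-(y-N)^2/(2V)}$, while the positive contribution from $[N_1, N_1 + \delta']$ is at least $\frac{\delta \delta'}{\sqrt{2\pi V}} e^{-(y-N_1)^2/(2V)}$. Because $N_1 > N$, the positive exponent strictly beats the negative one, and an elementary calculation shows that taking $A$ sufficiently large as a function of $N_1 - N, \delta, \delta'$ only (independently of $V$) is enough to force $[\sigma * h](y) \geq 0$ for all $y > AV$; the case $y < -AV$ is symmetric via the interval $[-N_2 - \delta', -N_2]$. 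The main obstacle is that $\sigma$ no longer has compact support, but the tail dominance is precisely what makes all ``far'' contributions non-negative, confining the potentially troublesome negative part to the bounded window $(-N, N)$ so that the bounded-case argument of Theorem~\ref{thm:marginal} can be applied with only minor quantitative modifications. The finer technical step is the uniform convergence in the central region, where the $L_M$ hypothesis is indispensable for producing an integrable dominating function.
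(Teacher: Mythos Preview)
Your proposal is correct and is essentially the paper's own proof: the paper also takes $Z$ to be an untruncated Gaussian of large variance, splits into a central region $|y|\leq AV$ and two tails, handles the tails via the strict positivity of $\sigma$ outside $[-N,N]$ together with a positive mass of $\sigma$ on a fixed interval just beyond $\pm N$, and handles the central region by writing the inner integral as $M_\sigma(y/V)$ minus an error term controlled using the integrability of $|\sigma(x)|e^{A|x|}$ (which follows from $X,Y\in L_M$). The only cosmetic differences are that the paper works with the factored form $e^{-y^2/(2V)}\int\sigma(x)e^{(y/V)x}e^{-x^2/(2V)}\,\dd x$ and uses an integral bound $\int_{N+1}^{N+2}\sigma>\delta$ rather than your pointwise bound; both are equivalent for the purpose at hand.
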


\begin{proof}
We will take $Z$ to have a normal distribution, which does belong to $L_M$. Following the proof of Theorem~\ref{thm:marginal}, we let $\sigma(x) = G(x) - F(x)$, and seek to show that 
$[\sigma * h](y) \geq 0$ for every $y$ when $h$ is a Gaussian density with sufficiently large variance. By assumption, $\sigma(x)$ is strictly positive for $\vert x \vert \geq N$. Thus there exists $\delta > 0$ such that $\int_{N+1}^{N+2} \sigma(x) \,\dd x > \delta$, as well as $\int_{-N-2}^{-N-1} \sigma(x) \,\dd x > \delta$. We fix $A > 0$ that satisfies $\ee^{A} \geq \frac{4N}{\delta}$.

Similar to \eqref{eq:convo}, we have for $h(x) = \ee^{-\frac{x^2}{2V}}$ that
\begin{equation}\label{eq:convo2}
\ee^{\frac{y^2}{2V}}\int \sigma(x) h(y-x) \,\dd x =  \int_{-\infty}^{\infty} \sigma(x) \cdot
    \ee^{\frac{y}{V} \cdot x} \cdot \ee^{-\frac{x^2}{2V}} \,\dd x.
\end{equation}
The variance $V$ is to be determined below. 

We first show that the right-hand side is positive if $V \geq (N+2)^2$ and $\frac{y}{V} \geq A$. Indeed, since $\sigma(x) > 0$ for $\vert x \vert \geq N$, this integral is bounded from below by 
\begin{align*}
&\int_{-N}^{N} \sigma(x) \cdot \ee^{\frac{y}{V} \cdot x} \cdot \ee^{-\frac{x^2}{2V}} \,\dd x + \int_{N+1}^{N+2} \sigma(x) \cdot \ee^{\frac{y}{V} \cdot x} \cdot \ee^{-\frac{x^2}{2V}} \,\dd x \\
\geq ~& -2N \cdot \ee^{\frac{y}{V}\cdot N} + \delta \cdot \ee^{\frac{y}{V} \cdot (N+1)} \cdot \ee^{-\frac{(N+2)^2}{2V}} \\
= ~& \ee^{\frac{y}{V}\cdot N} \cdot (-2N + \delta \cdot \ee^{\frac{y}{V}} \cdot \ee^{-\frac{(N+2)^2}{2V}}) \\
> ~&0,
\end{align*}
where the last inequality uses $\ee^{\frac{y}{V}} \geq \ee^A \geq \frac{4N}{\delta}$ and $\ee^{-\frac{(N+2)^2}{2V}} \geq \ee^{-\frac{1}{2}} > \frac{1}{2}$. By a symmetric argument, we can show that the right-hand side of \eqref{eq:convo2} is also positive when $\frac{y}{V} \leq -A$. 

It remains to consider the case where $\frac{y}{V} \in [-A, A]$. Here we rewrite the integral on the right-hand side of \eqref{eq:convo2} as 
\begin{align*}
\int_{-\infty}^{\infty} \sigma(x) \cdot
    \ee^{\frac{y}{V} \cdot x} \cdot \ee^{-\frac{x^2}{2V}}\,\dd x =  M_\sigma(\frac{y}{V}) - \int_{-\infty}^{\infty} \sigma(x) \cdot
    \ee^{\frac{y}{V} \cdot x} \cdot (1-\ee^{-\frac{x^2}{2V}})\,\dd x, 
\end{align*}
where $M_{\sigma}(a) = \int_{-\infty}^{\infty} \sigma(x) \cdot \ee^{ax} \,\dd x = \frac{1}{a}\E{\ee^{aX}} - \frac{1}{a}\E{\ee^{aY}}$ is by assumption strictly positive for all $a$. By continuity, there exists some $\eps > 0$ such that $M_{\sigma(a)} > \eps$ for all $\vert a \vert \leq A$. So it only remains to show that when $V$ is sufficiently large, 
\begin{equation}\label{eq:convo3}
\int_{-\infty}^{\infty} \sigma(x) \cdot
    \ee^{a x} \cdot (1-\ee^{-\frac{x^2}{2V}})\,\dd x < \eps \quad \text{ for all } \vert a \vert \leq A.
\end{equation}
To estimate this integral, note that $M_{\sigma}(A) = \int_{-\infty}^{\infty} \sigma(x) \cdot \ee^{Ax} \,\dd x$ is finite. Since $\sigma(x) > 0$ for $\vert x \vert$ sufficiently large, we deduce from the Monotone Convergence Theorem that $\int_{-\infty}^{T} \sigma(x) \cdot \ee^{Ax} \,\dd x$ converges to $M_{\sigma}(A)$ as $T \to \infty$. In other words, $\int_{T}^{\infty} \sigma(x) \cdot \ee^{Ax} \,\dd x \to 0$. We can thus find a sufficiently large $T > N$ such that $
\int_{T}^{\infty} \sigma(x) \cdot \ee^{Ax} \,\dd x < \frac{\eps}{4}$, and likewise $\int_{-\infty}^{-T} \sigma(x) \cdot \ee^{-Ax} \,\dd x < \frac{\eps}{4}$. 

As $1-\ee^{-\frac{x^2}{2V}} \geq 0$ and $\ee^{ax} \leq \ee^{A \vert x \vert}$ when $\vert a \vert \leq A$, we deduce that
\[
\int_{\vert x \vert \geq T} \sigma(x) \cdot
    \ee^{a x} \cdot (1-\ee^{-\frac{x^2}{2V}})\,\dd x < \frac{\eps}{2} \quad \text{ for all } \vert a \vert \leq A.
\]
Moreover, for this fixed $T$, we have $\ee^{-\frac{T^2}{2V}} \to 1$ when $V$ is large, and thus
\[
\int_{\vert x \vert \leq T} \sigma(x) \cdot
    \ee^{a x} \cdot (1-\ee^{-\frac{x^2}{2V}})\,\dd x < 2 T \ee^{AT}(1-\ee^{-\frac{T^2}{2V}}) < \frac{\eps}{2} \quad \text{ for all } \vert a \vert \leq A.
\]
These estimates together imply that \eqref{eq:convo3} holds for sufficiently large $V$. This completes the proof.
\end{proof}

\subsection{Step 2: A Perturbation Argument}

With Lemma~\ref{lemma:marginal-unbounded}, we know that if $\Phi$ is a monotone additive statistic defined on $L_M$, then $K_a(X) \geq K_a(Y)$ for all $a \in \R$ implies $\Phi(X) \geq \Phi(Y)$ \emph{under the additional assumption that} $X$ dominates $Y$ in both tails (same proof as for Lemma~\ref{lemma:monotone}). Below we deduce the same result without this extra assumption. To make the argument simpler, assume $X$ and $Y$ are unbounded both from above and from below; otherwise, we can add to them an independent Gaussian random variable without changing either the assumption or the conclusion. In doing so, we can further assume $X$ and $Y$ admit probability density functions. 

We first construct a heavy right-tailed random variable as follows:

\begin{lemma}
For any $Y \in L_M$ that is unbounded from above and admits densities, there exists $Z \in L_M$ such that $Z \geq 0$ and $\frac{\mathbb{P}[Z > x]}{\mathbb{P}[Y > x]} \to \infty$ as $x \to \infty.$
\end{lemma}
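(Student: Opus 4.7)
The plan is to construct $Z$ by prescribing its survival function directly. Define a non-negative random variable $Z$ by $\mathbb{P}[Z > x] = \sqrt{\mathbb{P}[Y > x]}$ for $x \geq 0$, together with $\mathbb{P}[Z > x] = 1$ for $x < 0$. Since $x \mapsto \mathbb{P}[Y > x]$ is non-increasing, right-continuous, and tends to $0$ at infinity, its square root inherits these properties, so this prescription yields a legitimate distribution supported on $[0,\infty)$, with a possible atom of mass $1 - \sqrt{\mathbb{P}[Y > 0]}$ at the origin.

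The tail-ratio claim is then immediate. Since $Y$ is unbounded from above we have $\mathbb{P}[Y > x] > 0$ for every $x$, and since $Y$ is a real-valued random variable $\mathbb{P}[Y > x] \downarrow 0$ as $x \to \infty$; hence
\[
\frac{\mathbb{P}[Z > x]}{\mathbb{P}[Y > x]} \;=\; \frac{1}{\sqrt{\mathbb{P}[Y > x]}} \;\longrightarrow\; \infty.
\]

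The substantive step is to verify $Z \in L_M$, that is, $\E{\ee^{aZ}} < \infty$ for every $a \in \R$. The case $a \leq 0$ is trivial since $Z \geq 0$. For $a > 0$, the standard layer-cake identity $\ee^{aZ} = 1 + \int_0^Z a\ee^{au}\,\dd u$ together with Tonelli gives $\E{\ee^{aZ}} = 1 + a \int_0^\infty \ee^{au} \sqrt{\mathbb{P}[Y > u]}\,\dd u$, and I would then apply Cauchy--Schwarz:
\[
\int_0^\infty \ee^{au} \sqrt{\mathbb{P}[Y > u]}\,\dd u \;\leq\; \Bigl(\int_0^\infty \ee^{-u}\,\dd u\Bigr)^{1/2} \Bigl(\int_0^\infty \ee^{(2a+1)u}\, \mathbb{P}[Y > u]\,\dd u\Bigr)^{1/2}.
\]
The first factor equals $1$, and a second application of Tonelli bounds the second factor by $(2a+1)^{-1/2}\,\E{\ee^{(2a+1)Y}}^{1/2}$, which is finite because $Y \in L_M$.

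There is no real obstacle here: the hypotheses enter only to guarantee that $\mathbb{P}[Y > x] > 0$ (so the tail ratio is well-defined) and that all exponential moments of $Y$ exist (so $Z$ inherits the same property after the square root). The density assumption on $Y$ plays no role in this lemma and is presumably used at a later stage of Step 2.
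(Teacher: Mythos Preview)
Your proof is correct and takes a genuinely different route from the paper's. The paper assumes without loss $Y\ge 0$, takes the density $g$ of $Y$, and defines $Z$ via the size-biased density $c\,x\,g(x)$; the diverging tail ratio then comes from the likelihood ratio $cx$, and $\E{\ee^{aZ}} = c\int x g(x)\ee^{ax}\,\dd x$ is recognised as $c$ times the derivative of $a\mapsto\E{\ee^{aY}}$, hence finite by smoothness of the moment generating function. Your construction instead works directly with the survival function, setting $\mathbb{P}[Z>x]=\sqrt{\mathbb{P}[Y>x]}$, and recovers $Z\in L_M$ via layer-cake plus Cauchy--Schwarz, trading the exponent $a$ in the target for $2a+1$ in the hypothesis on $Y$. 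Your argument is arguably more elementary (no smoothness of the MGF needed) and, as you correctly observe, does not use the density assumption at all; the paper's approach, by contrast, genuinely uses that $Y$ has a density in order to write down the size-biased law, though it has the appeal of tying the construction to a classical probabilistic object and to a well-known analytic fact about moment generating functions.
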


\begin{proof}
For this result, it is without loss to assume $Y \geq 0$ because we can replace $Y$ by $\vert Y \vert$ and only strengthen the conclusion. Let $g(x)$ be the probability density function of $Y$. We consider a random variable $Z$ whose p.d.f.\ is given by $cx g(x)$ for all $x \geq 0$, where $c > 0$ is a normalizing constant to ensure $\int_{x \geq 0} cx g(x) \,\dd x = 1$. Since the likelihood ratio between $Z = x$ and $Y = x$ is $cx$, it is easy to see that the ratio of tail probabilities also diverges. Thus it only remains to check $Z \in L_M$. This is because 
\[
\E{\ee^{aZ}}= c\int_{x \geq 0} x g(x) \ee^{ax} \,\dd x,
\]
which is simply $c$ times the derivative of $\E{\ee^{aY}}$ with respect to $a$. It is well-known that the moment generating function is smooth whenever it is finite. So this derivative is finite, and $Z \in L_M$.  
\end{proof}

In the same way, we can construct heavy left-tailed distributions:
\begin{lemma}
For any $X \in L_M$ that is unbounded from below and admits densities, there exists $W \in L_M$, such that $W \leq 0$ and $\frac{\mathbb{P}[W \leq x]}{\mathbb{P}[X \leq x]} \to \infty$ as $x \to -\infty$.
\end{lemma}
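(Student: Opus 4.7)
The plan is to deduce this lemma directly from the preceding one by a change of sign, rather than repeat the construction. The point is that the assumption $X \in L_M$ (and likewise $X$ unbounded from below, $X$ admitting a density) is symmetric under reflection through $0$, and likewise the conclusion we want for $W$ is the mirror image of the conclusion obtained for $Z$ in the preceding lemma.

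Concretely, I would set $Y := -X$. Then $Y$ is unbounded from above because $X$ is unbounded from below, and $Y$ admits a density because $X$ does (if $f$ is the density of $X$ then $x \mapsto f(-x)$ is the density of $Y$). Moreover $Y \in L_M$ since $\E{\ee^{aY}} = \E{\ee^{-aX}} < \infty$ for all $a \in \R$. The hypotheses of the preceding lemma are therefore satisfied, and it produces a random variable $Z \in L_M$ with $Z \geq 0$ and $\mathbb{P}[Z > t]/\mathbb{P}[Y > t] \to \infty$ as $t \to \infty$.

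I would then define $W := -Z$. Clearly $W \leq 0$, and $W \in L_M$ since $\E{\ee^{aW}} = \E{\ee^{-aZ}} < \infty$ for every $a \in \R$. Finally, for the tail condition, observe that for $x < 0$,
\[
    \frac{\mathbb{P}[W \leq x]}{\mathbb{P}[X \leq x]} \;=\; \frac{\mathbb{P}[-Z \leq x]}{\mathbb{P}[-Y \leq x]} \;=\; \frac{\mathbb{P}[Z \geq -x]}{\mathbb{P}[Y \geq -x]},
\]
which tends to $\infty$ as $x \to -\infty$, since $-x \to \infty$ and the preceding lemma gives exactly this divergence for $Z$ and $Y$. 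That completes the proof.

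There is essentially no obstacle here: the only thing to be a bit careful about is that the preceding lemma's statement involves strict inequalities ($Z > x$, $Y > x$) while this one uses weak inequalities ($W \leq x$, $X \leq x$), but since $X$ admits a density the distinction is immaterial (the tail probabilities of $X$ and of $Y = -X$ have no atoms), so the ratio we just computed genuinely matches the one supplied by the preceding lemma.
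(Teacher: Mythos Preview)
Your proposal is correct and is exactly the symmetry argument the paper intends: the paper's own proof is simply the phrase ``In the same way, we can construct heavy left-tailed distributions,'' and your reflection $X \mapsto -X$, $Z \mapsto W = -Z$ makes that precise. One minor remark on your closing paragraph: the density assumption on $X$ handles atoms of $Y$, not of $Z$, but this is harmless since $\mathbb{P}[Z \geq t] \geq \mathbb{P}[Z > t]$ always, so the ratio you compute is at least the one from the preceding lemma and still diverges.
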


With these technical lemmata, we now construct ``perturbed'' versions of any two random variables $X$ and $Y$ to achieve dominance in both tails. For any random variable $Z \in L_M$ and every $\eps > 0$, let $Z_{\eps}$ be the random variable that equals $Z$ with probability $\eps$, and $0$ with probability $1-\eps$. Note that $Z_\eps$ also belongs to $L_M$. 

\begin{lemma}
Given any two random variables $X, Y \in L_M$ that are unbounded on both sides and admit densities. Let $Z \geq 0$ and $W \leq 0$ be constructed from the above two lemmata. Then for every $\eps > 0$, $X + Z_\eps$ dominates $Y + W_\eps$ in both tails.
\end{lemma}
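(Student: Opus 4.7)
The plan is to directly estimate the tail probabilities on each side and exploit the fact that the Bernoulli mixture $Z_\epsilon$ inherits (up to the factor $\epsilon$) the heavy right tail of $Z$, while $Y + W_\epsilon$ stays dominated by $Y$ in the right tail because $W \le 0$. A symmetric argument handles the left tail.

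First I would fix the right tail. Write $F$ and $G$ for the c.d.f.s of $X + Z_\epsilon$ and $Y + W_\epsilon$, and set $p = \mathbb{P}[X \ge 0]$ and $q = \mathbb{P}[Y \le 0]$, which are both positive (I would mention at the start that the assumption that $X$ and $Y$ admit densities and are unbounded on both sides makes this harmless, and, if needed, replace $X, Y$ by $X + \eta, Y - \eta$ for small $\eta$ to ensure $p, q > 0$). Since $Z_\epsilon$ is independent of $X$ and equals $Z$ with probability $\epsilon$, I bound
\[
1 - F(x) = \mathbb{P}[X + Z_\epsilon > x] \;\ge\; \epsilon\,\mathbb{P}[X \ge 0,\, Z > x] \;=\; \epsilon p\,\mathbb{P}[Z > x].
\]
On the other hand, because $W \le 0$ we have $Y + W_\epsilon \le Y$ almost surely, so
\[
1 - G(x) = \mathbb{P}[Y + W_\epsilon > x] \;\le\; \mathbb{P}[Y > x].
\]
By the previous lemma, $\mathbb{P}[Z > x]/\mathbb{P}[Y > x] \to \infty$ as $x \to \infty$, so for all sufficiently large $x$ the inequality $\epsilon p\,\mathbb{P}[Z > x] > \mathbb{P}[Y > x]$ holds, giving $1 - F(x) > 1 - G(x)$, i.e.\ $G(x) > F(x)$.

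The left tail is entirely symmetric. Using $Z \ge 0$ and the independence of $W_\epsilon$ from $Y$, I obtain
\[
G(x) = \mathbb{P}[Y + W_\epsilon \le x] \;\ge\; \epsilon\,\mathbb{P}[Y \le 0,\, W \le x] \;=\; \epsilon q\,\mathbb{P}[W \le x],
\]
and $F(x) = \mathbb{P}[X + Z_\epsilon \le x] \le \mathbb{P}[X \le x]$. Invoking the constructed ratio $\mathbb{P}[W \le x]/\mathbb{P}[X \le x] \to \infty$ as $x \to -\infty$, we again get $G(x) > F(x)$ for all sufficiently negative $x$. Taking $N$ larger than both thresholds from the two arguments yields the stated tail dominance.

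There is no real obstacle here; the argument is purely a tail-asymptotics comparison. The only subtlety worth flagging is the initial reduction that guarantees $p, q > 0$ and that $X, Y$ have densities (so that the previous two tail-heaviness lemmata apply), and the observation that the $(1-\epsilon)$-mass of $Z_\epsilon$ at $0$ is entirely harmless since the one-sided inequalities above already drop it. No genuinely new machinery beyond the heavy-tail constructions is needed.
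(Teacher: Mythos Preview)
Your proof is correct and essentially identical to the paper's argument: both bound $\mathbb{P}[X+Z_\eps > x] \ge \eps\,\mathbb{P}[X\ge 0]\,\mathbb{P}[Z>x]$ and $\mathbb{P}[Y+W_\eps > x] \le \mathbb{P}[Y>x]$, then invoke the heavy-tail construction, with a symmetric treatment of the left tail. The only superfluous step is your hedge about ensuring $p,q>0$ via a shift---the hypothesis that $X$ and $Y$ are unbounded on both sides already forces $\mathbb{P}[X\ge 0]>0$ and $\mathbb{P}[Y\le 0]>0$, so no reduction is needed.
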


\begin{proof}
For the right tail, we need $\mathbb{P}[X + Z_{\eps} > x] > \mathbb{P}[Y + W_{\eps} > x]$ for all $x \geq N$. Note that $W_{\eps} \leq 0$, so $\mathbb{P}[Y + W_{\eps} > x] \leq \mathbb{P}[Y > x]$. On other hand, 
\[
\mathbb{P}[X + Z_{\eps} > x] \geq \mathbb{P}[X \geq 0] \cdot \mathbb{P}[Z_{\eps} > x] = \mathbb{P}[X \geq 0] \cdot \eps \cdot \mathbb{P}[Z > x]. 
\]
Since by assumption $X$ is unbounded from above, the term $\mathbb{P}[X \geq 0] \cdot \eps$ is a strictly positive constant that does not depend on $x$. Thus for sufficiently large $x$, we have 
\[
\mathbb{P}[X \geq 0] \cdot \eps \cdot \mathbb{P}[Z > x] > \mathbb{P}[Y > x]
\]
by the construction of $Z$. This gives dominance in the right tail. The left tail is similar. 
\end{proof}

\subsection{Step 3: Monotonicity w.r.t.\ $K_a$} 

The next result generalizes the key Lemma~\ref{lemma:monotone} to our current setting:

\begin{lemma}
  \label{lemma:monotone-unbounded}
  Let $\Phi \colon L_M \to \R$ be a monotone additive statistic. If $K_a(X) \geq K_a(Y)$ for all $a \in \R$ then $\Phi(X) \geq \Phi(Y)$.
\end{lemma}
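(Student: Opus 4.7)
Plan. The approach mirrors the proof of Lemma~\ref{lemma:monotone} in the bounded case: perturb $X$ by an arbitrarily small positive constant to obtain strict $K_a$-dominance, invoke a catalytic order result to reduce to first-order stochastic dominance of sums with an independent $T$, and conclude via monotonicity and additivity of $\Phi$. The new complication, compared to the bounded setting, is that Lemma~\ref{lemma:marginal-unbounded} requires dominance in both tails, which need not hold for arbitrary $X, Y \in L_M$; this is precisely what the two-sided perturbation $Z_\delta, W_\delta$ of Step~2 is for.

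First, I would reduce to the case where $X$ and $Y$ are unbounded on both sides and admit densities by adding an independent standard Gaussian to each, which preserves both the $K_a$-inequality and the target $\Phi$-inequality by additivity of $K_a$ and of $\Phi$. Then, given $\eta>0$, I would apply Step~2 to obtain $Z\ge 0$ and $W\le 0$ in $L_M$ such that for every $\delta\in(0,1)$ the variables
\[
\hat X_\delta := X + \eta + Z_\delta, \qquad \hat Y_\delta := Y + W_\delta
\]
satisfy two-tail dominance (a constant shift leaves tails unaffected) and strict $K_a$-dominance on $\R$: since $K_a(Z_\delta)\ge 0$ and $K_a(W_\delta)\le 0$,
\[
K_a(\hat X_\delta) \;\geq\; K_a(X) + \eta \;>\; K_a(Y) \;\geq\; K_a(\hat Y_\delta) \qquad \text{for all } a\in\R.
\]
Lemma~\ref{lemma:marginal-unbounded} then supplies an independent $T\in L_M$ with $\hat X_\delta + T \geq_1 \hat Y_\delta + T$, and monotonicity plus additivity of $\Phi$ yield
\[
\Phi(X) + \eta + \Phi(Z_\delta) \;\geq\; \Phi(Y) + \Phi(W_\delta).
\]

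The main obstacle is the passage to the limit: to conclude $\Phi(X) \geq \Phi(Y)$ I need to send $\delta\to 0$ and then $\eta\to 0$, which requires $\Phi(Z_\delta)\to 0$ and $\Phi(W_\delta)\to 0$. Although $Z_\delta$ and $W_\delta$ converge to $0$ in distribution and $K_a(Z_\delta), K_a(W_\delta) \to 0$ for each fixed $a\in\R$, monotone additive statistics on $L_M$ are not a priori continuous in distribution, so this step is nontrivial. My plan is to handle it by a truncation argument: decompose $Z = (Z\wedge N) + (Z - Z\wedge N)$, control the bounded mixture $(Z\wedge N)_\delta$ via Theorem~\ref{thm:main} applied to the restriction $\Phi|_{L^\infty}$ (using that $K_a((Z\wedge N)_\delta) \to 0$ uniformly on $\overline{\R}$ as $\delta\to 0$ for fixed $N$), and bound the tail contribution by exploiting additivity of $\Phi$ on $n$ iid copies of $Z_\delta$ and comparing their sum to a compound Poisson variable (choosing $N$ large first and then $\delta$ small); a symmetric argument controls $\Phi(W_\delta)$. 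Once these limits vanish, the displayed inequality reduces to $\Phi(X) + \eta \geq \Phi(Y)$, and letting $\eta\downarrow 0$ completes the proof.
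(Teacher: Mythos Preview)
Your overall plan coincides with the paper's: reduce to the unbounded-with-densities case by adding an independent Gaussian, attach the Step~2 perturbations $Z_\delta$ and $W_\delta$ to enforce two-tail dominance, apply Lemma~\ref{lemma:marginal-unbounded}, and cancel the catalyst via additivity. (The extra $\eta$ is harmless but unnecessary: since $Z$ is not a.s.\ zero, $K_a(Z_\delta)>0$ for every $a\in\R$, which already yields the strict inequality $K_a(X+Z_\delta)>K_a(X)\ge K_a(Y)>K_a(Y+W_\delta)$.)

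The only substantive step is $\Phi(Z_\delta)\to 0$ and $\Phi(W_\delta)\to 0$, and here your sketch needs repair. Your uniform-convergence claim is false: $K_\infty((Z\wedge N)_\delta)=N$ for every $\delta>0$. One can salvage the bounded piece by first showing (as the paper does in Step~4, using only finiteness of $\Phi$ on an unbounded element of $L_M$) that the representing measure of $\Phi|_{L^\infty}$ has no mass at $\pm\infty$, and then using dominated rather than uniform convergence. But even then, your decomposition $Z=(Z\wedge N)+(Z-N)^+$ has \emph{dependent} summands, so additivity does not let you split $\Phi(Z_\delta)$ into a bounded piece and a tail piece. Fortunately, the compound Poisson idea you mention in passing does the whole job on its own, without truncation: writing $Z_\delta^{*n}$ as $S_{\mathrm{Bin}(n,\delta)}$ with $S_k=Z_1+\dots+Z_k$, the standard coupling $\mathrm{Bin}(n,\delta)\le_{\mathrm{st}}\mathrm{Poisson}(n\lambda)$ with $\lambda=-\log(1-\delta)$ gives $Z_\delta^{*n}\le_1 \mathrm{CP}(n\lambda,Z)$. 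Taking $\delta_n=1-\ee^{-1/n}$ yields $Z_{\delta_n}^{*n}\le_1 \mathrm{CP}(1,Z)\in L_M$, so $n\,\Phi(Z_{\delta_n})\le \Phi(\mathrm{CP}(1,Z))$ and hence $\Phi(Z_{\delta_n})\to 0$; monotonicity of $\delta\mapsto\Phi(Z_\delta)$ then gives $\Phi(Z_\delta)\to 0$.

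The paper argues $\Phi(Z_\delta)\to 0$ by a different route: assuming $\Phi(Z_\delta)>c>0$ for all $\delta$, it picks $\eps_n\downarrow 0$ so fast that the sums $U_n=Z_{\eps_n}^{*n}$ satisfy $\E{\ee^{nU_n}-1}\le 2^{-n}$, defines $U$ via the c.d.f.\ $H=\inf_n H_n$, checks that $U\in L_M$ and $U\ge_1 U_n$ for every $n$, and derives the contradiction $\Phi(U)\ge nc$. Your compound Poisson furnishes such a dominating variable explicitly and is arguably cleaner.
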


\begin{proof}
As discussed, we can without loss assume $X, Y$ are unbounded on both sides, and admit densities. Let $Z$ and $W$ be constructed as above, then for each $\eps > 0$, $X + Z_{\eps}$ dominates $Y + W_{\eps}$ in both tails, and $K_a(X+Z_{\eps}) > K_a(X) \geq K_a(Y) > K_a(Y+W_{\eps})$ for every $a \in \R$, where the inequalities are strict as $Z, W$ are not identically zero. 

Thus the pair $X + Z_{\eps}$ and $Y + W_{\eps}$ satisfy the assumptions in Lemma~\ref{lemma:marginal-unbounded}. We can then find an independent random variable $V \in L_M$ (depending on $\eps$), such that 
\[
X + Z_{\eps} + V \geq_1 Y + W_{\eps} + V.
\]
Monotonicity and additivity of $\Phi$ then imply $\Phi(X) + \Phi(Z_{\eps}) \geq \Phi(Y) + \Phi(W_{\eps})$, after canceling out $\Phi(V)$. The desired result $\Phi(X) \geq \Phi(Y)$ follows from the lemma below, which shows that our perturbations only slightly affect the statistic value. 
\end{proof}

\begin{lemma}
For any $Z \in L_M$ with $Z \geq 0$, it holds that $\Phi(Z_{\eps}) \to 0$ as $\eps \to 0$. Similarly $\Phi(W_{\eps}) \to 0$ for any $W \in L_M$ with $W \leq 0$.
\end{lemma}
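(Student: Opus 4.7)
The plan is to sandwich $\Phi(Z_\epsilon)$ between $0$ and a quantity that vanishes with $\epsilon$. The lower bound $\Phi(Z_\epsilon) \geq 0$ is immediate from monotonicity applied to $Z_\epsilon \geq_1 0$. The obstacle is the upper bound: since $Z$ may be unbounded, no bounded random variable dominates $Z_\epsilon$ in first-order stochastic dominance, and so monotonicity alone is of no use. My idea is to dominate $Z_\epsilon$ stochastically by a compound Poisson random variable of small rate, obtained through a Bernoulli-to-Poisson coupling, and then evaluate $\Phi$ on compound Poissons via additivity.

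Set $\lambda_\epsilon := -\log(1-\epsilon)$ and let $N \sim \mathrm{Poisson}(\lambda_\epsilon)$, so that $\Pr{N = 0} = e^{-\lambda_\epsilon} = 1-\epsilon$. Let $Z_1, Z_2, \ldots$ be i.i.d.\ copies of $Z$, independent of $N$, and define $B := \ind{N \geq 1}$. Then $B \sim \mathrm{Bernoulli}(\epsilon)$ and is independent of $Z_1$, so $B Z_1$ has the distribution of $Z_\epsilon$. Because $Z_i \geq 0$, in this coupling $B Z_1 \leq \sum_{i=1}^{N} Z_i$ pointwise: both sides vanish when $N=0$, and when $N\geq 1$ one has $B Z_1 = Z_1 \leq Z_1 + \cdots + Z_N$. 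Hence $Z_\epsilon \leq_1 S_{\lambda_\epsilon}$, where $S_\lambda := \sum_{i=1}^{N_\lambda} Z_i$ denotes the compound Poisson with rate $\lambda$ and jump distribution $Z$. Since $Z \in L_M$ implies $\E{\exp(aS_\lambda)} = \exp(\lambda(\E{e^{aZ}} - 1)) < \infty$ for every $a \in \R$, we have $S_\lambda \in L_M$, so $\Phi(S_\lambda)$ is well defined.

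Define $\Lambda(\lambda) := \Phi(S_\lambda)$ for $\lambda \geq 0$. Poisson superposition gives $S_{\lambda_1} + S_{\lambda_2} \stackrel{d}{=} S_{\lambda_1 + \lambda_2}$ when the summands are taken independent, so additivity of $\Phi$ yields $\Lambda(\lambda_1 + \lambda_2) = \Lambda(\lambda_1) + \Lambda(\lambda_2)$; combined with $S_{\lambda+\delta} \stackrel{d}{=} S_\lambda + S_\delta \geq_1 S_\lambda$ (because $S_\delta \geq 0$), $\Lambda$ is also non-decreasing. A non-decreasing solution of Cauchy's equation on $\R_+$ is linear, so $\Lambda(\lambda) = c\lambda$ for the finite constant $c := \Phi(S_1) \geq 0$. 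Therefore, by monotonicity of $\Phi$,
\[
    0 \leq \Phi(Z_\epsilon) \leq \Phi(S_{\lambda_\epsilon}) = -c \log(1-\epsilon),
\]
and the right-hand side tends to $0$ as $\epsilon \to 0$, giving $\Phi(Z_\epsilon) \to 0$.

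The case $W \leq 0$ is handled symmetrically. Using the same coupling, $W_\epsilon \stackrel{d}{=} B W_1 \geq \sum_{i=1}^{N} W_i$ pointwise, because each additional non-positive summand only decreases the sum, so $W_\epsilon \geq_1 S'_{\lambda_\epsilon}$, where $S'_\lambda$ is the analogous compound Poisson built from $W$. The same Cauchy-plus-monotonicity argument yields $\Phi(S'_\lambda) = c'\lambda$ for $c' := \Phi(S'_1) \leq 0$, while $W_\epsilon \leq_1 0$ gives $\Phi(W_\epsilon) \leq 0$. Hence $c' \lambda_\epsilon \leq \Phi(W_\epsilon) \leq 0$, which forces $\Phi(W_\epsilon) \to 0$. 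The main hurdle in this proof was producing any useful upper bound for $\Phi(Z_\epsilon)$ when $Z$ is unbounded; the Bernoulli-to-Poisson coupling is the essential device, because it supplies a stochastic dominator that already lives in $L_M$ and whose $\Phi$-value can be evaluated exactly by exploiting additivity.
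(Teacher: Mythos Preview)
Your proof is correct and takes a genuinely different route from the paper's. The paper argues by contradiction: assuming $\Phi(Z_\eps)>\delta$ for all $\eps$, it builds a sequence $U_n=(Z_{\eps_n})^{*n}$ with $\eps_n$ decreasing fast enough that a single random variable $U\in L_M$ (defined via the pointwise infimum of the c.d.f.s $H_n$) first-order dominates every $U_n$; monotonicity then forces $\Phi(U)\geq n\delta$ for all $n$, a contradiction. This is delicate: one must check that $H=\inf_n H_n$ is an honest c.d.f.\ and that the resulting $U$ lies in $L_M$, which requires the careful tail estimates $\E{\ee^{nU_n}-1}\leq 2^{-n}$.

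Your argument replaces this construction with a direct sandwich. The Bernoulli-to-Poisson coupling dominates $Z_\eps$ by the compound Poisson $S_{\lambda_\eps}$, which is automatically in $L_M$ via the closed-form m.g.f.; then the convolution semigroup property $S_{\lambda_1}+S_{\lambda_2}\stackrel{d}{=}S_{\lambda_1+\lambda_2}$ turns additivity of $\Phi$ into the Cauchy equation for $\lambda\mapsto\Phi(S_\lambda)$, and monotonicity forces linearity. The result $\Phi(S_\lambda)=c\lambda$ gives the vanishing upper bound directly. This is shorter and conceptually cleaner: you exploit the fact that compound Poissons give, for free, a one-parameter family in $L_M$ on which any monotone additive statistic must act linearly, whereas the paper engineers an ad hoc dominating random variable and verifies $L_M$-membership by hand. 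Both approaches ultimately hinge on additivity plus a stochastic domination, but yours identifies the canonical object (the compound Poisson semigroup) that makes the domination and the additivity work together automatically.
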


\begin{proof}
We focus on the case for $Z_{\eps}$. Suppose for contradiction that $\Phi(Z_{\eps})$ does not converge to zero. Note that as $\eps$ decreases, $Z_{\eps}$ decreases in first-order stochastic dominance. So $\Phi(Z_{\eps}) \geq 0$ also decreases, and non-convergence must imply there exists some $\delta > 0$ such that $\Phi(Z_{\eps}) > \delta$ for every $\eps > 0$. 
Let $\mu_{\eps}$ be image measure of $Z_{\eps}$. We now choose a sequence $\eps_n$ that decreases to zero very fast, and consider the measures
\[
\nu_n = \mu_{\eps_n}^{*n},
\]
which is the $n$-th convolution power of $\mu_{\eps_n}$. Thus the sum of $n$ i.i.d.\ copies of $Z_{\eps_n}$ is a random variable whose image measure is $\nu_n$. We denote this sum by $U_n$.

For each $n$ we choose $\eps_n$ sufficiently small to satisfy two properties: (i) $\eps_n \leq \frac{1}{n^2}$, and (ii) it holds that
\[
\E{ \ee^{n U_n} - 1 } \leq 2^{-n}.
\]
This latter inequality can be achieved because $\E{ \ee^{n U_n}} = \left(\E{ \ee^{n Z_{\eps_n}}}\right)^n$, and as $\eps_n \to 0$ we also have $\E{ \ee^{n Z_{\eps_n}}} = 1 - \eps_n + \eps_n \E{ \ee^{n Z}} \to 1$ since $Z \in L_M$.  

For these choices of $\eps_n$ and corresponding $U_n$, let $H_n(x)$ denote the c.d.f.\ of $U_n$, and define $H(x) = \inf_{n} H_n(x)$ for each $x \in \R$. Since $H_n(x) = 0$ for $x < 0$, the same is true for $H(x)$. Also note that each $H_n(x)$ is a non-decreasing and right-continuous function in $x$, and so is $H(x)$. 

We claim that $\lim_{x \to \infty} H(x) = 1$. Indeed, recall that $U_n$ is the $n$-fold sum of $Z_{\eps_n}$, which has mass $1 - \eps_n$ at zero. So $U_n$ has mass at least $(1-\eps_n)^n \geq (1-\frac{1}{n^2})^n \geq 1 - \frac{1}{n}$ at zero. In other words, $H_n(0) \geq 1 - \frac{1}{n}$. By considering the finitely many c.d.f.s $H_1(x), H_2(x), \dots, H_{n-1}(x)$, we can find $N$ such that $H_i(x) \geq 1 - \frac{1}{n}$ for every $i < n$ and $x \geq N$. Together with $H_i(x) \geq H_i(0) \geq 1 - \frac{1}{i} \geq 1 - \frac{1}{n}$ for $i \geq n$, we conclude that $H_i(x) \geq 1 - \frac{1}{n}$ whenever $x \geq N$, and so $H(x) \geq 1 - \frac{1}{n}$. Since $n$ is arbitrary, the claim follows. The fact that $H_n(x) \geq 1 - \frac{1}{n}$ also shows that in the definition $H(x) = \inf_{n} H_n(x)$, the ``inf'' is actually achieved as the minimum.

These properties of $H(x)$ imply that it is the c.d.f.\ of some non-negative random variable $U$. We next show $U \in L_M$, i.e., $\E{\ee^{aU}} < \infty$ for every $a \in \R$. Since $U \geq 0$, we only need to consider $a \geq 0$. To do this, we take advantage of the following identity based on integration by parts:
\[
\E{\ee^{aU_n}-1} = -\int_{x \geq 0} (\ee^{ax}-1) \,\dd (1-H_n(x)) = a \int_{x \geq 0} \ee^{ax} (1-H_n(x)) \,\dd x.
\]
Now recall that we chose $U_n$ so that $\E{ \ee^{n U_n} - 1 } \leq 2^{-n}$. So $\E{ \ee^{a U_n} - 1 } \leq 2^{-n}$ for every positive integer $n \geq a$. It follows that the sum $\sum_{n = 1}^{\infty} \E{\ee^{aU_n}-1}$ is finite for every $a \geq 0$. Using the above identity, we deduce that 
\[
a \int_{x \geq 0} \ee^{ax} \sum_{n = 1}^{\infty} (1-H_n(x)) \,\dd x < \infty,
\]
where we have switched the order of summation and integration by the Monotone Convergence Theorem. Since $H(x) = \min_{n} H_n(x)$, it holds that $1-H(x) \leq \sum_{n = 1}^{\infty} (1-H_n(x))$ for every $x$. And thus 
\[
\E{\ee^{aU}-1} = a \int_{x \geq 0} \ee^{ax} (1-H(x)) \,\dd x < \infty
\]
also holds. This proves $U \in L_M$.

We are finally in a position to deduce a contradiction. Since by construction the c.d.f.\ of $U$ is no larger than the c.d.f.\ of each $U_n$, we have $U \geq_1 U_n$ and $\Phi(U) \geq \Phi(U_n)$ by monotonicity of $\Phi$. But $\Phi(U_n) = n \Phi(Z_{\eps_n}) > n \delta$ by additivity, so this leads to $\Phi(U)$ being infinite. This contradiction proves the desired result.
\end{proof}

\subsection{Step 4: Functional Analysis}

To complete the proof of Theorem~\ref{thm:domains}, we also need to modify the functional analysis step in our earlier proof of Theorem~\ref{thm:main}. One difficulty is that for an unbounded random variable $X$, $K_a(X)$ takes the value $\infty$ as $a \to \infty$. Thus we can no longer think of $K_X(a) = K_a(X)$ as a real-valued continuous function on $\overline{\R}$. 

We remedy this as follows. Note first that if $\Phi$ is a monotone additive statistic defined on $L_M$, then it is also monotone and additive when restricted to the smaller domain of bounded random variables. Thus Theorem~\ref{thm:main} gives a probability measure $\mu$ on $\R \cup \{\pm \infty\}$ such that 
\[
\Phi(X) = \int_{\overline{\R}} K_a(X) \,\dd \mu(a)
\]
for all $X \in L^{\infty}$. In what follows, $\mu$ is fixed. We just need to show that this representation also holds for $X \in L_M$. 

As a first step, we show $\mu$ does not put any mass on $\pm \infty$. Indeed, if $\mu(\{\infty\}) = \eps > 0$, then for any bounded random variable $X \geq 0$, the above integral gives $\Phi(X) \geq \eps \cdot \max[X]$. Take any $Y \in L_M$ such that $Y \geq 0$ and $Y$ is unbounded from above. Then monotonicity of $\Phi$ gives $\Phi(Y) \geq \Phi(\min\{Y, n\}) \geq \eps \cdot n$ for each $n$. This contradicts $\Phi(Y)$ being finite. Similarly we can rule out any mass at $-\infty$. 

The next lemma gives a way to extend the representation to certain unbounded random variables. 

\begin{lemma}
Suppose $Z \in L_M$ is bounded from below by $1$ and unbounded from above, while $Y \in L_M$ is bounded from below and satisfies $\lim_{a \to \infty} \frac{K_a(Y)}{K_a(Z)} = 0$, then 
\[
\Phi(Y) = \int_{(-\infty, \infty)} K_a(Y) \,\dd \mu(a). 
\]
\end{lemma}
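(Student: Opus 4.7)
My plan is a sandwich argument using the bounded truncations $Y_n = \min\{Y, n\}$ together with an independent perturbation by $Z$. Since $Y_n \in L^\infty$, the representation already established for bounded random variables yields $\Phi(Y_n) = \int_{\R} K_a(Y_n)\,\dd\mu(a)$. For every $a \in \R$, $\ee^{aY_n}$ converges monotonically to $\ee^{aY}$, so $K_a(Y_n)\uparrow K_a(Y)$; the Monotone Convergence Theorem then gives $\int_{\R} K_a(Y_n)\,\dd\mu(a)\uparrow\int_{\R} K_a(Y)\,\dd\mu(a)$, and combining with $\Phi(Y_n)\leq\Phi(Y)<\infty$ (monotonicity) produces the lower inequality $\int_{\R} K_a(Y)\,\dd\mu(a)\leq\Phi(Y)$, which simultaneously certifies finiteness of the integral.

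The reverse inequality rests on the continuity property $\Phi(\eta Z) \to 0$ as $\eta\downarrow 0$, which I will prove by a law-of-large-numbers argument. For i.i.d.\ copies $Z_1,\ldots,Z_m$ of $Z$, let $\bar Z_m = (Z_1+\cdots+Z_m)/m$. Additivity gives $\Phi(\bar Z_m)=m\,\Phi(Z/m)$, while a direct MGF computation yields $K_a(\bar Z_m)=K_{a/m}(Z)$. Using $Z\geq 1$ (so $K_b(Z)\geq 1$ for every $b$), one checks $K_{a/m}(Z)\leq K_a(Z)+\E{Z}$ for all $a\in\R$: for $a\geq 0$ by monotonicity of $b\mapsto K_b(Z)$ since $a/m\leq a$, and for $a<0$ by $K_{a/m}(Z)\leq K_0(Z)=\E{Z}$ together with $K_a(Z)\geq 0$. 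Lemma~\ref{lemma:monotone-unbounded} then yields $\Phi(\bar Z_m)\leq\Phi(Z)+\E{Z}$, so $\Phi(Z/m)\leq(\Phi(Z)+\E{Z})/m\to 0$; monotonicity of $\eta\mapsto\Phi(\eta Z)$ extends this to arbitrary $\eta\downarrow 0$.

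With this continuity in hand, I attack the upper bound via $W_n=Y_n+\eta_n Z'$, where $Z'$ is an independent copy of $Z$ and $\eta_n\downarrow 0$. By additivity $\Phi(W_n)=\Phi(Y_n)+\Phi(\eta_n Z)\to\int_{\R} K_a(Y)\,\dd\mu(a)$. To close the sandwich through Lemma~\ref{lemma:monotone-unbounded}, I need $K_a(W_n)=K_a(Y_n)+\eta_n K_{a\eta_n}(Z)\geq K_a(Y)$ for every $a\in\R$. After shifting so that $Y\geq 0$---which preserves all hypotheses and makes $K_a(Y_n)\geq 0$---the condition to verify is $K_a(Y)-K_a(Y_n)\leq\eta_n K_{a\eta_n}(Z)$. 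I split at a threshold $A_n\to\infty$: on $[-\infty,A_n]$ the monotone convergence $K_{(\cdot)}(Y_n)\uparrow K_{(\cdot)}(Y)$ of continuous functions on this compact interval (extending $K_Y$ continuously at $-\infty$ by $\min Y$, using boundedness from below) is uniform by Dini's theorem, and the right-hand side dominates via $\eta_n K_{a\eta_n}(Z)\geq\eta_n$ (from $\min Z\geq 1$); on $[A_n,\infty)$ the hypothesis supplies $K_a(Y)\leq\eps_{A_n}K_a(Z)$ with $\eps_{A_n}\to 0$.

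The main technical obstacle is the final calibration: choosing $\eta_n\downarrow 0$ and $A_n\to\infty$ so that $\eps_{A_n}K_a(Z)\leq\eta_n K_{a\eta_n}(Z)$ holds uniformly in $a\geq A_n$. Writing $g(b)=bK_b(Z)=\log\E{\ee^{bZ}}$ (convex with $g(0)=0$ and $g(b)/b\to\infty$ since $Z$ is unbounded above), this amounts to $\eps_{A_n}g(a)\leq g(a\eta_n)$, which will be enforced by taking $A_n$ to grow slowly relative to $1/\eta_n$ and exploiting the super-linearity of $g$ to bound the ratio $g(a\eta_n)/g(a)$ from below on the relevant range. Once the pointwise $K$-inequality holds, Lemma~\ref{lemma:monotone-unbounded} yields $\Phi(Y)\leq\Phi(W_n)$, and passing to the limit gives $\Phi(Y)\leq\int_{\R} K_a(Y)\,\dd\mu(a)$, completing the sandwich.
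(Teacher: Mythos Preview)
Your lower-bound half and the continuity claim $\Phi(\eta Z)\to 0$ are correct, and the latter is a clean observation. The gap is in the upper bound, precisely at the ``final calibration'' you flag. The perturbation $\eta_n Z$ is too weak in the large-$a$ regime: convexity of $g(b)=\log\E{\ee^{bZ}}$ with $g(0)=0$ yields $g(a\eta_n)\leq\eta_n g(a)$, which is an \emph{upper} bound on $g(a\eta_n)/g(a)$, not the lower bound you hope super-linearity will supply. In fact that ratio can tend to $0$ as $a\to\infty$. For a concrete failure, take $Z=1+P$ with $P$ Poisson of parameter $1$ (so $Z\geq 1$, $Z$ unbounded, $Z\in L_M$, and $K_a(Z)=1+(\ee^a-1)/a\sim\ee^a/a$), and let $Y=Z/2$. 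Then $K_a(Y)=\tfrac12 K_{a/2}(Z)\sim\ee^{a/2}/a$, so $K_a(Y)/K_a(Z)\to 0$ and the hypothesis holds. But for any $\eta_n<1/2$ one has $K_a(\eta_n Z)=\eta_n K_{a\eta_n}(Z)\sim\ee^{a\eta_n}/a\ll\ee^{a/2}/a\sim K_a(Y)$ as $a\to\infty$; since $K_a(Y_n)\leq n$ is bounded, the gap $K_a(Y)-K_a(Y_n)$ also grows like $\ee^{a/2}/a$, and the required inequality $K_a(W_n)\geq K_a(Y)$ fails for all sufficiently large $a$, regardless of $A_n$. Once $\eta_n$ drops below the threshold $1/2$ the sandwich cannot close, so you cannot send $\eta_n\to 0$.

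The paper sidesteps this entirely. Rather than perturb $Y$, it \emph{normalizes by $K_a(Z)$}: on the class $L_M^Z$ of $X\in L_M$ bounded below with $\lim_{a\to\infty}K_a(X)/K_a(Z)$ finite, the ratio $K_{X\mid Z}(a)=K_a(X)/K_a(Z)$ extends to a continuous function on the compact set $\overline{\R}$. The functional $F(K_{X\mid Z})=\Phi(X)/\Phi(Z)$ is additive, monotone (via Lemma~\ref{lemma:monotone-unbounded}), and $1$-Lipschitz, hence extends to a positive linear functional on $\mathcal{C}(\overline{\R})$; Riesz produces a probability measure $\mu_Z$. The transformed measure $\dd\hat\mu_Z=\bigl(\Phi(Z)/K_a(Z)\bigr)\,\dd\mu_Z$ then represents $\Phi$ on all bounded random variables, so by the uniqueness lemma it coincides with $\mu$, and evaluating at $X=Y$ gives the claim. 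Compactifying via the ratio $K_a(X)/K_a(Z)$ eliminates the need to dominate $K_a(Y)$ by any explicit perturbation at $a\to\infty$, which is exactly where your approach breaks.
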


\begin{proof}
Given the assumptions, $K_a(Z) \geq 1$ for all $a \in \R$, with $\lim_{a \to \infty} K_a(Z) = \infty$. Let $L_M^Z$ be the collection of random variables $X \in L_M$ such that $X$ is bounded from below, and $\lim_{a \to \infty} \frac{K_a(X)}{K_a(Z)}$ exists and is finite. $L_M^Z$ includes all bounded $X$ (in which case $\lim_{a \to \infty} \frac{K_a(X)}{K_a(Z)} = 0$), as well as $Y$ and $Z$ itself. $L_M^Z$ is also closed under adding independent random variables. 

Now, for each $X \in L_M^Z$, we can define 
\[
K_{X \mid Z}(a) = \frac{K_a(X)}{K_a(Z)},
\]
which reduces to our previous definition of $K_X(a)$ when $Z$ is the constant $1$. This function $K_{X \mid Z}(a)$ extends by continuity to $a = -\infty$, where its value is $\frac{\min[X]}{\min[Z]}$, as well as to $a = \infty$ by definition of $L_M^Z$. Thus $K_{X \mid Z}(\cdot)$ is a continuous function on $\R$. 

Since $\Phi$ induces an additive statistic when restricted to $L_M^Z$, and $K_{X \mid Z} + K_{Y \mid Z} = K_{X+Y \mid Z}$, we have an additive functional $F$ defined on $\mathcal{L} = \{ K_{X \mid Z} : X \in L_M^Z \}$, given by  
\[
F(K_{X \mid Z}) = \frac{\Phi(X)}{\Phi(Z)}. 
\]
Because $Z \geq 1$ implies $\Phi(Z) \geq 1$, $F$ is well-defined, and $F(1) = 1$. By Lemma~\ref{lemma:monotone-unbounded}, $F$ is also monotone in the sense that $K_{X \mid Z}(a) \geq K_{Y \mid Z}(a)$ for each $a \in \R$ implies $F(K_{X \mid Z}) \geq F(K_{Y \mid Z})$. 

Likewise we can show $F$ is 1-Lipschitz. Note that $K_{X \mid Z}(a) \leq K_{Y \mid Z}(a) + \frac{m}{n}$ is equivalent to $K_a(X) \leq K_a(Y) + \frac{m}{n} K_a(Z)$ and equivalent to $K_a(X^{*n}) \leq K_a(Y^{*n} + Z^{*m})$, where we use the notation $X^{*n}$ to denote the sum of $n$ i.i.d.\ copies of $X$. If this holds for all $a$, then by Lemma~\ref{lemma:monotone-unbounded} we also have $\Phi(X^{*n}) \leq \Phi(Y^{*n} + Z^{*m})$, and thus $\Phi(X) \leq \Phi(Y) + \frac{m}{n} \Phi(Z)$ by additivity. An approximation argument shows that for any \emph{real} number $\eps > 0$, $K_{X \mid Z}(a) \leq K_{Y \mid Z}(a) + \eps$ for all $a$ implies $\Phi(X) \leq \Phi(Y) + \eps \Phi(Z)$. Thus the functional $F$ is 1-Lipschitz.

Given these properties, we can exactly follow the proof of Theorem~\ref{thm:main} to extend the functional $F$ to be a positive linear functional on the space of all continuous functions over $\overline{\R}$ (the majorization condition is again satisfied by constant functions, as $K_{Z \mid Z} = 1$). Therefore, by the Riesz Representation Theorem, we obtain a probability measure $\mu_Z$ on $\overline{\R}$ such that for all $X \in L_M^Z$,
\[
\frac{\Phi(X)}{\Phi(Z)} = \int_{\overline{\R}} \frac{K_a(X)}{K_a(Z)} \,\dd \mu_Z(a).
\]

In particular, for any $X$ bounded from below such that $\lim_{a \to \infty} \frac{K_a(X)}{K_a(Z)} = 0$, it holds that 
\[
\Phi(X) = \int_{[-\infty, \infty)} K_a(X) \cdot \frac{\Phi(Z)}{K_a(Z)} \,\dd \mu_Z(a),
\]
where we are able to exclude $\infty$ from the range of integration (this is useful below). 

If we define the measure $\hat{\mu}_Z$ by $\frac{\dd \hat{\mu}_Z}{\dd \mu_Z}(a) = \frac{\Phi(Z)}{K_a(Z)} \leq \Phi(Z)$, then since $K_a(X)$ is finite for $a < \infty$, we have
\[
\Phi(X) = \int_{[-\infty, \infty)} K_a(X) \,\dd \hat{\mu}_Z(a).
\]
This in particular holds for all bounded $X$, so plugging in $X = 1$ gives that $\hat{\mu}_Z$ is a probability measure. But now we have two probability measures $\mu$ and $\hat{\mu}_Z$ on $\overline{\R}$ that lead to the same integral representation for bounded random variables, so Lemma~\ref{lemma:unique} implies that $\hat{\mu}_Z$ coincides with $\mu$ and is supported on the standard real line. Plugging in $X = Y$ in the above display then yields the desired result. 
\end{proof}

The next lemma further extends the representation:

\begin{lemma}
For every $X \in L_M$ that is bounded from below, 
\[
\Phi(X) = \int_{(-\infty,\infty)} K_a(X) \,\dd \mu(a).
\]
\end{lemma}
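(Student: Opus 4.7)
The plan is to reduce the lemma to the previous one by producing, for each $X \in L_M$ bounded from below, a companion random variable $Z$ that satisfies the hypotheses required there. First, if $X$ is also bounded from above, the result is immediate from Theorem~\ref{thm:main} together with the fact, established earlier in the proof of this case of Theorem~\ref{thm:domains}, that the measure $\mu$ assigns no mass to $\{\pm\infty\}$. So henceforth I may assume $X$ is unbounded above, and by translating I may further assume $X \geq 1$. The remaining task is to exhibit $Z \in L_M$ with $Z \geq 1$, $Z$ unbounded above, and $\lim_{a \to \infty} K_a(X)/K_a(Z) = 0$; the representation for $X$ then follows directly from the previous lemma applied with $Y = X$.

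For the construction, my proposal is to take a random sum of independent copies of $X$ with a super-exponentially light random length. Let $(X_i)_{i \geq 1}$ be i.i.d.\ copies of $X$, let $N$ be a positive integer-valued random variable independent of the $X_i$ with $\Pr{N = k} \propto \ee^{-k^2}$ for $k \geq 1$, and set $Z = X_1 + \cdots + X_N$. Then $Z \geq N \geq 1$, and $Z$ is unbounded above (e.g.\ conditional on $N = 1$). Writing $\varphi_X(a) = \log \E{\ee^{aX}}$ and conditioning on $N$ gives $\E{\ee^{aZ}} = M_N(\varphi_X(a))$, where $M_N$ is the MGF of $N$. Since $\varphi_X(a)$ is finite for every $a \in \R$ (because $X \in L_M$) and $M_N(b)$ is finite for every $b \in \R$ (because the tail $\ee^{-k^2}$ crushes any exponential weight), we get $Z \in L_M$.

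To verify the crucial ratio condition, note that $X$ being unbounded above forces $\varphi_X(a) \to \infty$ as $a \to \infty$. On the other hand, the trivial lower bound $M_N(b) \geq \Pr{N = k} \ee^{bk}$ with $k \approx b/2$ gives $\log M_N(b) \gtrsim b^2/4$ as $b \to \infty$. Substituting $b = \varphi_X(a)$,
\[
\frac{K_a(Z)}{K_a(X)} \;=\; \frac{\varphi_Z(a)}{\varphi_X(a)} \;=\; \frac{\log M_N(\varphi_X(a))}{\varphi_X(a)} \;\gtrsim\; \frac{\varphi_X(a)}{4} \;\longrightarrow\; \infty,
\]
so $K_a(X)/K_a(Z) \to 0$ and the hypotheses of the previous lemma hold. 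Applying that lemma with $Y = X$ yields $\Phi(X) = \int_{(-\infty,\infty)} K_a(X)\,\dd\mu(a)$, as required.

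The only real obstacle is the balancing act in the choice of $N$: its tail must be heavy enough that $\log M_N(b)/b \to \infty$ (so that $K_a(Z)$ dominates $K_a(X)$ no matter how fast the latter grows within $L_M$), yet light enough that $N \in L_M$ with $M_N$ finite at every real argument, since $\varphi_X(a)$ can grow arbitrarily fast for $X \in L_M$ (e.g., Poisson). The super-Gaussian choice $\Pr{N = k} \propto \ee^{-k^2}$ accommodates both requirements simultaneously, so once this construction is in hand the remainder of the argument is routine bookkeeping.
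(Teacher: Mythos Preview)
Your proof is correct and takes a genuinely different route from the paper's. Both proofs reduce to constructing, for a given $X \in L_M$ bounded below and unbounded above, a companion $Z \in L_M$ with $Z \geq 1$, $Z$ unbounded above, and $K_a(X)/K_a(Z) \to 0$ as $a \to \infty$; the difference is in how $Z$ is built. The paper tailors $Z$ to the specific growth of $\varphi_X$: it defines $a_n$ by $\E{\ee^{a_nX}} = \ee^n$, picks $b_n > n$ with $a_nb_n > 2n^2$, and lets $Z$ take the value $b_n$ with probability $\ee^{-a_nb_n/2}$ (and $1$ otherwise), then checks membership in $L_M$ and the ratio condition by hand. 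Your construction is more structural: the random-sum identity $\E{\ee^{aZ}} = M_N(\varphi_X(a))$ factors the problem cleanly, so that membership in $L_M$ reduces to $M_N$ being everywhere finite, and the ratio condition reduces to $\log M_N(b)/b \to \infty$. The super-Gaussian choice $\Pr{N=k}\propto\ee^{-k^2}$ delivers both simultaneously, and crucially the law of $N$ does not depend on $X$. Your argument is arguably more transparent and requires less case-specific bookkeeping; the paper's explicit construction, on the other hand, avoids introducing an auxiliary random index and keeps $Z$ a simple discrete variable.
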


\begin{proof}
It suffices to consider $X$ that is unbounded from above. Moreover, without loss we can assume $X \geq 0$,, since we can add any constant to $X$. Given the previous lemma, we just need to construct $Z \geq 1$ such that $\lim_{a \to \infty} \frac{K_a(X)}{K_a(Z)} = 0$. Note that $\E{\ee^{aX}}$ strictly increases in $a$ for $a \geq 0$. This means we can uniquely define a sequence $a_1 < a_2 < \cdots $ by the equation $\E{\ee^{a_nX}} = \ee^n$. This sequence diverges as $n \to \infty$. We then choose any increasing sequence $b_n$ such that $b_n > n$ and $a_n b_n > 2n^2$. 

Consider the random variable $Z$ that is equal to $b_n$ with probability $\ee^{-\frac{a_n b_n}{2}}$ for each $n$, and equal to $1$ with remaining probability. To see that $Z \in L_M$, we have
\[
\E{\ee^{aZ}} \leq \ee^{a} + \sum_{n = 1}^{\infty} \ee^{-\frac{a_n b_n}{2}} \cdot \ee^{a b_n} = \ee^{a} + \sum_{n = 1}^{\infty}\ee^{(a-\frac{a_n}{2}) \cdot b_n}.
\]
For any fixed $a$, $\frac{a_n}{2}$ is eventually greater than $a + 1$. This, together with the fact that $b_n > n$, implies the above sum converges. 

Moreover, for any $a \in [a_n, a_{n+1})$, we have
\[
\E{\ee^{aZ}} \geq \E{\ee^{a_nZ}} \geq \mathbb{P}[Z = b_n] \cdot \ee^{a_n b_n} \geq \ee^{\frac{a_n b_n}{2}} > \ee^{n^2},
\]
whereas $\E{\ee^{aX}} \leq \E{\ee^{a_{n+1}X}} \leq \ee^{n+1}$. Thus 
\[
\frac{K_a(X)}{K_a(Z)} = \frac{\log \E{\ee^{aX}}}{\log \E{\ee^{aZ}}} \leq \frac{n+1}{n^2},
\]
which converges to zero as $a$ (and thus $n$) approaches infinity. 
\end{proof}

\subsection{Step 5: Wrapping Up}

By a symmetric argument, the representation $\Phi(X) = \int_{(-\infty, \infty)} K_a(X) \,\dd \mu(a)$ also holds for all $X$ bounded from above. In the remainder of the proof, we will use an approximation argument to generalize this to all $X \in L_M$. We first show a technical lemma:

\begin{lemma}
The measure $\mu$ is supported on a compact interval of $\R$. 
\end{lemma}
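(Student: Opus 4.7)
I will argue by contradiction: suppose $\mu$ is not compactly supported, so without loss of generality $\sup(\operatorname{supp}\mu) = \infty$ (the opposite case is handled symmetrically). The goal is to construct a non-negative $X \in L_M$ for which $\int K_a(X)\,\dd\mu(a) = \infty$. Since such an $X$ is bounded from below, the previous lemma forces $\Phi(X) = \int_{(-\infty,\infty)} K_a(X)\,\dd\mu(a)$, and as $\Phi(X) \in \R$ this integral must be finite, yielding the desired contradiction.

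The first step is to extract disjoint unit intervals $I_n = [\alpha_n, \alpha_n+1]$ with $\alpha_n \uparrow \infty$ and $\mu_n := \mu(I_n) > 0$; these exist because $\operatorname{supp}\mu$ is unbounded above, so one can choose support points escaping to $\infty$ and take unit neighborhoods. Since $a \mapsto K_a(X)$ is nondecreasing on $\R$ (by Lemma~\ref{lemma:L}), one has $\int_{I_n} K_a(X)\,\dd\mu(a) \geq \mu_n K_{\alpha_n}(X)$, so it suffices to produce $X \in L_M$ with $\mu_n K_{\alpha_n}(X) \geq n$ for every $n$.

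The construction: set $v_n = 2n/\mu_n$ and $p_n = \ee^{-\alpha_n v_n/2}$, and let $X$ put mass $p_n$ at the point $v_n$ for each $n$, with the remaining probability on $0$ (the $p_n$ are summable and well below $1$, as shown momentarily). For the lower bound, the single atom at $v_n$ gives $\E{\ee^{\alpha_n X}} \geq p_n \ee^{\alpha_n v_n} = \ee^{\alpha_n v_n/2}$, hence $K_{\alpha_n}(X) \geq v_n/2 = n/\mu_n$, so $\mu_n K_{\alpha_n}(X) \geq n$ and the integral diverges. For membership in $L_M$, at any fixed $a \in \R$ one has
\[
\E{\ee^{aX}} \;\leq\; 1 + \sum_n p_n \ee^{a v_n} \;=\; 1 + \sum_n \ee^{v_n(a - \alpha_n/2)},
\]
and since $\alpha_n/2 \to \infty$, eventually $\alpha_n/2 - a \geq 1$ so that each term is bounded by $\ee^{-v_n}$; as $v_n \to \infty$ rapidly, the sum converges.

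The main obstacle is the tradeoff between making the atom values $v_n$ large enough to force $K_{\alpha_n}(X) \geq n/\mu_n$, and making the weights $p_n$ small enough that \emph{every} exponential moment of $X$ remains finite. The choice $p_n = \ee^{-\alpha_n v_n/2}$ threads this needle by exploiting $\alpha_n \to \infty$: the decay in $n$ of $p_n \ee^{a v_n}$ is governed by the exponent $v_n(a-\alpha_n/2)$, which eventually becomes very negative for any fixed $a$, while only costing a factor of $2$ in the lower bound on $K_{\alpha_n}(X)$. Once this case is ruled out, the analogous construction---applied to a non-positive $X \in L_M$ and the representation for random variables bounded from above---precludes $\inf(\operatorname{supp}\mu) = -\infty$, so $\mu$ is supported on a compact subset of $\R$ as claimed.
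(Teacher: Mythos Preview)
Your proof is correct and follows essentially the same strategy as the paper: construct a non-negative $X \in L_M$ with atoms at carefully chosen heights and exponentially small weights so that $\int K_a(X)\,\dd\mu(a) = \infty$, contradicting the representation lemma already established for random variables bounded from below. The only cosmetic difference is that you work with disjoint unit intervals $I_n$ of positive $\mu$-mass and show $\int_{I_n} K_a(X)\,\dd\mu(a) \geq n$, whereas the paper uses nested tail sets $[a_n,\infty)$ with $\mu([a_n,\infty)) \geq 1/n$, places atoms at height $n$ with probability $\ee^{-a_n n/2}$, and bounds each tail integral below by $1/2$.
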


\begin{proof}
Suppose not, and without loss assume the support of $\mu$ is unbounded from above. We will construct a non-negative $Y \in L_M$ such that $\Phi(Y) = \infty$ according to the integral representation. Indeed, by assumption we can find a sequence $2 < a_1 < a_2 < \cdots $ such that $a_n \to \infty$ and $\mu([a_n, \infty)) \geq \frac{1}{n}$ for all large $n$. Let $Y$ be the random variable that equals $n$ with probability $\ee^{-\frac{a_n \cdot n}{2}}$ for each $n$, and equals $0$ with remaining probability. Then similar to the above, we can show $Y \in L_M$. Moreover, $\E{\ee^{a_n Y}} \geq \ee^{\frac{a_n \cdot n}{2}}$, implying that $K_{a_n}(Y) \geq \frac{n}{2}$. Since $K_a(Y)$ is increasing in $a$, we deduce that for each $n$, 
\[
\int_{[a_n, \infty)} K_a(Y) \,\dd \mu(a) \geq K_{a_n}(Y) \cdot \mu([a_n, \infty)) \geq \frac{n}{2} \cdot \frac{1}{n} = \frac{1}{2}.
\]
The fact that this holds for $a_n \to \infty$ contradicts the assumption that $\Phi(Y) = \int_{(-\infty, \infty)} K_a(Y) \,\dd \mu(a)$ is finite. 
\end{proof}

Thus we can take $N$ sufficiently large so that $\mu$ is supported on $[-N, N]$. To finish the proof, consider any $X \in L_M$ that may be unbounded on both sides. For each positive integer $n$, let $X_n = \min\{X, n\}$ denote the truncation of $X$ at $n$. Since $X \geq_1 X_n$, we have
\[
\Phi(X) \geq \Phi(X_n) = \int_{[-N, N]} K_a(X_n) \,\dd \mu(a)
\]
Observe that for each $a \in [-N, N]$, $K_a(X_n)$ converges to $K_a(X)$ as $n \to \infty$. Moreover, the fact that $K_a(X_n)$ increases both in $n$ and in $a$ implies that for all $a$ and all $n$,
\[
\vert K_a(X_n) \vert \leq \max\{\vert K_a(X_1) \vert, \vert K_a(X) \vert\} \leq \max\{\vert K_{-N}(X_1) \vert, \vert K_{N}(X_1) \vert,  \vert K_{-N}(X) \vert, \vert K_{N}(X) \vert\}.
\]
As $K_a(X_n)$ is uniformly bounded, we can apply the Dominated Convergence Theorem to deduce
\[
\Phi(X) \geq \lim_{n \to \infty}\int_{[-N, N]} K_a(X_n) \,\dd \mu(a) = \int_{[-N, N]} K_a(X) \,\dd \mu(a).
\]
On the other hand, if we truncate the left tail and consider $X^{-n} = \max\{X, -n\}$, then a symmetric argument shows
\[
\Phi(X) \leq \lim_{n \to \infty}\int_{[-N, N]} K_a(X^{-n}) \,\dd \mu(a) = \int_{[-N, N]} K_a(X) \,\dd \mu(a).
\]
Therefore for all $X \in L_M$ it holds that 
\[
\Phi(X) = \int_{[-N, N]} K_a(X) \,\dd \mu(a).
\]
This completes the entire proof of Theorem~\ref{thm:domains}. 

\section{Omitted Proofs for Section \ref{sec:monetary}}


\subsection{Proof of Proposition~\ref{prop:risk-aversion}}

The result can be derived as a corollary of Proposition~\ref{prop:comparative-risk-aversion} which we prove below, but we also provide a direct proof here. We focus on the ``only if'' direction because the ``if'' direction follows immediately from the monotonicity of $K_a(X)$ in $a$. Suppose $\mu$ is not supported on $[-\infty, 0]$, we will show that the resulting monotone additive statistic $\Phi$ does not always exhibit risk aversion. Since $\mu$ has positive mass on $(0, \infty]$, we can find $\eps > 0$ such that $\mu$ assigns mass at least $\eps$ to $(\eps, \infty]$. Now consider a gamble $X$ which is equal to $0$ with probability $\frac{n-1}{n}$ and equal to $n$ with probability $\frac{1}{n}$, for some large positive integer $n$. Then $\E{X} = 1$ and $K_a(X) \geq \min[X] = 0$ for every $a \in \overline{\R}$. Moreover, for $a \geq \varepsilon$ we have 
\[
K_a(X) \geq K_{\eps}(X) = \frac{1}{\eps} \log\left(\frac{n-1}{n} + \frac{1}{n} \ee^{\eps n}\right) \geq \frac{n}{2}
\]
whenever $n$ is sufficient large. Thus 
\[
\Phi(X) = \int_{\overline{\R}} K_a(X) \,\dd\mu(a) \geq \int_{[\eps, \infty]} K_a(X) \,\dd\mu(a) \geq \frac{n}{2}\eps.
\]
We thus have $\Phi(X) > 1 = \E{X}$ for all large $n$, showing that the preference represented by $\Phi$ sometimes exhibits risk seeking. 

Symmetrically, if $\mu$ is not supported on $[0, \infty]$, then $\Phi$ must sometimes exhibit risk aversion (by considering $X$ equal to $0$ with probability $\frac{1}{n}$ and equal to $n$ with probability $\frac{n-1}{n})$. This completes the proof.

\subsection{Proof of Proposition~\ref{prop:comparative-risk-aversion}}

We first show that conditions (i) and (ii) are necessary for $\int_{\overline{\R}}K_a(X)\,\dd\mu_1(a) \leq \int_{\overline{\R}}K_a(Y)\,\dd\mu_2(a)$ to hold for every $X$. This part of the argument closely follows the proof of Lemma~\ref{lemma:unique}. Specifically, by considering the same random variables $X_{n,b}$ as defined there, we have the key equation \eqref{eq:comparison1}. Since the limit on the left-hand side is smaller for $\mu_1$ than for $\mu_2$, we conclude that for every $b > 0$, $\int_{[b,\infty]}\frac{a-b}{a}\,\dd\mu_1(a)$ on the right-hand side must be smaller than the corresponding integral for $\mu_2$. Thus condition (i) holds, and an analogous argument shows condition (ii) also holds. 

To complete the proof, it remains to show that when conditions (i) and (ii) are satisfied, 
\[
\int_{\overline{\R}}K_a(X)\,\dd\mu_1(a) \leq \int_{\overline{\R}}K_a(X)\,\dd\mu_2(a)
\]
holds for every $X$. Since $\mu_1$ and $\mu_2$ are both probability measures, we can subtract $\E{X}$ from both sides and arrive at the equivalent inequality 
\begin{equation}\label{eq:comparison2}
\int_{\overline{\R}_{\neq 0}}(K_a(X) - \E{X})\,\dd\mu_1(a) \leq \int_{\overline{\R}_{\neq 0}}(K_a(X)-\E{X})\,\dd\mu_2(a).
\end{equation}
Note that we can exclude $a = 0$ from the range of integration because $K_a(X) = \E{X}$ there. Below we show that condition (i) implies 
\begin{equation}\label{eq:comparison3}
\int_{(0,\infty]}(K_a(X) - \E{X})\,\dd\mu_1(a) \leq \int_{(0,\infty]}(K_a(X)-\E{X})\,\dd\mu_2(a).
\end{equation}
Similarly, condition (ii) gives the same inequality when the range of integration is $[-\infty, 0)$. Adding these two inequalities would yield the desired comparison in \eqref{eq:comparison2}. 

To prove \eqref{eq:comparison3}, we let $L_X(a) = a \cdot K_a(X) = \log \E{\ee^{aX}}$ be the cumulant generating function of $X$. It is well known that $L_X(a)$ is convex in $a$, with $L_X'(0) = \E{X}$ and $\lim_{a \to \infty} L_X'(a) = \max[X]$. Then the integral on the left-hand side of \eqref{eq:comparison3} can be calculated as follows:
\begin{align*} 
    \int_{(0,\infty]}(K_a(X) - \E{X})\,\dd\mu_1(a) &= \int_{(0,\infty)}(K_a(X) - \E{X})\,\dd\mu_1(a) + (\max[X]-\E{X})\cdot \mu_1(\{\infty\}) \\
    &=\int_{(0,\infty)}(L_X(a) - a\E{X})\,\dd \frac{\mu_1(a)}{a} + (\max[X]-\E{X})\cdot \mu_1(\{\infty\})
\end{align*}
Note that since the function $g(a) = L_X(a) - a\E{X}$ satisfies $g(0) = g'(0) = 0$, it can be written as 
\[
g(a) = \int_{0}^{a} g'(t) \,\dd t = \int_{0}^{a} \int_{0}^{t} g''(b) \,\dd b \,\dd t = \int_{0}^{a} g''(b) \cdot (a-b) \,\dd b. 
\]
Plugging back to the previous identity, we obtain 
\begin{align*} 
    &\quad \int_{(0,\infty]}(K_a(X) - \E{X})\,\dd\mu_1(a) \\ &=\int_{(0,\infty)}\int_{0}^{a} L_X''(b) \cdot (a-b) \,\dd b\,\dd \frac{\mu_1(a)}{a} + (\max[X]-\E{X})\cdot \mu_1(\{\infty\}) \\
    &=\int_{0}^{\infty} L_X''(b) \int_{[b,\infty)} (a-b) \,\dd \frac{\mu_1(a)}{a} \,\dd b + (L_X'(\infty)-L_X'(0))\cdot \mu_1(\{\infty\}) \\
    &=\int_{0}^{\infty} L_X''(b) \int_{[b,\infty)} \frac{a-b}{a} \,\dd \mu_1(a) \,\dd b + \int_{0}^{\infty} L_X''(b)\cdot \mu_1(\{\infty\}) \,\dd b \\
    &=\int_{0}^{\infty} L_X''(b) \int_{[b,\infty]} \frac{a-b}{a} \,\dd \mu_1(a) \,\dd b,
\end{align*}
where the last step uses $\frac{a-b}{a} = 1$ when $a = \infty > b$. 

The above identity also holds when $\mu_1$ is replaced by $\mu_2$. We then see that \eqref{eq:comparison3} follows from condition (i) and $L_X''(b) \geq 0$ for all $b$. This completes the proof. 

\subsection{Proof of Theorem~\ref{thm:R-W gambles}}

The ``if'' direction is straightforward: if $\succeq_1$ and $\succeq_2$ are both represented by a monotone additive statistic $\Phi$, then they satisfy responsiveness and continuity. In addition, combined choices are not stochastically dominated because if $X \succ_1 X'$ and $Y \succ_2 Y'$ then $\Phi(X) > \Phi(X')$ and $\Phi(Y) > \Phi(Y')$. Thus $\Phi(X+Y) > \Phi(X'+Y')$ and $X' + Y'$ cannot stochastically dominate $X + Y$.

Turning to the ``only if'' direction, we suppose $\succeq_1$ and $\succeq_2$ satisfy the axioms. We first show that these preferences are the same. Suppose for the sake of contradiction that $X \succeq_1 Y$ but $Y \succ_2 X$ for some $X, Y$. Then by continuity, there exists $\eps > 0$ such that $Y \succ_2 X + \eps$. By responsiveness, we also have $X \succeq_1 Y \succ Y - \frac{\eps}{2}$. Thus $X \succ_1 Y - \frac{\eps}{2}$, $Y \succ_2 X + \eps$, but $X + Y$ is strictly stochastically dominated by $Y - \frac{\eps}{2} + X + \eps = X + Y + \frac{\eps}{2}$, contradicting Axiom \ref{ax:R-W}. 

Henceforth we denote both $\succeq_1$ and $\succeq_2$ by $\succeq$. We next show that for any $X$ and any $\eps > 0$, $\max[X]+\eps \succ X \succ \min[X]-\eps$. To see why, suppose for contradiction that $X$ is weakly preferred to $\max[X]+\eps$ (the other case can be handled similarly). Then we obtain a contradiction to Axiom \ref{ax:R-W} by observing that $X \succ \max[X] + \frac{\eps}{2}$, $\frac{\eps}{4} \succ 0$ but $X + \frac{\eps}{4} <_{1} \max[X] + \frac{\eps}{2} + 0$. 

Given these upper and lower bounds for $X$, we can define $\Phi(X) = \sup \{c \in \R:~ c \preceq X \}$, which is well-defined and finite. By definition of the supremum and responsiveness, for any $\eps > 0$ it holds that $\Phi(X) - \eps \prec X \prec \Phi(X) + \eps$. Thus by continuity, $\Phi(X) \sim X$ is the (unique) certainty equivalent of $X$. 

It remains to show that $\Phi$ is a monotone additive statistic. For this we show that $X \sim Y$ implies $X + Z \sim Y + Z$ for any independent $Z$. Suppose for contradiction that $X + Z \succ Y + Z$. Then by continuity we can find $\eps > 0$ such that $X + Z \succ Y + Z + \eps$. By responsiveness, it also holds that $Y + \frac{\eps}{2} \succ Y \sim X$. But the sum $(X + Z) + (Y + \frac{\eps}{2})$ is stochastically dominated by $(Y + Z + \eps) + X$, contradicting Axiom \ref{ax:R-W}. 

Therefore, from $X \sim \Phi(X)$ and $Y \sim \Phi(Y)$ we can apply the preceding result twice to obtain $X + Y \sim \Phi(X) + Y \sim \Phi(X) + \Phi(Y)$ whenever $X, Y$ are independent, so that $\Phi(X + Y) = \Phi(X) + \Phi(Y)$ is additive. Finally, we show $\Phi$ is monotone. Consider any $Y \geq_1 X$, and suppose for contradiction that $X \succ Y$. Then there exists $\eps > 0$ such that $X \succ Y + \eps$. This leads to a contradiction since $X \succ Y + \eps$, $\frac{\eps}{2} \succ 0$, but $X + \frac{\eps}{2}$ is stochastically dominated by $Y + \eps + 0$.

This completes the proof that both preferences $\succeq_1$ and $\succeq_2$ are represented by the same certainty equivalent $\Phi(X)$, which is a monotone additive statistic.

\section{Monotone Additive Statistics and the Independence Axiom}
\label{appx:independence}


In this appendix we discuss the classic independence axiom and what it implies for preferences represented by monotone additive statistics. 

\begin{axiom}[Independence]\label{ax:independence}
    For all $X, Y, Z$ and all $\lambda \in (0,1)$, $X \succeq Y$ implies $X_\lambda Z \succeq Y_\lambda Z$.
\end{axiom}

\begin{proposition}\label{prop:independence}
Suppose a preference $\succeq$ is represented by a monotone additive statistic $\Phi(X) = \int_{\overline{\R}} K_a(X)\,\dd\mu(a)$. Then $\succeq$ satisfies the independence axiom if and only if $\mu$ is a point mass at some $a \in \overline{\R}$.  
\end{proposition}

\begin{proof}
The ``if'' direction is relatively straightforward. If $a = 0$ then $\Phi(X) = \E{X}$. In this case $\E{X} \geq \E{Y}$ does imply 
\[
\E{X_{\lambda}Z} = \lambda \E{X} + (1-\lambda) \E{Z} \geq \lambda \E{Y} + (1-\lambda) \E{Z} = \E{Y_{\lambda}Z}.
\]
If $a > 0$ then $\Phi(X) \geq \Phi(Y)$ implies $\E{\ee^{aX}} \geq \E{\ee^{aY}}$ and thus 
\[
\lambda\E{\ee^{aX}} + (1-\lambda)\E{\ee^{aZ}} \geq \lambda\E{\ee^{aY}} + (1-\lambda)\E{\ee^{aZ}}, 
\]
so that $\Phi(X_{\lambda}Z) \geq \Phi(Y_{\lambda}Z)$. A similar argument applies to the case of $a < 0$. Finally it is easy to see that $\max[X] \geq \max[Y]$ implies $\max[X_{\lambda}Z] \geq \max[Y_{\lambda}Z]$ and the same holds for the minimum. So the above independence axiom holds for $a = \pm \infty$ as well.\footnote{Note however that $\Phi(X) = \max[X]$ or $\min[X]$ would violate a stronger form of independence that additionally requires $X \succ Y$ to imply $X_{\lambda}Z \succ Y_{\lambda}Z$ with strict preferences. This is related to the fact that these extreme monotone additive statistics do not satisfy mixture continuity.}

We turn to the ``only if'' direction of the result. By the independence axiom, whenever $c$ is a constant we have $X \succeq c$ implies $X_{\lambda}c \succeq c$ and $c \succeq X$ implies $c \succeq X_{\lambda}c$. Therefore $X \sim c$ implies $X_{\lambda}c \sim c$, which allows us to directly apply Lemma \ref{lemma:weak betweenness} from before. It remains to show that independence rules out $\Phi(X) = \beta K_{-a\beta}(X) + (1-\beta)K_{a(1-\beta)}(X)$ for some $\beta \in (0,1)$ and $a \in (0, \infty]$. 

Suppose $\Phi$ takes the above form. If $a = \infty$ then $\Phi(X) = \beta \min[X] + (1-\beta) \max[X]$ for some $\beta \in (0,1)$. To see that it violates independence, choose $X$ supported on $0$ and $\frac{1}{1-\beta}$, and $Y = 1$ so that $\Phi(X) = \Phi(Y)$. But with $Z$ being a sufficiently large constant we see that $X_{\lambda} Z$ has the same maximum as $Y_{\lambda} Z$, but a strictly smaller minimum. Hence $\Phi(X_{\lambda}Z) < \Phi(Y_{\lambda}Z)$, contradicting independence. 

If instead $a \in (0, \infty)$, then we can do a similar construction by choosing $X$ and $Y$ such that $\Phi(X) > \Phi(Y)$ but $K_{-a\beta}(X) < K_{-a\beta}(Y)$. For example, let $Y=1$, and let $X$ be supported on $\{0,k\}$, with $\Pr{X=k}=\frac{1}{k}$. Then
\begin{align*}
    K_b(X) = \frac{1}{b}\log\E{1-\frac{1}{k}+\frac{\ee^{bk}}{k}}.
\end{align*}
For $k$ tending to infinity, $K_b(X)$ tends to zero if $b<0$, and to infinity if $b>0$. Hence, for $k$ large enough, $X$ and $Y$ will have the desired property.

Now let $Z=n$ where $n$ is a large positive integer. Then
\begin{align*}
    K_b(Y_\lambda n) &= \frac{1}{b}\log\E{\lambda\E{\ee^{bY}} +(1-\lambda)\ee^{bn}}\\
    K_b(X_\lambda n) &= \frac{1}{b}\log\E{\lambda\E{\ee^{bX}} +(1-\lambda)\ee^{bn}}
\end{align*}
and so
\begin{align*}
    K_b(Y_\lambda n) - K_b(X_\lambda n) 
    =\frac{1}{b}\log\left(\frac{\lambda \E{\ee^{bY}}+(1-\lambda)\ee^{bn}}{\lambda \E{\ee^{bX}}+(1-\lambda)\ee^{bn}}\right).
\end{align*}
It easily follows that for fixed $\lambda \in (0,1)$ and $b$, 
\begin{align*}
    \lim_{n \to \infty} K_b(Y_\lambda n) - K_b(X_\lambda n) &=0 \,\,\, \text{if} \,\, b>0; \\
    \lim_{n \to \infty} K_b(Y_\lambda n) - K_b(X_\lambda n) &= K_b(Y) - K_b(X) \,\,\, \text{if} \,\, b<0. 
\end{align*}
Thus, as $n$ tends to infinity,
\begin{align*}
    \lefteqn{\lim_n \Phi(Y_\lambda n) - \Phi(X_\lambda n)} \\
    &=  \lim_n \beta \left[K_{-a\beta}(Y_\lambda n)-K_{-a\beta}(X_\lambda n)\right]+(1-\beta)\left[K_{a(1-\beta)}(Y_\lambda n)-K_{a(1-\beta)}(X_\lambda n)\right]\\
    &=  \beta \left[K_{-a\beta}(Y_\lambda n)-K_{-a\beta}(X_\lambda n)\right] > 0.
\end{align*}
Therefore, for $n$ large enough, we have found $X$ and $Y$ such that $\Phi(X) > \Phi(Y)$ but $\Phi(X_\lambda n) < \Phi(Y_\lambda n)$. This implies $X \succ Y$ but $X_\lambda n \prec Y_\lambda n$, which contradicts the independence axiom and completes the proof of Proposition~\ref{prop:independence}.
\end{proof}

\subsection{Proof of Proposition~\ref{prop:time-lotteries-strong}}

We now prove Proposition~\ref{prop:time-lotteries-strong} as a corollary of Proposition \ref{prop:independence}. The first observation is that under time invariance, strong stochastic dynamic consistency is equivalent to the following property of the preference $\succeq$: 

\begin{axiom}[Strong Stochastic Stationarity]\label{ax:strong stationarity in time}
    For every pair of time lotteries $(x,T)$,  $(y,S)$ and every $D \in L^{\infty}_{+}$ not necessarily independent, if $(x,T_d) \succeq (y,S_d)$ for almost every realization $d$ of $D$, then $(x,T+D) \succeq (y,S+D)$.
\end{axiom}

Indeed, suppose strong stochastic dynamic consistency is satisfied, and $(x,T_d) \succeq (y,S_d)$ holds for almost every realization $d$ of $D$. Then by time invariance $(x,T_d) \succeq_{t+d} (y,S_d)$ also holds for almost every $d$. Strong stochastic dynamic consistency thus implies $(x,T+D) \succeq_t (y,S+D)$ and therefore strong stochastic stationarity. A similar argument shows that conversely, strong stochastic stationarity also implies strong stochastic dynamic consistency. 

For the ``only if'' direction of Proposition~\ref{prop:time-lotteries-strong}, suppose that $\succeq$ is an MSTP that satisfies strong stochastic stationarity. Let $\succeq_{*}$ denote the preference over random times induced by $\succeq$ when fixing the payoff. That is, $T \succeq_{*} S$ if and only if $(x, T) \succeq (x, S)$ for any and every $x > 0$. 

Fix any $X \succeq_{*} Y$ and any $Z \in L^\infty_{+}$, which can be considered as random times. For a given $\lambda \in (0,1)$, choose $D$ to be a random variable that is equal to either 0 or 1, with probability $\lambda$ and $1-\lambda$, respectively. Let $\tilde X$ be a random variable that conditioned on $D=0$ has the same distribution as $X+1$, and conditioned on $D=1$ has the same distribution as $Z$. Likewise, let $\tilde Y$ be a random variable that conditioned on $D=0$ has the same distribution as $Y+1$, and conditioned on $D=1$ has the same distribution as $Z$. 

By construction $\tilde X_D \succeq_{*} \tilde Y_D$ for every possible value of $D$, so by strong stochastic stationarity $\tilde X+D \succeq_{*} \tilde Y+D$ must hold. But $\tilde X+D$ has the same distribution as $(X_\lambda Z)+1$ while $\tilde Y+D$ has the same distribution as $(Y_\lambda Z)+1$, so $(X_\lambda Z)+1 \succeq_{*} (Y_\lambda Z)+1$. Since this is an MSTP, we deduce $X_\lambda Z \succeq_{*} Y_\lambda Z$ as the independence axiom requires. 

Note that even though $\succeq_{*}$ and the associated monotone additive statistic $\Phi$ are defined only for non-negative bounded random variables, it can be extended to all of $L^{\infty}$ as shown in the proof of Proposition~\ref{prop:non-negative}. Given additivity, it is easy to see that the extension preserves independence. So we can assume $\succeq_{*}$ and $\Phi$ satisfy independence on $L^{\infty}$. This allows us to apply Proposition~\ref{prop:independence} and deduce that $\Phi$ must have a point-mass mixing measure $\mu$, which proves the ``only if'' direction of Proposition~\ref{prop:time-lotteries-strong}. 

As for the ``if'' direction, we need to verify that an MSTP represented by $V(x,T) = u(x) \cdot \ee^{-r K_a(T)}$ does satisfy strong stochastic stationarity. First consider $a = 0$, in which case the representation simplifies to $u(x) \cdot \ee^{-\E{T}}$ with the normalization $r=1$. If $(x, T_d) \succeq (y, S_d)$ for almost every $d$, then $u(x) \cdot \ee^{-\E{T_d}} \geq u(y) \cdot \ee^{-\E{S_d}}$, which can be rewritten as $\E{S_d}-\E{T_d} \geq \log \left(u(y)/u(x)\right)$. Averaging across different realizations $d$, this implies $\E{S} - \E{T} \geq \log \left(u(y)/u(x)\right)$, and thus $\E{S+D} - \E{T+D} \geq \log \left(u(y)/u(x)\right)$. 
After rearranging, this yields $u(x) \cdot \ee^{-\E{T+D}} \geq u(y) \cdot \ee^{-\E{S+D}}$. So $(x, T+D) \succeq (y, S+D)$ as demanded by strong stochastic stationarity. 

Next consider $a > 0$. In this case we normalize $r = a$ and adjust $u$ accordingly, to arrive at an equivalent representation $V(x,T) = u(x) / \E{\ee^{aT}}$. From $(x, T_d) \succeq (y, S_d)$ we obtain $u(x) \cdot \E{\ee^{aS_d}} \geq u(y) \cdot \E{\ee^{aT_d}}$ and thus
\[
u(x) \cdot \E{\ee^{a(S_d+d)}} \geq u(y) \cdot \E{\ee^{a(T_d+d)}}.
\]
Averaging across different realizations $d$ then yields $u(x) \cdot \E{\ee^{a(S+D)}} \geq u(y) \cdot \E{\ee^{a(T+D)}}$, which after rearranging gives the desired conclusion $V(x, T+D) \geq V(y, S+D)$. 

If instead $a<0$, then we normalize $r=-a$ and recover the usual EDU representation $V(x,T) = u(x) \cdot \E{\ee^{aT}}$. Essentially the same argument as above applies to this case. 

Finally consider $a = \infty$, so that $V(x,T) = u(x) \cdot \ee^{-\max[T]}$ after normalizing $r = 1$. In this case $(x, T_d) \succeq (y, S_d)$ implies $\max[S_d]-\max[T_d] \geq \log \left(u(y)/u(x)\right)$, and thus 
\[
\max[S_d+d] - \max[T_d+d] \geq \log\left(u(y)/u(x)\right).
\]
Let $\alpha = \max[S+D]$ and $c = \log\left(u(y)/u(x)\right)$ be constants. Then the above implies that for almost every realization $d$ of $D$, $T_d + d \leq \alpha-c$. Thus $T+D \leq \alpha-c$ almost surely, which gives $\max[S+D] - \max[T+D] \geq c$. This implies $V(x,T+D) \geq V(y,S+D)$ as desired. 

A similar argument applies to the case of $a = -\infty$, completing the proof. 


\end{document}